\documentclass[11pt,fleqn]{article}
\RequirePackage[T1]{fontenc}
\RequirePackage{amsthm,amsmath}
\RequirePackage[round]{natbib}

\usepackage{float}
\usepackage[english]{babel}
\RequirePackage[colorlinks,citecolor=blue,urlcolor=blue]{hyperref}
\usepackage{enumerate}
\usepackage{amsmath,amsfonts,amssymb,euscript,mathrsfs}
\usepackage{fullpage}
\usepackage{graphicx}
\usepackage{colortbl} 
\usepackage{xcolor}
\usepackage{subcaption}
\usepackage{dsfont}
\usepackage{color}   
\makeatletter
\def\captionof#1#2{{\def\@captype{#1}#2}}
\makeatother
\def\1{\mbox{\bf 1}}
\def\R{\mathbb{R}}

\def\N{\mathbb{N}}
\def\P{\mathbb{P}}
\def\E{\mathbb{E}}
\def\L{\mathbb{L}}

\def\R{\mathbb{R}}

\def\v{\mbox{Var}}

\newtheorem{theo}{Theorem}
\newtheorem{lem}{Lemma}

\newtheorem{cor}{Corollary}
\newtheorem{Def/Prop}{Definition-Proposition}

\newcommand{\blue}{\color{blue}}   

\newcounter{exos}
\renewcommand\theexos{\arabic{exos}}

\newcounter{prob}
\renewcommand\theprob{\arabic{prob}}

 \begin{document}

\title{Inferring the parameters of Taylor's law in ecology}
 \author{ Lionel Truquet
(\footnote{\texttt{Lionel.TRUQUET@ensai.fr}.  Univ Rennes, Ensai, CNRS, CREST -- UMR 9194, F-35000 Rennes, France}\ ),
Joel E. Cohen (\footnote{
 \texttt{cohen@rockefeller.edu}. 
 Laboratory of Populations, The Rockefeller University, New York, NY 10065, USA;\\
also affiliated with Department of Statistics and Earth Institute, Columbia University, New York, NY 10027; and Department of Statistics, University of Chicago, Chicago, IL 60637, USA,  ORCID 0000-0002-9746-6725}
\ ),
 Paul Doukhan
(\footnote{\texttt{doukhan@cyu.fr}. CY Cergy Paris University,  AGM Mathematics, 2 Bd. Adolphe Chauvin, 95000 Cergy-Pontoise, AGM-UMR 8080}\ ) 
 }

\maketitle

\begin{abstract}
\noindent
Taylor's power law (TL) or fluctuation scaling has been verified empirically for the abundances of many species, human and non-human, and in many other fields including physics, meteorology, computer science, and finance. TL asserts that the variance is directly proportional to a power of the mean, exactly for population moments and, whether or not population moments exist, approximately for sample moments. In many papers, linear regression of log variance as a function of log mean is used to estimate TL's parameters. We provide some statistical guarantees with large-sample asymptotics for this kind of inference under general conditions, and we derive confidence intervals for the parameters. In many ecological applications, the means and variances are estimated over time or across space from arrays of abundance data collected at different locations and time points. When the ratio between the time-series length and the number of spatial points converges to a constant as both become large, the usual normalized statistics are asymptotically biased. We provide a bias correction to get correct confidence intervals. TL, widely studied in multiple sciences, is a source of challenging new statistical problems in a nonstationary spatiotemporal framework. We illustrate our results with both simulated and real data sets.
\end{abstract}

\emph{Keywords}: asymptotic analysis of parameter estimates, biased confidence intervals, fluctuation scaling, space-time data, Taylor's law, variance function

\vspace*{1.0cm}

\section{Introduction}

Taylor's law (TL) is an important family of relations between the mean and the variance of some probability distributions. 
TL is well approximated empirically in many data sets in the natural and social sciences, finance, and technology. 
Since \citet{taylor1961aggregation} publicized this relationship, 
which was discovered by others at least 20 years earlier, such as \citet{Bl}, 
thousands of papers have studied TL
for species abundance data in ecology and more generally to assess time-dependent 
or space-dependent changes of population distributions. 
\citet{taylor2019} and the references therein give a broad introduction to TL.

The aim of this paper is to study the most basic version of this law, 
Taylor's power law, as defined by \eqref{main} at the beginning of the next section.
This law, which is based on a simple power relation between the variance and the mean of the marginal distribution, has an important interpretation in term of aggregation
of "individuals" across time or space. 
This interpretation is probably one reason for its popularity in ecology. 
\citet{cohen2016population} discuss the interpretation of the exponent (also called slope parameter) of Taylor's power law.
In scientific use and testing of TL, \eqref{main} is the most basic relation against many plausible alternative models.
There are many extensions of TL such as the quadratic TL of \citet{taylor1978density} which can be more appropriate than \eqref{main}, as shown, for instance, in \citet{xu2021spatial}
for US county populations. Sometimes a cubic generalization of TL is required to describe data, as in \citet{yang2022temporal} for the study of mortality rates in Japan. 
And sometimes the variance function of the negative binomial distribution $variance = a\cdot mean+ b\cdot mean^2$, 
as first proposed by \citet{bartlett1936some}, is superior to any form of TL, as shown in \citet{cohen2017linking}. 
Fitting TL \eqref{main} is a step that follows exploratory data analysis of multiple alternative models, each of which poses its own problems of estimating parameters.

Whatever the form of TL, estimation of unknown parameters roughly follows the same idea. 
We compute sample means and sample variances along one dimension (say, space)
of a rectangular array of nonnegative data indexed by (say) space and time.
This produces a pair (sample mean, sample variance) for each point in time.
The sample mean and sample variance are assumed to be positive so their logarithms are well defined.
The slope and intercept of log variance as a linear function of log mean are calculated,
typically by ordinary least squares but sometimes by more robust methods 
\citep{yang2022temporal}
to optimize the fit of Taylor's power law to these pairs of sample moments. 

For example, in ecological species abundance data, the spatial TL, on which we will focus in the present paper, 
assumes that the time-varying mean and variance of the marginal distribution are linked at each time $t$ by \eqref{main}
and these population quantities are approximated by the corresponding sample moments computed from abundance data of a given species observed at different locations for each time $t$. 
After taking the logarithm of \eqref{main}, this leads to the least-squares estimator given in Section \ref{sec2}. 

A simple log-log plot (log variance on the vertical axis, log mean on the horizontal axis) is the classical graphical tool used by practitioners to assess whether TL describes the data. 
We shows here that the confidence intervals of parameter estimates
resulting from ordinary least squares may be seriously biased, 
and we show how to correct that bias.
Section \ref{sec6} illustrates with real data.

A wide statistical literature links power variance laws like TL with 
some classes of distributions  \citep{Tw1984, BL1986,BLS1987}; those classes are compared in \cite{BL2020}. \cite{ch2022} provide infinitely divisible examples of the Bar-Lev Enis type.
However, formal statistical analyses of TL are rather scarce.  
Some methodological contributions include \citet{clark1994small},
\citet{james2007fitting}, \cite{jorgensen2011bias}, 
\citet{perry1992fitting} \citet{sakoda2019tracking}, and \citet{shi2017comparison}, 
among others. 
Beyond the previous papers which state that TL holds for interesting classes of distributions, these
 contributions mainly focus on inference procedures or computational issues and 
compare different approaches for estimating TL parameters in various situations such as sparsity, small samples, or data with covariates.
None of these contributions discussed formally the large-sample behavior of 
least-squares or likelihood estimators, which
is precisely the scope of the present paper.

We believe that the statistical analysis made here can guide the analysis of more complex versions of TL. 
Our initial intention was to study some detailed conditions that justify the commonly used method of log variance-log mean least-squares regression. 
However, our main assumptions and results could be easily extended. 
Our approach to studying TL requires a careful analysis of a specific minimum of contrast estimator derived from some sample moments instead of the usual pair response/covariate used in standard regression analysis. 
Analyzing other methods such as likelihood estimation or more general models that use log-regression could be conducted using similar arguments.

This work can be useful for both quantitative ecologists and statisticians. 
Our analysis can help ecologists to measure uncertainty in the estimated parameters with precise confidence bands. It will prevent them from using standard 
but biased methods to compute confidence bands for TL parameters. 
{
By biased, we mean a non-centered asymptotic distribution for TL parameters}.
In Section \ref{sec5}, it is shown that in some cases, the Gaussian asymptotic distribution of least-squares estimates is not centered. 
For statisticians, our work reveals that a popular family of stochastic relations widely studied in several sciences is 
a source of interesting new theoretical problems, in particular for getting consistency properties or for constructing testing problems in a nonstationary spatiotemporal framework. 

Many extensions of our results are possible, for example to
the numerous variants of TL discussed above.
In addition, this paper will focus only on distributions with light tails, i.e. when the probability distributions of interest have some finite moments of order $k>2$. 
Though sufficient to prove rigorous consistency results for our estimates, such a restriction is not necessary for TL to occur. 
TL may hold for the sample mean and sample variance from stable laws
with infinite variance or infinite mean, as shown in \citet{brown2017taylor}, \citet{brown2021taylor}, \citet{cohen2020heavy} or \citet{cohen2022covid}.
However, even with our restriction to light-tailed distributions, 
ours is the first rigorous analysis of the asymptotic properties of the least-squares estimators of Taylor's power law parameters.

The paper is organized as follows. Section \ref{sec2} specifies our model and the least-squares estimators of TL parameters.
Some numerical experiments in Section \ref{sec3} assess the quality of these estimates, depending on various assumptions made 
about the process that generates the data.
Section \ref{sec4} gives consistency properties and Section \ref{sec5} gives limiting distributions for our estimators.
Section \ref{sec6} illustrates the application of our results to real data.
Section \ref{sec7} summarizes our conclusions. 
All proofs of our results are postponed to Section \ref{sec8}.
Section \ref{sec9} contains some technical lemmas used in our proofs.

\section{Least-squares estimators for Taylor's law parameters}\label{sec2}

Let $(X_{j,t})_{1\leq j\leq n,1\leq t\leq T}$ be an array of nonnegative random variables { indexed by integers $j,\ t$}. 
These variables can take integer values, for instance,
when $X_{j,t}$ represents the number of individuals 
observed at time $t$ and at the location $j$, 
or more generally nonnegative real values, for instance, when $X_{j,t}$ represents
the number of individuals observed at time $t$ divided by the area of site $j$,
in which case $X_{j,t}$ represents a number of individuals per unit of area (population density). 

We assume that the spatial variance $\sigma_t^2:=\v(X_{j,t})$ 
(that is, the variance of the $n$ elements in column $t$ of the array $(X_{j,t})$)
and the spatial mean $\mu_t:=\E\left(X_{j,t}\right)$ 
(that is, the mean of the $n$ elements in column $t$ of the array $(X_{j,t})$)
at time $t$ do not depend on the location $j$,
are strictly positive, and satisfy
\begin{equation}\label{main}
\sigma_t^2=\alpha \mu_t^{\beta},\quad 1\leq t\leq T,
\end{equation} 
for a pair of parameters $(\alpha,\beta)\in (0,\infty)\times \R$.
In most ecological applications, the estimate of $\beta$ is positive. 
A value $\beta>1$ is sometimes interpreted as indicating aggregation or clumping of the organisms 
because a family of Poisson distributions in which the mean varies in time and remains constant over space obeys $\alpha=\beta=1$.
Values below $1$ are uncommon but possible,
for example, in models of a population in a Markovian environment, e.g., \citet{Cohen2014a, Cohen2014b}
and numerical simulations of some fisheries models gave $\beta<0$ \citep{Masami2015}.

The population parameters $(\mu_t,\sigma_t^2)$ are estimated by their empirical counterparts from the sample $X_{1,t},\ldots,X_{n,t}$. See below for details.
This form of TL is known as a spatial TL because the empirical mean and the empirical variance depend on $n$ spatial locations at each time.
We do not consider here 
other variants of TL, such as the temporal TL,
in which the mean and the variance depend on $T$ time points at each spatial location, or the
hierarchical spatial TL or the hierarchical temporal TL. 
For instance, the temporal TL assumes that the population moments
$\sigma_j^2=\v(X_{j,t})$ and $\mu_j=\E X_{j,t}$ depend only on $j$ and not on $t$. 
The corresponding power law is
$$\sigma_j^2=\alpha \mu_j^{\beta},\quad 1\leq j\leq n.$$
Asymptotic properties of the temporal TL can be obtained with similar arguments. 
This is why we mainly focus on the spatial law \eqref{main}.

The main problem here is inference about the two parameters $\alpha$ and $\beta$ of \eqref{main}. 
A natural way to proceed is to use a linear regression of the logarithm of 
the empirical spatial variance computed at each time $t$ as a function of
the logarithm of the empirical spatial mean computed at each time $t$.
Let 
$$\hat{\mu}_t=\frac{1}{n}\sum_{j=1}^n X_{j,t},\quad \hat{\sigma}_t^2=\frac{1}{n}\sum_{j=1}^n X_{j,t}^2-\hat{\mu}_t^2\,.$$ 
{While the variance can be also estimated from the unbiased estimator
$$\hat{\sigma}_t^2=\frac{1}{n-1}\sum_{j=1}^n (X_{j,t}-\hat{\mu}_t)^2\, ,$$ 
our asymptotic results will not depend on the choice of a biased or unbiased version of the sample variance.
}
Since \eqref{main} implies
\begin{equation}
\log\sigma_t^2=\log(\alpha)+\beta \log\left(\mu_t\right), \label{loglogTL}
\end{equation}
it is natural to estimate the pair $\theta=\left(\log(\alpha),\beta\right)'$
 from the ordinary least-squares estimator 
$$\hat{\theta}_n=\underset{\theta\in\R^2}{\arg\min} \sum_{t=1}^T\left(\log\hat{\sigma}_t^2-\theta_1-\theta_2\log\hat{\mu}_t\right)^2.$$
Explicitly,
\begin{eqnarray}\label{defest}\hat{\theta}_n&=&\hat{D}_T^{-1}\hat{N}_T,\qquad\mbox{ where} \\
\label{defD}
\hat{D}_T&=&\begin{pmatrix}\displaystyle 1&\frac{1}{T}\sum_{t=1}^T 
\log\hat{\mu}_t\\ \frac{1}{T}\sum_{t=1}^T \log\hat{\mu}_t\displaystyle& 
\frac{1}{T}\sum_{t=1}^T \log^2 \hat{\mu}_t\end{pmatrix},
\quad
\hat{N}_T=\begin{pmatrix} \frac{1}{T}\sum_{t=1}^T \log\hat{\sigma}^2_t\\ \frac{1}{T}\sum_{t=1}^T\log\hat{\mu}_t\cdot\log\hat{\sigma}_t^2\end{pmatrix}.
\end{eqnarray}
In the above formula, $\log \hat{\mu}_t$ or $\log \hat{\sigma}_t$ should be replaced by $0$ if 
$\hat{\mu}_t=0$ or $\hat{\sigma}^2_t=0$, respectively. 
For simplicity, we do not introduce an indicator function in the above formula.
If we set
$$D_T:=\begin{pmatrix} 1&\frac{1}{T}\sum_{t=1}^T \log\mu_t\\\frac{1}{T}\sum_{t=1}^T \log\mu_t& \frac{1}{T}\sum_{t=1}^T\log^2 \mu_t\end{pmatrix},\quad N_T:=\begin{pmatrix} \frac{1}{T}\sum_{t=1}^T \log\sigma^2_t\\ \frac{1}{T}\sum_{t=1}^T\log\mu_t\cdot\log\sigma_t^2\end{pmatrix},$$
then analogously $\theta=D_T^{-1}N_T$, which makes it natural to introduce the above estimator.
{
Throughout this paper, we will assume that the time series length $T$ depends on the number $n$ of spatial sites}, 
i.e. $T:=T_n$. 
We will study the asymptotic properties of $\hat{\theta}:= \hat{\theta}_n$ in \eqref{defest} as $n\to\infty$.

\section{Numerical experiments}\label{sec3}

Before stating our main assumptions and results, we investigate the accuracy of the least-squares estimates discussed in the previous section through some numerical experiments. 
We will consider several scenarios for which the moment conditions on the marginal distribution are not the same.
We also study a scenario with some zero-inflated data. 
Throughout this section, we consider a very simple time-varying parameter $\lambda_t=\exp\left(\cos(t)\right)$, $1\leq t\leq T$, to define our different scenarios. 
Moreover, for simplicity, we will assume that 
all the ``observations'' $\{X_{j,t}\}$ are independent across $j$ and $t$, i.e. there is no spatial or temporal dependence for the different random variables. 
To get a meaningful approximation of the population mean
and population variance from empirical (or sample) counterparts, we assume 
that for any time point $t$, the random variables $X_{1,t},\ldots,X_{n,t}$ are { identically} distributed, i.e., all have the same distribution.
We use DGP to abbreviate ``data-generating process''.
{
We use the relation $\sim$ always to mean that the random variables on both sides have the same probability distribution (\footnote{Here we will never  mean that deterministic functions on both sides are asymptotically equivalent as some quantity approaches a limit.}).}

\begin{enumerate}
\item
{\bf DGP1}. Poisson $X_{j,t} \sim \mathcal{P}(\lambda_{t})$. In this case, $\sigma_t^2=\mu_t$ and $(\alpha,\beta)=(1,1)$. 
All the exponential moments are finite here, i.e. $\E\exp\left(L X_{j,t}\right)<\infty$ for any $L>0$.
\bigskip

\item
{\bf DGP2}. Squares of Gaussian $X_{j,t} \sim \mathcal{N}(0,\lambda_{t})^2$ (also known as $\chi^2$ with one degree of freedom). In this case, $\sigma_t^2= 2 \mu_t^2$ and $(\alpha,\beta)=(2,2)$. 
Some exponential moments are infinite: $\E\exp\left(L X_{j,t}\right)<\infty$ if and only if $L\lambda_t<1/2$.
\bigskip

\item
{\bf DGP3}. Mixture of Poisson $X_{j,t} \sim \mathcal{P}(\lambda_{t}\times Z_{j,t}+\zeta_{t})$, with $Z_{j,t}\sim \sqrt{3/4}\exp\left(\mathcal{N}(0,1)^2/8\right)$ and setting $v=\v\left(Z_{j,t}\right)$,
$$\zeta_{t}=\frac{1+\sqrt{1+4 v^2\lambda_{t}^2}}{2v}-\lambda_{t}.$$
In this case, $(\alpha,\beta)=(v,2)$ and
the fourth moment is infinite: 
$\E X_{j,t}^L<\infty$ if and only if $0\leq L<4$.
\end{enumerate}

\begin{figure}[H]
     \centering
     \begin{subfigure}[b]{0.5\textwidth}
         \begin{center}
         \includegraphics[width=8cm,height=5cm]{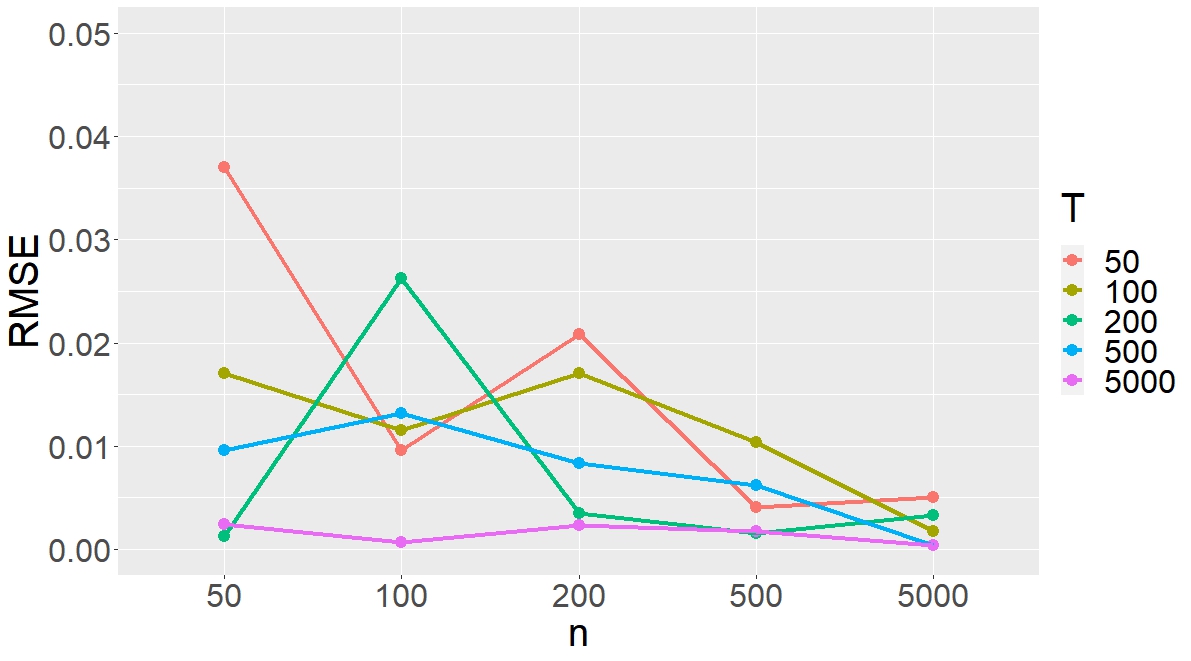}
         \caption{{\bf DGP1}}
				\end{center}
     \end{subfigure}
     \bigskip
		
     \begin{subfigure}[b]{0.5\textwidth}
         \centering
         \includegraphics[width=8cm,height=5cm]{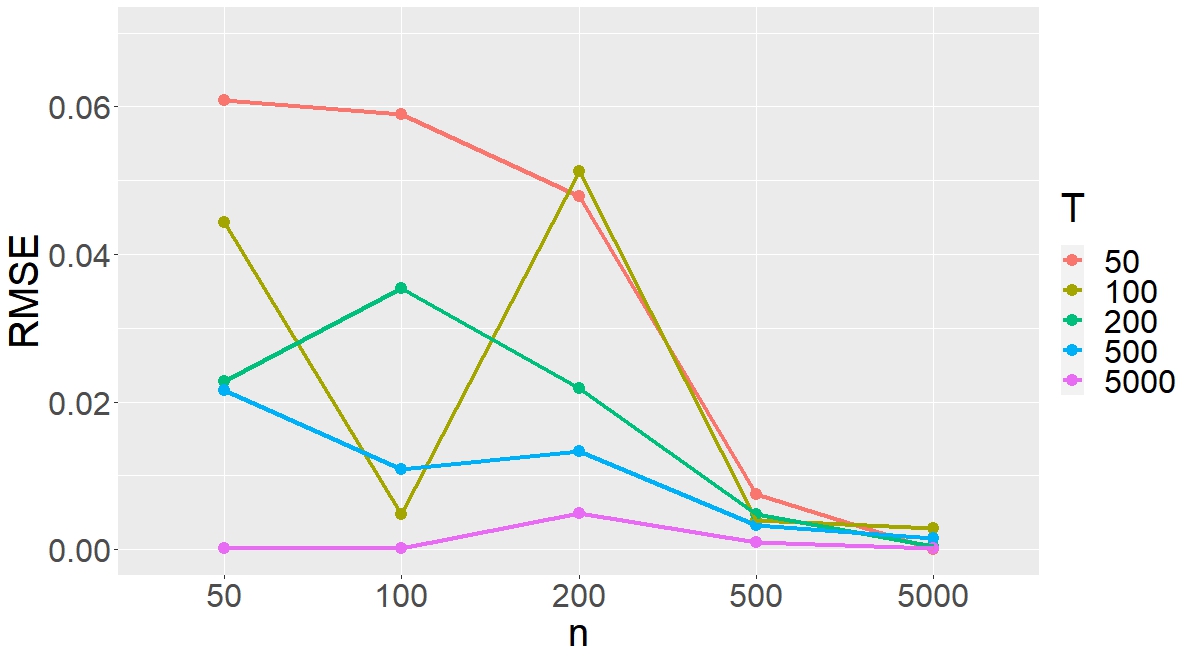}
         \caption{{\bf DGP2}}
     \end{subfigure}
     \bigskip
		
     \begin{subfigure}[b]{0.5\textwidth}
         
         \includegraphics[width=8cm,height=5cm]{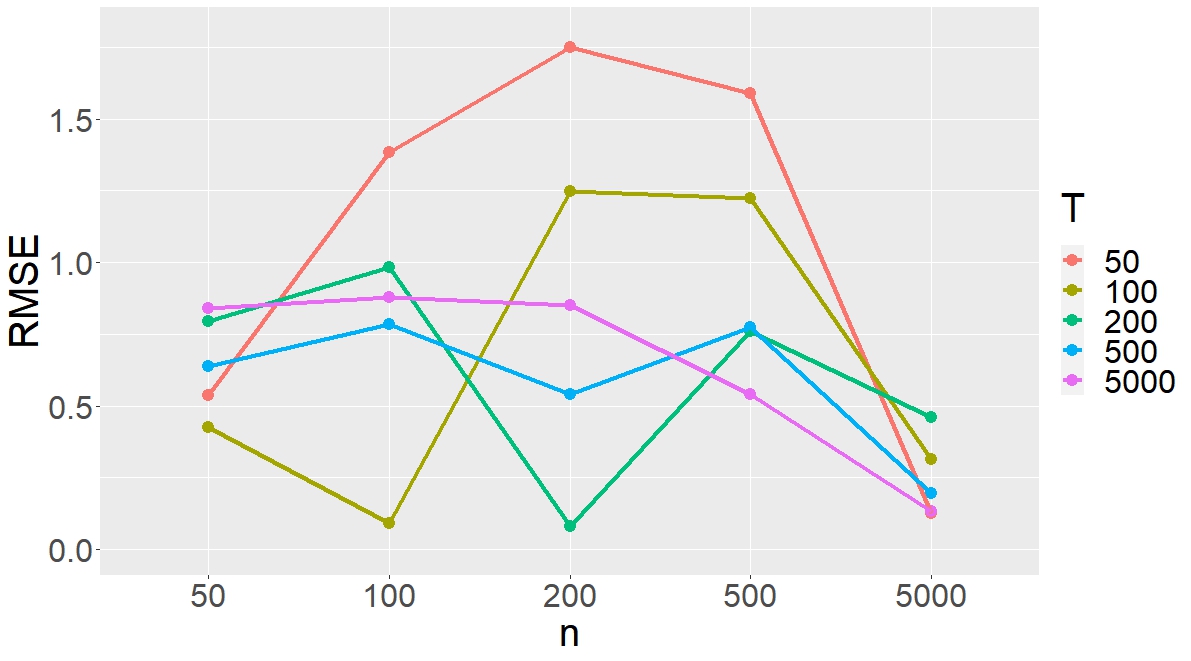}
         \caption{{\bf DGP3}}
     \end{subfigure}
        \caption{Average RMSE values of the spatial TL for different values of $n$ and $T$}
        \label{RMSE_simu1}
\end{figure}

{Figure \ref{RMSE_simu1} represents the root mean squared error 
(RMSE) of $\hat{\beta}$ for the three scenarios and different values of the pair $(n,T)$. 
Large values of $n$ and $T$ seem to lead to a more accurate estimation of the slope parameter. 
However, the tail of the marginal distribution seems to affect this accuracy: 
the third scenario with infinite fourth moments leads to the largest estimation errors and large values of $T$ can 
reduce the accuracy if the value of $n$ is moderate. 
{
The difference in the accuracy of the three DGPs can be seen from the differences in the vertical scales in the three panels of Figure \ref{RMSE_simu1}}.
This observation was expected since, in theory, 
the moment conditions affect how closely empirical moments 
approximate population moments.
It is not surprising to get smaller RMSE when the deviations $\hat{\mu}_t-\mu_t$ and $\hat{\sigma}_t^2-\sigma_t^2$ can be made small.

We next evaluate how the number of $0$ counts in the sample affects RMSE. 
 Counts equal to $0$ are very common in abundance data.
For example, 
Figure \ref{serie_temp} 
{shows a} real data set of fisheries abundances obtained from the 
ICES International Bottom Trawl Survey for the North Sea (NS-IBTS), 
which will be described  in more detail in Section \ref{sec5}.

We now construct a new DGP with various proportions of $0$ counts
to mimic zero-inflated data. 
Let $W_{j,t}, 1\leq j\leq n$, $1\leq t\leq T$, be Bernoulli random variables 
independently and identically distributed (iid) as
$\mathcal{B}(p), p\in (0,1)$,
where $p :=\Pr\{W_{j,t}=1\}$. 
Thus $0$ values appear with probability $1-p$. 
Set $X_{j,t}=W_{j,t}X'_{j,t}$ where $X_{j,t}'$ are defined in {\bf DGP2}
and where $W_{j,t}$ are independent of $X'_{j,t}$.
It may be shown that the spatial TL is still satisfied with $\beta=2$.
However, $\alpha = 3p - p^2$, which reduces to $\alpha = 2$ if $p=1$.

Figure \ref{thinning} shows that $p=0.2$ gives 
less accurate results  than $p=0.5$. 
This observation is not surprising because 
smaller $p$ leads to more $0$ values, and
{smaller values of the empirical means lead to larger absolute values for their logarithm 
and more variability is expected for the partial sums defining the log-regression estimates.}

\begin{figure}[H]
\begin{center}
        \includegraphics[scale=0.65 ]{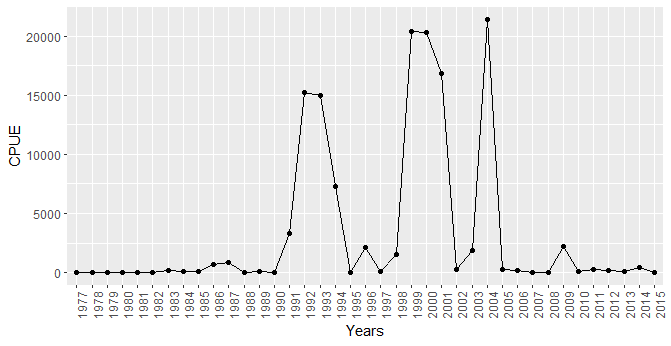}
        \caption{Time series for the species \textit{Clupea harengus} on the subarea 31F2. {CPUE is counts per unit effort.} }
        \label{serie_temp}
\end{center}
\end{figure}

\begin{figure}[H]
     \centering
     \begin{subfigure}[b]{0.5\textwidth}
         \centering
         \includegraphics[width=8cm,height=5cm]{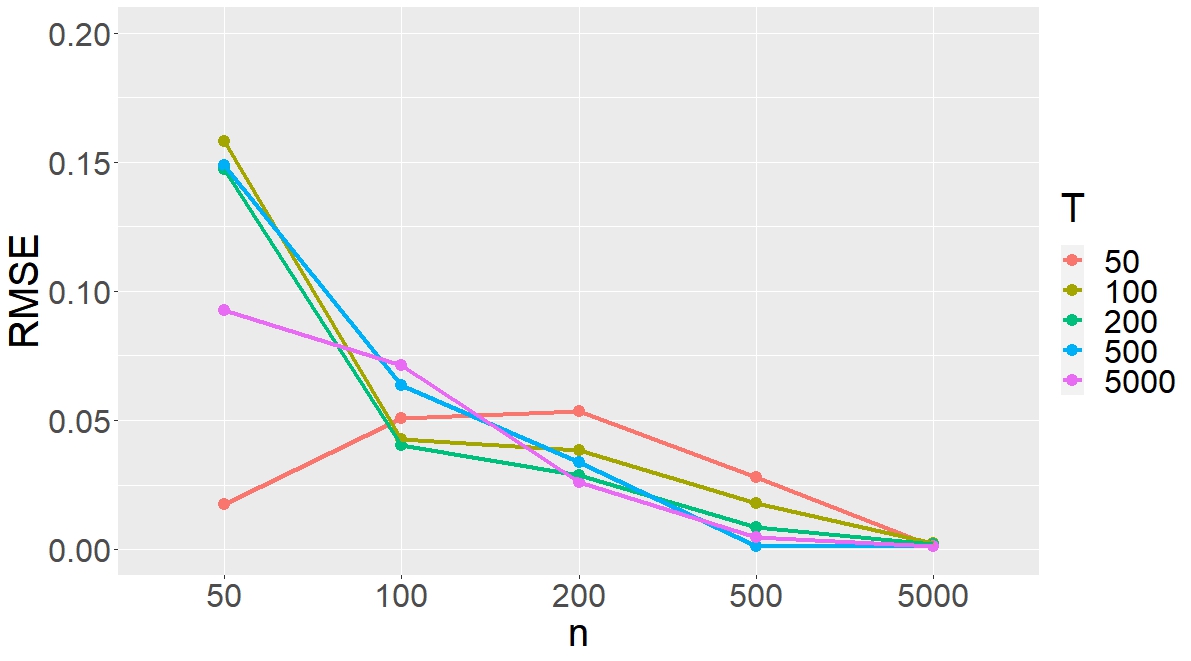}
         \caption{$X_{j,t}\sim \mathcal{N}\left(0,\lambda_t\right)^2\times \mathcal{B}(0.2)$}
     \end{subfigure}
    \bigskip		
		
     \begin{subfigure}[b]{0.5\textwidth}
         \centering
         \includegraphics[width=8cm,height=5cm]{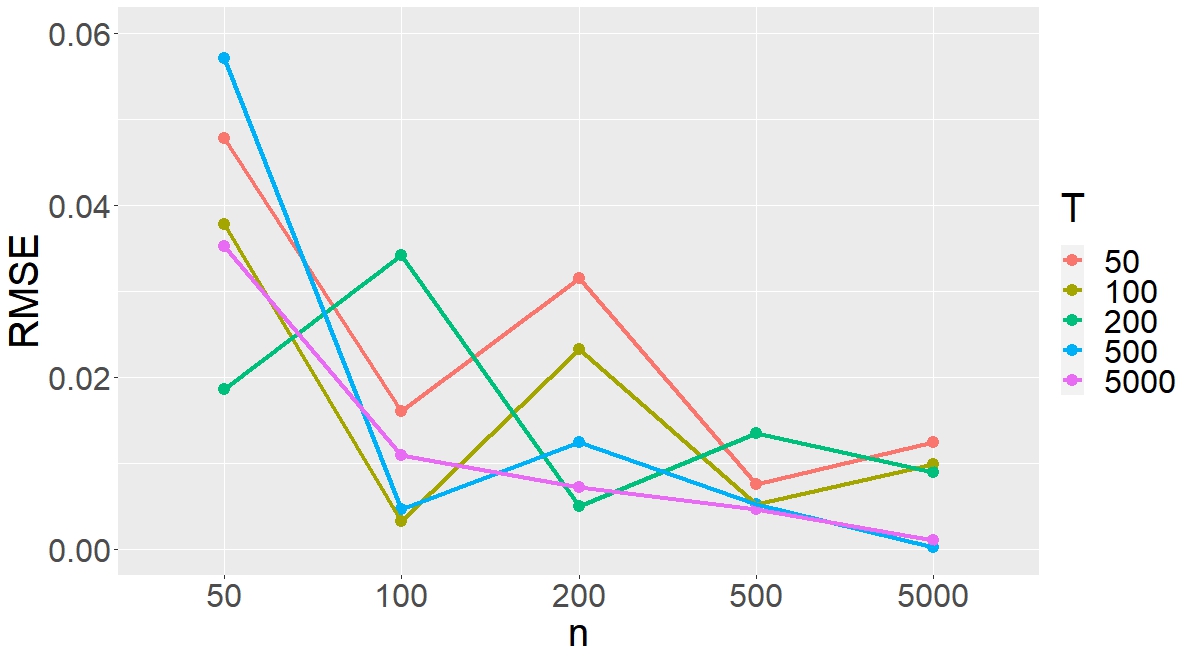}
         \caption{$X_{j,t} \sim \mathcal{N}\left(0,\lambda_t\right)^2\times\mathcal{B}(0.5)$}
     \end{subfigure}
     
        \caption{Effect of zero-inflation on the RMSE of $\hat{\beta}$ for DGP2}
				\label{thinning}
        \end{figure}

\section{Consistency properties}\label{sec4}

We now study the consistency of least-squares estimates. 
Consistency is defined to mean that $\lim_{n\rightarrow\infty}\hat{\theta}=\theta$ in probability 
(weak consistency) or almost surely (strong consistency).  

We assume that the time horizon $T$ of the time series depends on $n$, i.e. $T=T_n$.
If $T_{+}:=\sup_{n\geq 1} T_n<\infty$, consistency properties will roughly follow from the law of large numbers. 
It is more challenging to study the case $T_n\rightarrow \infty$, which is also useful in applications because the dimensions of the arrays of data are often balanced.
Though time series with length converging to infinity will be the main focus, our results are also valid for finite $T_{+}$.

\subsection{General results for nonnegative random variables}
To derive the asymptotic properties of our estimates, we assume:

\begin{description}
\item[A1] The time series $\left(X_{j,t}\right)_{1\leq t\leq T}$, $1\leq j\leq n$, are independent and identically distributed.
\item[A2] There exists $k>2$ such that $\sup_{t\geq 1}\E\vert X_{1,t}\vert^k<\infty$.
\item[A3] There exists some $\delta>0$ such that 
$$p(\delta):=\inf_{t\geq 1}\P\left(X_{1,t}>\delta\right)>0  \quad \mathrm{and}\quad 
q(\delta)=\inf_{t\geq 1}\P\left(\left\vert X_{1,t}-X_{2,t}\right\vert> \delta\right)>0.$$
\item[A4] There exists $c>0$ such that for any positive integer $T$, $$\frac{1}{T}\sum_{t=1}^T \left(\log \mu_t\right)^2-\left(\frac{1}{T}\sum_{t=1}^T\log \mu_t\right)^2\geq c.$$
\end{description}

Let us comment on our assumptions.
\begin{enumerate}
\item
Assumption {\bf A1} does not require any specific dependence property across time. 
Independence across sites $j$ is needed only for technical reasons. 
However, spatial correlations can be observed in data, as for instance in a study of Hokkaido voles \citep{cohen2016population}.  
In theory, it may be possible 
to remove the assumption of independence across sites $j$ 
by assuming a stationarity condition and some weak dependence property between the different sites $j$. 
However, getting consistency properties under such conditions would require additional assumptions,
in particular to get the law of large numbers when the locations $j$ are irregularly spaced. 
We prefer to keep the setup simple. 

\item
The technical moment assumption {\bf A2} is needed to prove limit results as in Theorem \ref{th1}.
\item
Assumption {\bf A3} roughly means that Dirac masses cannot be cluster points of the marginal distribution, 
since {\bf A3} is not satisfied if and only if 
at least one of the two sequences $\left(X_{1,t}\right)_{t\geq 1}$ or $\left(X_{1,t}-X_{2,t}\right)_{t\geq 1}$ has a subsequence converging to $0$ in probability. 
Excluding Dirac masses is important to avoid small values for the population mean and the population variance 
as well as for their empirical versions and then large negative values for their logarithm.
\item  
Assumption {\bf A4} ensures that the vectors $\left(\log \mu_t\right)_{1\leq t\leq T}$ vary sufficiently in time. 
Such a condition is needed 
to avoid some pathological behaviors such as the convergence of $\mu_t$ when $t\rightarrow \infty$. Without this condition, the matrix $\hat{D}_T$ can be asymptotically singular. For a standard linear regression model $y_t=x_t'\beta+\varepsilon_t$, our condition is comparable to the existence of a lower bound for the determinant of the matrices $T^{-1}\sum_{t=1}^Tx_tx_t'$ classical in statistical learning. 
\end{enumerate}


\begin{theo}\label{th1}
Given Assumptions {\bf A1-A2-A3-A4}, 
there exists $\gamma>0$ such that if $T_n=O\left(\exp\left(\gamma n\right)\right)$, then 
$\lim_{n\rightarrow \infty}\hat{\theta}=\theta$ in probability. 
If Assumption {\bf A2} holds with $k>4$, 
then $\lim_{n\rightarrow \infty}\hat{\theta}=\theta$ almost surely (a.s.).
\end{theo}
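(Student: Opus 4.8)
The plan is to reduce the statement to the convergence $\hat D_T\to D_T$ and $\hat N_T\to N_T$, and then to a law-of-large-numbers statement for the columnwise sample moments. First I would record two elementary consequences of {\bf A2}--{\bf A3}: since $\mu_t\le(\E|X_{1,t}|^k)^{1/k}$ and $\mu_t\ge\delta\, p(\delta)$, while $\sigma_t^2\le\E X_{1,t}^2$ and $\sigma_t^2=\tfrac12\E[(X_{1,t}-X_{2,t})^2]\ge\tfrac12\delta^2 q(\delta)$, both $(\mu_t)_{t\ge1}$ and $(\sigma_t^2)_{t\ge1}$ lie in a fixed compact subset of $(0,\infty)$; hence the entries of $D_T$ and $N_T$ are bounded uniformly in $T$, and by {\bf A4} $\det D_T\ge c>0$, so $\|D_T^{-1}\|$ is bounded uniformly in $T$. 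Writing $\hat\theta_n-\theta=\hat D_T^{-1}(\hat N_T-N_T)+\hat D_T^{-1}(D_T-\hat D_T)D_T^{-1}N_T$, it therefore suffices to prove $\hat D_T-D_T\to 0$ and $\hat N_T-N_T\to 0$ (in probability, resp.\ a.s.), since this also forces $\det\hat D_T\ge c/2$ eventually, hence $\|\hat D_T^{-1}\|$ bounded. All entries of $\hat D_T-D_T$ and $\hat N_T-N_T$ have the form $\frac1T\sum_{t=1}^T\big(\psi(\hat\mu_t,\hat\sigma_t^2)-\psi(\mu_t,\sigma_t^2)\big)$ with $\psi$ a product of at most two factors among $\log\mu$, $\log\sigma^2$, so it is enough to control $\frac1T\sum_t|\log\hat\mu_t-\log\mu_t|$, $\frac1T\sum_t|\log^2\hat\mu_t-\log^2\mu_t|$ and their $\sigma$-analogues.

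The second ingredient is a truncation neutralising small empirical moments. Fix $\delta$ from {\bf A3}, set $m_0=\tfrac12\delta p(\delta)$, $s_0=\tfrac18\delta^2 q(\delta)$, and put $\tilde\mu_t=\hat\mu_t\vee m_0$, $\tilde\sigma_t^2=\hat\sigma_t^2\vee s_0$. Since $\hat\mu_t\ge\frac{\delta}{n}\#\{j\le n:X_{j,t}>\delta\}$ and $\#\{j:X_{j,t}>\delta\}$ dominates a $\mathrm{Bin}(n,p(\delta))$ variable, a Chernoff bound gives $\P(\hat\mu_t<m_0)\le e^{-c_1 n}$ uniformly in $t$. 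Similarly, from $\hat\sigma_t^2\ge\frac{\delta^2}{2n^2}\#\{(i,j):|X_{i,t}-X_{j,t}|>\delta\}$ and the bounded-differences inequality applied to the $U$-statistic $\binom{n}{2}^{-1}\sum_{i<j}\mathbf 1_{\{|X_{i,t}-X_{j,t}|>\delta\}}$ (mean $\ge q(\delta)$), one gets $\P(\hat\sigma_t^2<s_0)\le e^{-c_2 n}$ uniformly in $t$. Choosing $\gamma<\min(c_1,c_2)$ and $T_n=O(e^{\gamma n})$, a union bound gives $\P\big(\exists\,t\le T_n:\hat\mu_t<m_0\text{ or }\hat\sigma_t^2<s_0\big)\le T_n(e^{-c_1 n}+e^{-c_2 n})\to0$, and this bound is summable in $n$. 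On the complementary event we may replace $\hat\mu_t,\hat\sigma_t^2$ by $\tilde\mu_t,\tilde\sigma_t^2$ throughout; the latter have logarithms bounded below by $\log m_0$ and $\log s_0$, and satisfy $\log^+\tilde\mu_t\le\log^+\hat\mu_t\le\hat\mu_t$ and $(\log^+\tilde\mu_t)^2\le\hat\mu_t$ (similarly for $\tilde\sigma_t^2$), so they are integrable with $\E$-bounds $O(1)$.

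Then I would show $\frac1T\sum_t|\log\tilde\mu_t-\log\mu_t|\to0$ in $L^1$ (and the three companion sums). Fix small $\epsilon\in(0,\tfrac12)$ and split according to whether $|\hat\mu_t-\mu_t|\le\epsilon\mu_t$. On that good event $\hat\mu_t\ge(1-\epsilon)\mu_t>m_0$, so $\tilde\mu_t=\hat\mu_t$ and $|\log\tilde\mu_t-\log\mu_t|\le2\epsilon$; on its complement one bounds $|\log\tilde\mu_t-\log\mu_t|\le\log^+\hat\mu_t+|\log m_0|+\sup_t|\log\mu_t|$ and applies Cauchy--Schwarz with the Marcinkiewicz--Zygmund/Rosenthal bound $\P(|\hat\mu_t-\mu_t|>\epsilon\mu_t)\le C\,n^{-k/2}$ ($k>2$), giving $\E\big[|\log\tilde\mu_t-\log\mu_t|\,\mathbf 1_{\{|\hat\mu_t-\mu_t|>\epsilon\mu_t\}}\big]\le C\,n^{-k/4}$. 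Hence $\E\big[\frac1T\sum_t|\log\tilde\mu_t-\log\mu_t|\big]\le2\epsilon+C\,n^{-k/4}$; let $n\to\infty$ then $\epsilon\to0$. The quadratic sums use $(\log^+\hat\mu_t)^2\le\hat\mu_t$; the $\sigma$-sums are identical with $\hat\sigma_t^2-\sigma_t^2=\frac1n\sum_j(X_{j,t}^2-\E X_{j,t}^2)-(\hat\mu_t^2-\mu_t^2)$, for which $\E|\hat\sigma_t^2-\sigma_t^2|^{k/2}=O(n^{-(k/2-1)})$ ($k>2$) and hence $\P(|\hat\sigma_t^2-\sigma_t^2|>\epsilon\sigma_t^2)\le C\,n^{-(k/2-1)}$. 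Combined with the first paragraph, this yields $\hat\theta_n\to\theta$ in probability. For almost sure convergence with $k>4$: every bad-event probability above is then summable in $n$ --- $T_n e^{-c_i n}$, $n^{-k/2}$ for the mean, and (Rosenthal with exponent $k/2>2$) $n^{-k/4}$ for the variance --- where for the non-good part one replaces Cauchy--Schwarz by H\"older with a large exponent so that $\E\big[\frac1T\sum_t|\log\tilde\mu_t-\log\mu_t|\mathbf 1_{\text{bad}}\big]$ and its analogues are summable; Borel--Cantelli then upgrades each $o_{\P}(1)$ to an a.s.\ limit, and the matrix-inversion step is continuous.

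The hard part is the second paragraph. The natural attempt of bounding $\E|\log\hat\mu_t-\log\mu_t|$ directly fails: a marginal law compatible with {\bf A3} may place enough mass near $0$ that $\E|\log\hat\mu_t|=\infty$ for every $n$ (likewise for $\hat\sigma_t^2$), so expectations cannot be taken naively. The truncation at the {\bf A3}-scale is exactly what legitimises the $L^1$ argument, at the price of an exceptional event of probability $O(T_n e^{-cn})$; balancing this exponentially small probability against the number of time points is precisely what forces $T_n=O(e^{\gamma n})$, and the slower concentration of the sample variance (which only sees the $(k/2)$-th moment of the data) is what separates the hypothesis $k>2$ for weak consistency from $k>4$ for strong consistency.
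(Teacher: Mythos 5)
Your proposal is correct and follows essentially the same route as the paper: exponential (Hoeffding/U-statistic) control of small empirical means and variances, a union bound over $t\le T_n=O(e^{\gamma n})$ with Borel--Cantelli, reduction to $\hat D_T\to D_T$ and $\hat N_T\to N_T$ via the same matrix decomposition, moment bounds for the log-differences on the good event, and summability under $k>4$ for the almost sure statement. The only differences are cosmetic: you truncate at the {\bf A3}-scale and use a relative-deviation split with Cauchy--Schwarz/H\"older, whereas the paper restricts to the event $E_n(\epsilon)$ and uses the inequality $\vert\log a-\log b\vert\le \kappa^{-1}\epsilon^{-\kappa}\vert a-b\vert^{\kappa}$ together with its moment lemma.
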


Theorem \ref{th1} requires a limited growth of the time series length $T$ with respect to the number $n$ of sites. 
Another assumption concerning the marginal distribution of $X_{1,t}$ avoids such a restriction and leads to consistency. 
Define $\log_{-}(x):=-\log\min(x,1)$ for any $x>0$ and $\log_{-}^q(x):=(\log_{-}(x))^q$ for any $q>0$.

\begin{description}
\item[A5] There exists $q>2$ such that 
$$\sup_{t\geq 1}\E\mathds{1}_{X_{1,t}>0}\cdot\log^q_{-} X_{1,t}<\infty,\quad  
\sup_{t\geq 1}\E\mathds{1}_{X_{1,t}\neq X_{2,t}}\cdot\log^q_{-}\vert X_{1,t}-X_{2,t}\vert<\infty.$$
\end{description}

\begin{theo}\label{th2}
Given Assumptions {\bf A1-A2-A3-A4-A5}, $\hat{\theta}_n$ is weakly consistent if $k>2$ and strongly consistent if $k>4$ in {\bf A2}.
\end{theo}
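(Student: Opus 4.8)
The plan is to reduce the consistency of $\hat\theta_n=\hat D_T^{-1}\hat N_T$ to the convergences $\hat D_T-D_T\to 0$ and $\hat N_T-N_T\to 0$, using that \eqref{loglogTL} gives the exact identity $N_T=D_T\theta$, hence $\theta=D_T^{-1}N_T$ for every $T$. The first step records the deterministic consequences of {\bf A2}--{\bf A3}: since $\mu_t=\E X_{1,t}\ge\delta\,p(\delta)>0$, $\sigma_t^2=\tfrac12\E(X_{1,t}-X_{2,t})^2\ge\tfrac12\delta^2q(\delta)>0$, and $\mu_t\le M^{1/k}$, $\sigma_t^2\le M^{2/k}$ with $M:=\sup_t\E|X_{1,t}|^k<\infty$, the quantities $\log\mu_t$ and $\log\sigma_t^2$ stay in a fixed compact interval uniformly in $t$. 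Hence the entries of $D_T$, and of $N_T=D_T\theta$, are bounded uniformly in $T$, while {\bf A4} gives $\det D_T\ge c$, so $\sup_T\|D_T^{-1}\|<\infty$. On the event where $\|\hat D_T-D_T\|$ is sufficiently small, $\hat D_T$ is invertible, $\|\hat D_T^{-1}\|$ is bounded and $\|\hat D_T^{-1}-D_T^{-1}\|\le\|\hat D_T^{-1}\|\,\|D_T^{-1}\|\,\|\hat D_T-D_T\|$, so from $\hat\theta_n-\theta=\hat D_T^{-1}(\hat N_T-N_T)+(\hat D_T^{-1}-D_T^{-1})N_T$ it suffices to prove that each of
\begin{gather*}
\frac1T\sum_{t=1}^T\big(\log\hat\mu_t-\log\mu_t\big),\qquad \frac1T\sum_{t=1}^T\big(\log^2\hat\mu_t-\log^2\mu_t\big),\\
\frac1T\sum_{t=1}^T\big(\log\hat\sigma_t^2-\log\sigma_t^2\big),\qquad \frac1T\sum_{t=1}^T\big(\log\hat\mu_t\log\hat\sigma_t^2-\log\mu_t\log\sigma_t^2\big)
\end{gather*}
converges to $0$ in probability (weak consistency) or a.s.\ (strong consistency).

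Next I would, for each fixed $t$, split according to whether $\hat\mu_t$ (resp.\ $\hat\sigma_t^2$) exceeds a small threshold $\varepsilon<\min\{1,\inf_t\mu_t,\inf_t\sigma_t^2\}$. On $\{\hat\mu_t>\varepsilon\}$ one has $|\log\hat\mu_t-\log\mu_t|\le\varepsilon^{-1}|\hat\mu_t-\mu_t|$, with $\E|\hat\mu_t-\mu_t|=O(n^{-1/2})$ and, via a Marcinkiewicz--Zygmund/Rosenthal bound on $\tfrac1n\sum_j(X_{j,t}^2-\E X_{j,t}^2)$ (legitimate because $\E|X_{1,t}|^k<\infty$ with $k>2$), $\E|\hat\sigma_t^2-\sigma_t^2|\to 0$, both uniformly in $t$ at a polynomial rate; the squared and product terms on this event are controlled by $a^2-b^2=(a-b)(a+b)$, $ab-a_0b_0=(a-a_0)b+a_0(b-b_0)$ and Cauchy--Schwarz, using that the extra logarithmic factors grow slowly and have bounded moments. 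On $\{\hat\mu_t\le\varepsilon\}$ one bounds $|\log\hat\mu_t-\log\mu_t|\le\log_-\hat\mu_t+C$, $|\log^2\hat\mu_t-\log^2\mu_t|\le\log_-^2\hat\mu_t+C$, and similarly for $\hat\sigma_t^2$ and the product. Here both halves of {\bf A5} enter through $\hat\mu_t\ge\tfrac1n\max_jX_{j,t}$ and $\hat\sigma_t^2=n^{-2}\sum_{i<j}(X_{i,t}-X_{j,t})^2\ge n^{-2}\max_{i\ne j}(X_{i,t}-X_{j,t})^2$: by monotonicity of $\log_-$ and a union bound over sites (resp.\ pairs), $\sup_t\E\log_-^q\hat\mu_t=O(n)$ and $\sup_t\E\log_-^q\hat\sigma_t^2=O(n^2)$, while {\bf A3} with a Chernoff bound for $\#\{j:X_{j,t}>\delta\}$ (resp.\ a Hoeffding bound for the $U$-statistic $\#\{i<j:|X_{i,t}-X_{j,t}|>\delta\}$) gives $\sup_t\P(\hat\mu_t\le\varepsilon)\vee\sup_t\P(\hat\sigma_t^2\le\varepsilon)\le e^{-cn}$. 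H\"older's inequality with exponent $q/2>1$ then yields, e.g., $\E[\log_-^2\hat\mu_t\,\mathds 1_{\hat\mu_t\le\varepsilon}]\le(\E\log_-^q\hat\mu_t)^{2/q}\P(\hat\mu_t\le\varepsilon)^{1-2/q}=O(n^{2/q}e^{-cn(1-2/q)})\to 0$ uniformly in $t$, which is exactly where $q>2$ is used. Summing over $t$ and dividing by $T=T_n$, each of the four averages has first absolute moment bounded by a term tending to $0$ as $n\to\infty$ uniformly in $T$, proving weak consistency under $k>2$.

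For strong consistency I would replace the first-moment bounds by $L^p$ bounds with $p:=\min\{q,k/2\}$, which satisfies $p>2$ precisely because $q>2$ and $k>4$. By convexity $\E\big(T_n^{-1}\sum_t|\log\hat\mu_t-\log\mu_t|\big)^p\le T_n^{-1}\sum_t\E|\log\hat\mu_t-\log\mu_t|^p$; on the bulk event this per-time term is $O(n^{-p/2})$ (Rosenthal, using $\E|X_{1,t}|^{2p}\le\E|X_{1,t}|^k<\infty$ for the variance part), and on the tail event H\"older with $p\le q$ and the bound $\P\le e^{-cn}$ makes it negligible. Markov's inequality then gives $\P(|\hat\theta_n-\theta|>\eta)=O(n^{-p/2})$ with $p/2>1$, and Borel--Cantelli yields a.s.\ convergence.

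Throughout, the main obstacle is the logarithmic singularity at the origin. Because no growth restriction on $T_n$ is assumed (the condition $T_n=O(\exp(\gamma n))$ of Theorem \ref{th1} is removed), one cannot dispatch the events $\{\hat\mu_t\le\varepsilon\}$, $\{\hat\sigma_t^2\le\varepsilon\}$ by a union bound over $t$; instead every estimate must be uniform in $t$, so the logarithm has to be genuinely integrated near $0$ rather than merely bounded on a high-probability set. Assumption {\bf A5} is tailored to exactly this: it forces $\E\log_-^q\hat\mu_t$ and $\E\log_-^q\hat\sigma_t^2$ to grow only polynomially in $n$, which against the exponentially small probabilities supplied by {\bf A3} closes the argument for any sequence $T_n$. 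The delicate point is to check that the resulting exponents are negative under $q>2$ together with $k>2$ (weak) and $k>4$ (strong).
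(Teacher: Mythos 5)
Your overall strategy is the same as the paper's: reduce to $\hat D_T-D_T\to0$ and $\hat N_T-N_T\to0$ using $N_T=D_T\theta$ and the uniform bounds on $\log\mu_t,\log\sigma_t^2,D_T^{-1}$ coming from {\bf A2}--{\bf A4}; split each time $t$ on the event $\{\hat\mu_t>\epsilon,\hat\sigma_t^2>\epsilon\}$; control the bulk by Lipschitz/H\"older bounds on the logarithm plus Rosenthal-type moment inequalities uniform in $t$; and control the tail events by combining the polynomial-in-$n$ bounds on $\E\log_-^q\hat\mu_t$ and $\E\log_-^q\hat\sigma_t^2$ (via $\hat\mu_t\ge X_{j,t}/n$ and $\hat\sigma_t^2\ge (X_{i,t}-X_{j,t})^2/(2n^2)$, i.e.\ {\bf A5}) with the exponential bounds of {\bf A3} through H\"older -- this is precisely the mechanism of the paper's Lemma \ref{prems} and inequalities \eqref{onyva}--\eqref{onyvava}, so the weak-consistency part is essentially the paper's proof.

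The one step that does not work as written is the strong-consistency bookkeeping with $p:=\min\{q,k/2\}$. If $q\le k/2$ then $p=q$ and your H\"older step on the tail event, $\E[\log_-^{p}\hat\mu_t\mathds 1_{\hat\mu_t\le\epsilon}]\le(\E\log_-^{q}\hat\mu_t)^{p/q}\P(\hat\mu_t\le\epsilon)^{1-p/q}$, produces no exponential factor at all, leaving only a bound of order $n$ (and similarly $n^2$ for the variance term), which is neither $o(1)$ nor summable; moreover, for the cross terms such as $|\log\hat\mu_t\log\hat\sigma_t^2|^p$ your Cauchy--Schwarz treatment requires $\log_-$-moments of order up to $2p$, which {\bf A5} does not supply once $2p>q$. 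Both problems disappear with a small repair: either choose $2<p<\min(q,k/2)$ (possible since $q>2$ and $k>4$) and distribute the H\"older exponents so that no factor needs a $\log_-$-moment beyond order $q$, or -- more simply, and this is what the paper does -- keep the tail-event contributions in $L^1$, where your own bounds are already $O(\mathrm{poly}(n)e^{-cn})$ and hence summable over $n$, so Markov plus Borel--Cantelli gives their a.s.\ convergence to $0$ without any $L^p$ estimate; reserve the $L^p$ ($2<p\le k/2$) bound, of order $n^{-p/2}$, for the bulk terms only. With that adjustment your argument matches the paper's proof in substance.
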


\subsection{A more specific result for integer-valued random variables}

For integer-valued random variables, the previous assumptions can be simplified. 
In particular, it is straightforward to show that Assumption {\bf A5}
is automatically satisfied. 
Assumption {\bf A3} becomes:

\begin{description}
\item[A3$'$ ]
We have $\inf_{t\geq 1}\P\left(X_{1,t}>0\right)>0$ and $\inf_{t\geq 1}\P\left(X_{1,t}\neq X_{2,t}\right)>0$. 
\end{description}

Then:

\begin{cor}
Given Assumptions {\bf A1-A2-A3$'$-A4}, $\hat{\theta}_n$ is weakly consistent if $k>2$ and strongly consistent if $k>4$ in {\bf A2}.
\end{cor}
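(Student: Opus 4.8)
The plan is to deduce the corollary directly from Theorem \ref{th2}, by checking that when the variables $X_{j,t}$ take values in $\N$, Assumption {\bf A3$'$} implies Assumption {\bf A3} and Assumption {\bf A5} holds automatically; Assumptions {\bf A1}, {\bf A2} and {\bf A4} are identical in both statements and require nothing further.

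First I would verify {\bf A3}. Fix any $\delta\in(0,1)$. Since $X_{1,t}$ is integer-valued, the event $\{X_{1,t}>\delta\}$ coincides with $\{X_{1,t}\geq 1\}=\{X_{1,t}>0\}$, so $p(\delta)=\inf_{t\geq 1}\P\left(X_{1,t}>\delta\right)=\inf_{t\geq 1}\P\left(X_{1,t}>0\right)>0$ by {\bf A3$'$}. Likewise $X_{1,t}-X_{2,t}$ is integer-valued, so $\{\vert X_{1,t}-X_{2,t}\vert>\delta\}=\{\vert X_{1,t}-X_{2,t}\vert\geq 1\}=\{X_{1,t}\neq X_{2,t}\}$, whence $q(\delta)=\inf_{t\geq 1}\P\left(X_{1,t}\neq X_{2,t}\right)>0$, again by {\bf A3$'$}. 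Thus {\bf A3} holds for this choice of $\delta$.

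Next I would check {\bf A5}. On the event $\{X_{1,t}>0\}$ we have $X_{1,t}\geq 1$, hence $\min(X_{1,t},1)=1$ and $\log_{-}(X_{1,t})=-\log\min(X_{1,t},1)=0$; therefore $\mathds{1}_{X_{1,t}>0}\cdot\log^q_{-} X_{1,t}=0$ almost surely for every $q>0$, so its expectation is $0<\infty$. Applying the same argument to the integer-valued variable $X_{1,t}-X_{2,t}$ on the event $\{X_{1,t}\neq X_{2,t}\}$ gives $\mathds{1}_{X_{1,t}\neq X_{2,t}}\cdot\log^q_{-}\vert X_{1,t}-X_{2,t}\vert=0$ a.s. Hence {\bf A5} holds trivially with, say, $q=3>2$.

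With {\bf A1}, {\bf A2}, {\bf A3}, {\bf A4} and {\bf A5} all in force, Theorem \ref{th2} yields weak consistency when $k>2$ and strong consistency when $k>4$, which is the claim. There is essentially no obstacle here: the whole point is the elementary observation that integrality forces $\log_{-}$ of the relevant strictly positive quantities to vanish identically, which simultaneously reduces {\bf A3} to {\bf A3$'$} and renders {\bf A5} vacuous; all the analytic effort has already been expended in proving Theorem \ref{th2}.
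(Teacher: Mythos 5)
Your proposal is correct and matches the paper's intended argument: the paper states (without writing out details) that for integer-valued data {\bf A5} is automatic and {\bf A3} reduces to {\bf A3$'$}, so the corollary follows from Theorem \ref{th2}, exactly as you argue. Your verification — taking $\delta\in(0,1)$ so that $\{X_{1,t}>\delta\}=\{X_{1,t}>0\}$ and $\{\vert X_{1,t}-X_{2,t}\vert>\delta\}=\{X_{1,t}\neq X_{2,t}\}$, and noting that $\log_{-}$ vanishes on integers $\geq 1$ so {\bf A5} holds trivially — is precisely the straightforward check the paper alludes to.
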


\section{Asymptotic normality of parameter estimates}\label{sec5}
{
We now derive asymptotic normality of our log-regression OLS estimators. To keep the setup as simple as possible, we will also assume independence in the time direction, i.e. for any $j\geq 1$, the $X_{j,1},X_{j,2},\ldots$ are assumed to be mutually independent. 
This assumption can be relaxed but 
this simple setup suffices to give insight into the
weak convergence of such estimates.}

Define $Y_{j,t}:=(X_{j,t},X_{j,t}^2)'$ 
and $m_t:=\E Y_{1,t}$ for $1\leq j\leq n$ and $1\leq t\leq T$.
Define the empirical version of $m_t$ by 
$$\hat{m}_t:=\frac{1}{n}\sum_{j=1}^nY_{j,t}.$$
The two components of $m_t$ and $\hat{m}_t$ will be denoted respectively by $m_{1,t},m_{2,t}$ and $\hat{m}_{1,t},\hat{m}_{2,t}$. 
In terms of 
our previous notations, we have $m_{1,t}=\mu_t$ and $\hat{m}_{1,t}=\hat{\mu}_t$.
In what follows, we will use both notations, depending on the context.

We also use a standard notation for weak convergence. When $(Z_n)_{n\geq 1}$ is a sequence of random vectors of $\R^2$ and $\nu$ is a probability measure on the Borel $\sigma$-field of $\R^2$, notation $Z_n\Rightarrow \nu$ means that $Z_n$ converges weakly to $\nu$.

In the following analysis, we are thinking of a function $\phi_{\theta}$ (with parameter $\theta\in \R^2$) from domain $(0,\infty)^2$ into range $\R^2$.
In the 2-dimensional argument $(x,y)$ of $\phi_{\theta}$, $x$ represents the empirical 
or expected spatial mean of the observations $X_{j,t}$ and $y$ represents 
the empirical or expected spatial mean of the squared observations $X_{j,t}^2$.
With either empirical or expected mean, $y-x^2$ represents the variance, i.e., the mean squared observation minus the squared mean observation.
The two parameters $\theta_1=\log(\alpha),\ \theta_2=\beta$ of $\phi_{\theta}$ correspond to the intercept and the slope of the log-log form
\eqref{loglogTL} of TL.
The first component $\phi_{1,\theta}(x,y):=\log(y-x^2)-\theta_1-\theta_2\log(x)$ of $\phi_{\theta}$ is the deviation from \eqref{loglogTL},
conditional on the two parameter values. 
It corresponds to the first component of the normal equations of ordinary least squares estimation, \eqref{defest} and
\eqref{defD}.
The second component $\phi_{2,\theta}(x,y):=\log(x)\log(y-x^2)-\theta_1\log(x)-\theta_2(\log(x))^2$
similarly corresponds to the second component of the normal equations of ordinary least squares.

Setting $\theta=\left(\log(\alpha),\beta\right)'$, the estimation error for the log-regression parameters has the following decomposition:

\begin{eqnarray*}
\hat{\theta}-\theta&=&\hat{D}_T^{-1}\hat{N}_T-D_T^{-1}N_T\\
&=& \hat{D}_T^{-1}\left(\hat{N}_T-N_T+\left(D_T-\hat{D}_T\right)\theta\right).
\end{eqnarray*}
This decomposition implies that 
\begin{equation}\label{UT}
U_T:=\hat{D}_T(\hat{\theta}-\theta)=\frac{1}{T}\sum_{t=1}^T\left(\phi_{\theta}(\hat{m}_t)-\phi_{\theta}(m_t)\right),
\end{equation}
where
\begin{eqnarray*}
\phi_{\theta}(x,y)&:=&\begin{pmatrix} \phi_{1,\theta}(x,y)\\\phi_{2,\theta}(x,y)\end{pmatrix},\quad x>0,y>0.\\
\end{eqnarray*}
{
The idea is simply to make a Taylor expansion of the function $\phi_{\theta}$ with respect to its two arguments representing the spatial means of the two samples $(X_{j,t})_{1\leq j\leq n}$ and $(X^2_{j,t})_{1\leq j\leq n}$}
respectively, computed at any time point $1\leq t\leq T$. We will then approximate $U_T$ with its first-order Taylor expansion
\begin{equation}\label{VT}
V_T=\frac{1}{T}\sum_{t=1}^T J_{\phi_{\theta}}(m_t)\cdot \left(\hat{m}_t-m_t\right),
\end{equation}
where $J_{\phi_{\theta}}(x,y)$ denotes the Jacobian matrix of $\phi_{\theta}$ at point $(x,y)$.
Intuitively, the convergence rate for $V_T$ is $\sqrt{nT}$ since $V_T$ contains a double summation with $n\times T$ centered terms.
We then define 
\begin{equation}\label{stablevar}
\Gamma_T:=\v(\sqrt{nT} V_T)=\frac{1}{T}\sum_{t=1}^T J_{\phi_{\theta}}(m_t)\cdot \v\left(Y_{1,t}\right)\cdot J_{\phi_{\theta}}(m_t)'
\end{equation}
and $C_T:=\Gamma_T^{-1/2}$. It is expected that $\sqrt{nT}C_TV_T\Rightarrow \mathcal{N}_2(0,I_2)$.
The remaining step is to investigate the asymptotic behavior of $\sqrt{nT}\left(U_T-V_T\right)$. 
When $T/n\rightarrow 0$, one can show that under the assumptions of Theorem \ref{th1}, $\sqrt{nT}\left(U_T-V_T\right)=o_{\P}(1)$. 
However, when $T/n\rightarrow s>0$,
this is not the case, but under the assumptions of Theorem \ref{th1}, one can show that
$$\sqrt{nT}\left(U_T-V_T\right)=\sqrt{\frac{n}{T}}\sum_{t=1}^T\begin{pmatrix}(\hat{m}_t-m_t)'\nabla^{(2)}\phi_{1,\theta}(m_t)(\hat{m}_t-m_t)\\(\hat{m}_t-m_t)'\nabla^{(2)}\phi_{2,\theta}(m_t)(\hat{m}_t-m_t)\end{pmatrix}+o_{\P}(1),$$
where for a twice-differentiable function $f:\R^2\rightarrow \R$, we denote by $\nabla^{(2)}f(m)$ the Hessian matrix of $f$ at point $m$.
The first right-hand term in the previous equality 
does not vanish asymptotically and has an expectation of order $\sqrt{T/n}$. Hence this non-negligible bias has to be included for deriving the asymptotic distribution of $\hat{\theta}$.

\begin{theo}\label{NA}
Let $\lambda_{-}(\Gamma_T)$ denote the smallest eigenvalue of $\Gamma_T$ defined in \eqref{stablevar}.
Suppose that Assumptions {\bf A1-A2-A3-A4} hold with $k>4$ in {\bf A2} {
and for any integer $j\geq 1$, $(X_{j,t})_{t\geq 1}$ is a sequence of independent random variables. Assume furthermore} that $\varliminf_n \lambda_{-}(\Gamma_T)>0$. 
\begin{enumerate}
\item
If $T_n/n\rightarrow 0$, then 
$$\sqrt{nT}C_T\hat{D}_T\left(\hat{\theta}-\theta\right)\Rightarrow \mathcal{N}_2\left(0,I_2\right).$$
\item
If $T_n/n=O(1)$, then
$$\sqrt{nT}C_T\hat{D}_T\left(\hat{\theta}-\theta\right)-\frac{1}{2}\sqrt{\frac{T}{n}}C_T E_T\Rightarrow \mathcal{N}_2\left(0,I_2\right),$$
where $E_T:=\frac{1}{T}\sum_{t=1}^TG_t$ and
$$G_t:=\left(\mbox{ Tr }(\nabla^{(2)}\phi_{1,\theta}(m_t)\v(Y_{1,t})),\mbox{ Tr }(\nabla^{(2)}\phi_{2,\theta}(m_t)\v(Y_{1,t}))\right)'.$$
\end{enumerate}
\end{theo}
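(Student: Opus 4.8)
\emph{Overall strategy.} The plan is to work from the identity \eqref{UT}, $U_T=\hat D_T(\hat\theta-\theta)=\tfrac1T\sum_{t=1}^T(\phi_\theta(\hat m_t)-\phi_\theta(m_t))$, and to determine the weak limit of $\sqrt{nT}\,C_T U_T$; the theorem then follows by Slutsky, since by the consistency results (Theorems \ref{th1}--\ref{th2}, whose hypotheses are subsumed here) one has $\hat D_T-D_T\to 0$ in probability, Assumption {\bf A4} keeps $D_T$ uniformly invertible, and the standing hypothesis $\varliminf_n\lambda_-(\Gamma_T)>0$ together with {\bf A2}--{\bf A3} keeps $\Gamma_T$ and hence $C_T$ bounded above and below (so $\hat D_T$ on the left of the statement may be replaced by $D_T$ up to $o_\P(1)$). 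A second-order Taylor expansion of each component $\phi_{i,\theta}$ about $m_t$,
\begin{equation*}
\phi_{\theta}(\hat m_t)-\phi_{\theta}(m_t)=J_{\phi_\theta}(m_t)(\hat m_t-m_t)+\tfrac12\begin{pmatrix}(\hat m_t-m_t)'\nabla^{(2)}\phi_{1,\theta}(m_t)(\hat m_t-m_t)\\(\hat m_t-m_t)'\nabla^{(2)}\phi_{2,\theta}(m_t)(\hat m_t-m_t)\end{pmatrix}+R_t,
\end{equation*}
splits $U_T$ into the linear term $V_T$ of \eqref{VT}, a quadratic term, and a cubic remainder $R_t$. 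I will show that $\sqrt{nT}\,C_T V_T\Rightarrow\mathcal N_2(0,I_2)$, that the scaled quadratic term equals $\tfrac12\sqrt{T/n}\,C_T E_T+o_\P(1)$, and that the scaled remainder is $o_\P(1)$. Since $\sqrt{T/n}\,\|C_T E_T\|$ is bounded when $T_n/n=O(1)$ and vanishes when $T_n/n\to0$, this delivers both assertions simultaneously.

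\emph{The Gaussian term.} Using independence across sites ({\bf A1}) and the added independence across time, write $\sqrt{nT}\,V_T=(nT)^{-1/2}\sum_{j=1}^n\sum_{t=1}^T\zeta_{j,t}$ with $\zeta_{j,t}:=J_{\phi_\theta}(m_t)(Y_{j,t}-m_t)$, a triangular array of mutually independent, mean-zero $\R^2$-vectors whose total covariance is exactly $\Gamma_T$ by \eqref{stablevar}. Assumption {\bf A3} with {\bf A2} bounds $\mu_t=m_{1,t}$ and $\sigma_t^2=m_{2,t}-m_{1,t}^2$ away from $0$ and away from $\infty$ uniformly in $t$, hence the entries of $J_{\phi_\theta}(m_t)$ are uniformly bounded; combined with $\sup_t\E\|Y_{1,t}\|^{k/2}<\infty$ (from {\bf A2}, as $\|Y_{1,t}\|^{k/2}\lesssim |X_{1,t}|^{k/2}+|X_{1,t}|^{k}$) and $\varliminf_n\lambda_-(\Gamma_T)>0$, the Lyapunov condition of exponent $k/2>2$ holds for $C_T\zeta_{j,t}$. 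The multivariate Lyapunov CLT for triangular arrays, via the Cram\'er--Wold device, then gives $\sqrt{nT}\,C_T V_T\Rightarrow\mathcal N_2(0,I_2)$.

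\emph{Quadratic and cubic parts.} On the event where $\hat m_t$ lies in the domain of $\phi_\theta$, i.e. $\hat\mu_t>0$ and $\hat\sigma_t^2>0$, the expansion above holds with $|R_{i,t}|\le C\|\hat m_t-m_t\|^3$ times a bound on the third-order partial derivatives of $\phi_{i,\theta}$ along $[m_t,\hat m_t]$; introduce the localizing events $\mathcal E_{n,t}=\{\hat\mu_t\ge\delta_0,\ \hat\sigma_t^2\ge\delta_0,\ \hat m_{2,t}\le M_0\}$, on which these third derivatives are bounded. Using the uniform deviation bounds for $\hat\mu_t$ and $\hat\sigma_t^2$ staying away from $0$ (consequences of {\bf A3}, of the kind established for Theorem \ref{th1} in Sections \ref{sec8}--\ref{sec9}) and a moment bound for the upper tail of $\hat m_{2,t}$, one gets $\sup_t\P(\mathcal E_{n,t}^c)\le e^{-cn}$ for suitable $\delta_0,M_0,c>0$; since only $T_n=O(n)$ time points enter, the contribution of the events $\mathcal E_{n,t}^c$ (even allowing the at most logarithmic blow-up of the convention-corrected logarithms there) is $o_\P(1)$. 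On $\mathcal E_{n,t}$: (a) for the cubic part, split on $\{\|\hat m_t-m_t\|\le\eta_n\}$ with $\eta_n\to0$ slowly, where $|R_{i,t}|\le C\eta_n\|\hat m_t-m_t\|^2$ gives expectation $O(\eta_n\sqrt{T/n})\to0$, while the complement has probability $O((\eta_n\sqrt n)^{-k/2})$ by a Rosenthal bound, negligible after summing over the $O(n)$ times; (b) the centered quadratic fluctuation $\frac{\sqrt{nT}}{T}\sum_t\bigl[\tfrac12(\hat m_t-m_t)'\nabla^{(2)}\phi_{i,\theta}(m_t)(\hat m_t-m_t)-\tfrac1{2n}\mathrm{Tr}(\nabla^{(2)}\phi_{i,\theta}(m_t)\v(Y_{1,t}))\bigr]$ is $o_\P(1)$ in $L^2$, its summands being independent over $t$, mean zero, of variance $O(n^{-2})$, so the whole sum has variance $O(n^{-1})$; (c) what remains of the quadratic part is precisely $\tfrac12\sqrt{T/n}\,E_{i,T}$ by the definitions of $E_T$ and $G_t$. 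Collecting (a)--(c) yields $\sqrt{nT}\,C_T U_T=\sqrt{nT}\,C_T V_T+\tfrac12\sqrt{T/n}\,C_T E_T+o_\P(1)$, and the claimed limits follow.

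\emph{Main obstacle.} The delicate point is not the CLT nor the bookkeeping of orders, but legitimizing the Taylor expansion despite the logarithmic singularities of $\phi_\theta$ along the boundary where $\hat\mu_t$ or $\hat\sigma_t^2$ vanishes: one must show that the atypical events on which a spatial mean or a spatial variance is abnormally small contribute nothing, and this \emph{uniformly over the growing number $T_n$ of time points}. This is exactly where {\bf A3} (no Dirac cluster points) and {\bf A2} enter, via uniform-in-$t$ deviation inequalities analogous to those underpinning Theorems \ref{th1}--\ref{th2}; the restriction $T_n/n=O(1)$ makes them more than strong enough. A secondary nuisance is that only $k>4$ moments are available, so the cubic Taylor remainder cannot be bounded by a third moment of $\hat m_t-m_t$ and instead needs the truncation-at-$\eta_n$ argument of step (a).
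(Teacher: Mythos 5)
Your overall skeleton coincides with the paper's: start from $U_T=\hat D_T(\hat\theta-\theta)=\frac1T\sum_t(\phi_\theta(\hat m_t)-\phi_\theta(m_t))$, prove a CLT for the linearization $V_T$, extract the bias $\frac12\sqrt{T/n}\,C_TE_T$ from the quadratic term, and kill the remainder on a localization event built from {\bf A3}. However, two of your steps are not justified under the stated moment assumption ($k>4$ only), and both occur exactly where the paper deploys different devices. First, the claim $\sup_t\P(\mathcal E_{n,t}^c)\le e^{-cn}$ is false for the upper-tail component $\{\hat m_{2,t}>M_0\}$: the Hoeffding-type exponential bounds of Lemma \ref{prems} cover only the lower-tail events $\{\hat\mu_t\le\delta_0\}$ and $\{\hat\sigma^2_t\le\delta_0\}$, while the upper tail of $\hat m_{2,t}$ is controlled only at the polynomial rate $O(n^{-k/4})$ by Markov and Lemma \ref{moment}. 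Since you multiply by $\sqrt{n/T}$ and sum over $T\asymp n$ time points, discarding these events requires a H\"older argument that exploits $k/4>1$ (a crude Cauchy--Schwarz would need $k>8$), and on the excised event the Taylor expansion is unavailable, so the raw differences $\phi_\theta(\hat m_t)-\phi_\theta(m_t)$ must be bounded directly through their logarithmic growth; none of this appears in your sketch. The paper needs no upper truncation at all: it keeps the second-order Lagrange remainder (Hessian at an intermediate point of $[m_t,\hat m_t]$), bounds those Hessian entries by explicit polynomials in $|\hat m_{i,t}-m_{i,t}|$ with small exponents, and lets Lemma \ref{moment} absorb them, so third derivatives and upper cut-offs never enter.

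Second, your step (b) asserts that the centered quadratic summands have variance $O(n^{-2})$. The variance of $(\hat m_t-m_t)'\nabla^{(2)}\phi_{i,\theta}(m_t)(\hat m_t-m_t)$ involves fourth moments of $Y_{1,t}$, i.e.\ roughly eighth moments of $X_{1,t}$, which {\bf A2} with $k>4$ does not provide. The paper's Lemma \ref{intern} is designed precisely to avoid this: it splits the quadratic form into its diagonal part, treated in $L^{1+\kappa}$ (only $\sup_t\E X_{1,t}^{4(1+\kappa)}<\infty$ is needed), and the off-diagonal degenerate $U$-statistic across sites, treated in $L^2$. Your computation can be rescued on the truncated event (each summand then has variance $O(n^{-k/4})$, giving a total of order $n^{1-k/4}=o(1)$), but that rescue leans on the unproved truncation step above. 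So the architecture of your proof is right and essentially parallel to the paper's, but the two moment-sensitive steps need either these repairs or the paper's devices (Lagrange remainder with H\"older-$\kappa$ control of the Hessian, and Lemma \ref{intern}) before the argument closes.
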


{
We comment on these results.}
\begin{enumerate}
\item
The existence of a lower bound for the eigenvalues of $\Gamma_T$ is always difficult to check 
for a specific sequence of marginal distributions, though quite natural. 
We provide sufficient conditions as well as a way to check it for Poisson random variables in Lemma \ref{defpos} in the Appendix.
\item
When $T_n/n\to k>0$ as $n\rightarrow \infty$,
the normalized statistics 
$\sqrt{nT}C_T\hat{D}_T(\hat{\theta}-\theta)$ in Theorem \ref{NA}.1 become asymptotically biased. 
Neither the matrix $C_T$ nor the vector 
$E_T$ converges in general when $n\rightarrow \infty$ 
but these sequences 
remain bounded under our assumptions. This non-centering of the asymptotic distribution is atypical 
for regular statistical models and is due here to the fact that when the array of data remains balanced, 
there are not enough replicates of the time series to make the second-order term in the Taylor expansion of the estimates negligible.
The asymptotics when $T/n=O(1)$ (Theorem \ref{NA}.2) also hold when $T/n=o(1)$, since 
then $\sqrt{T/n}C_T E_T=o(1)$ becomes negligible.
\item  {
Figure \ref{biasst}  shows the effect of neglecting the bias for simulated Poisson data when the number of sites equals  the number of time points. 
Without bias correction, the probability distribution of the two coordinates of the normalized statistics $\sqrt{nT}\hat{C}_T\hat{D}_T(\hat{\theta}-\theta)$ are not well approximated by that of a standard Gaussian.}

\item
To use these results in practice for constructing tests or confidence bands, it is necessary to estimate 
$\Gamma_T$ and $E_T$ from the data.
This can be done with no additional assumptions: 
\end{enumerate}

\begin{cor}\label{CB}

Suppose that all the assumptions of Theorem \ref{NA} hold true with either the asymptotics $T/n=o(1)$ or $T/n=O(1)$. 
Set $\Sigma_t:=\v\left(Y_{1,t}\right)$ and 
$$\hat{\Sigma}_t:=\frac{1}{n}\sum_{j=1}^n Y_{j,t}Y_{j,t}'-\overline{Y}_{n,t}\overline{Y}_{n,t}',\quad \overline{Y}_{n,t}:=\frac{1}{n}\sum_{j=1}^nY_{j,t},$$
$$\hat{\Gamma}_T:=\frac{1}{T}\sum_{t=1}^T J_{\phi_{\hat{\theta}}}(\hat{m}_t)\hat{\Sigma}_tJ_{\phi_{\hat{\theta}}}(\hat{m}_t)',\quad \hat{C}_T:=\hat{\Gamma}_T^{-1/2},$$
$$\hat{E}_T:=\frac{1}{T}\sum_{t=1}^T \hat{G}_t,$$
$$\hat{G}_t:=\left(\mbox{Tr}(\nabla^{(2)}\phi_{1,\hat{\theta}}(\hat{m}_t)\hat{\Sigma}_t),\mbox{Tr}(\nabla^{(2)}\phi_{2,\hat{\theta}}(\hat{m}_t)\hat{\Sigma}_t)\right)'.$$
Then
\begin{equation}\label{estimee}
\sqrt{nT}\hat{C}_T\hat{D}_T\left(\hat{\theta}-\theta\right)-\frac{1}{2}\sqrt{\frac{T}{n}}\hat{C}_T \hat{E}_T\Rightarrow \mathcal{N}_2\left(0,I_2\right).
\end{equation}
Define $H_n:=\hat{D}_T^{-1}\hat{\Gamma}_T\hat{D}_T^{-1}$ and  
$$M_n:=\left\{\begin{array}{c} 0 \mbox{ if } T/n=o(1),\\ \sqrt{\frac{T}{n}}\hat{D}_T^{-1}\hat{E}_T \mbox{ if } T/n=O(1).\end{array}\right.$$
Let $q_{\beta}$ denote the quantile of order $\beta$ of the standard Gaussian distribution. 
Then the interval 
$$I_{i,n}=\left[\hat{\theta}_i-\frac{\sqrt{H_n(i,i)}q_{1-\alpha/2}+M_n(i)}{\sqrt{nT}},\hat{\theta}_i+\frac{\sqrt{H_n(i,i)}q_{1-\alpha/2}-M_n(i)}{\sqrt{nT}}\right]$$
is a confidence interval with asymptotic level $1-\alpha$ for $i=1,2$. 
\end{cor}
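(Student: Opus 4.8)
The plan is to reduce everything to Theorem \ref{NA} plus a set of consistency statements for the plug-in estimators $\hat{\Sigma}_t$, $\hat{\Gamma}_T$, $\hat{C}_T$, $\hat{E}_T$, $\hat{D}_T$. The first step is to establish that replacing the population quantities $\Sigma_t=\v(Y_{1,t})$, $\Gamma_T$, $C_T$, $E_T$ by their hatted counterparts does not change the limit in \eqref{estimee}. For this I would show that $\frac1T\sum_{t=1}^T\|\hat{\Sigma}_t-\Sigma_t\|\to 0$ in probability (uniformly enough), using {\bf A1}, the moment bound {\bf A2} with $k>4$ (so that the entries of $Y_{j,t}Y_{j,t}'$ have a finite moment of order $>2$, giving a law of large numbers for the fourth-moment array), and the technical Lemmas in Section \ref{sec9}. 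Combined with $\hat{m}_t\to m_t$ and $\hat{\theta}\to\theta$ from Theorem \ref{th1} (valid since $k>4$), the continuity of the maps $(\theta,x,y,\Sigma)\mapsto J_{\phi_\theta}(x,y)\,\Sigma\,J_{\phi_\theta}(x,y)'$ and $(\theta,x,y,\Sigma)\mapsto \mathrm{Tr}(\nabla^{(2)}\phi_{i,\theta}(x,y)\Sigma)$ on the relevant domain (bounded away from the singular set $\{y=x^2\}$ and $\{x=0\}$, which is where {\bf A3}, {\bf A4} and the eigenvalue lower bound on $\Gamma_T$ enter) yields $\|\hat{\Gamma}_T-\Gamma_T\|\to 0$ and $\|\hat{E}_T-E_T\|\to 0$ in probability.

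Once $\hat{\Gamma}_T-\Gamma_T=o_\P(1)$ and $\varliminf_n\lambda_-(\Gamma_T)>0$ with $\Gamma_T$ bounded above, the matrix inverse and square root are continuous there, so $\hat{C}_T=\hat{\Gamma}_T^{-1/2}=C_T+o_\P(1)$ and $\hat{C}_T$ stays bounded. Then in the target expression $\sqrt{nT}\hat{C}_T\hat{D}_T(\hat\theta-\theta)-\tfrac12\sqrt{T/n}\,\hat{C}_T\hat{E}_T$ I would write $\hat{C}_T=C_T+(\hat{C}_T-C_T)$ and $\hat{E}_T=E_T+(\hat{E}_T-E_T)$, and control the cross terms: the term $(\hat{C}_T-C_T)\sqrt{nT}\hat{D}_T(\hat\theta-\theta)$ is $o_\P(1)$ because $\sqrt{nT}\hat{D}_T(\hat\theta-\theta)$ is tight (Theorem \ref{NA} shows $\sqrt{nT}C_T\hat{D}_T(\hat\theta-\theta)-\tfrac12\sqrt{T/n}C_TE_T$ converges, $C_T^{-1}=\Gamma_T^{1/2}$ is bounded, and $\sqrt{T/n}C_TE_T$ is bounded under $T/n=O(1)$), multiplied by a factor going to $0$; and the term $\tfrac12\sqrt{T/n}\,\hat{C}_T(\hat{E}_T-E_T)$ is $o_\P(1)$ since $\sqrt{T/n}=O(1)$, $\hat{C}_T=O_\P(1)$, $\hat{E}_T-E_T=o_\P(1)$. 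This reduces \eqref{estimee} to the statement of Theorem \ref{NA}.2, which holds verbatim (and collapses to \ref{NA}.1 when $T/n=o(1)$ since then $\sqrt{T/n}C_TE_T=o(1)$).

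For the confidence interval I would unwind the normalisation. From \eqref{estimee}, writing $\hat{D}_T(\hat\theta-\theta)=Z$, we have $\sqrt{nT}\,\hat{C}_T Z-\tfrac12\sqrt{T/n}\,\hat{C}_T\hat{E}_T\Rightarrow\mathcal N_2(0,I_2)$, hence $\sqrt{nT}\,Z-\tfrac12\sqrt{T/n}\,\hat{E}_T\Rightarrow\mathcal N_2(0,\Gamma_T)$ in the Slutsky sense (using $\hat{C}_T^{-1}=\hat{\Gamma}_T^{1/2}\to$ the same limiting behaviour as $\Gamma_T^{1/2}$), and therefore $\sqrt{nT}\,\hat{D}_T^{-1}(\sqrt{nT}Z-\tfrac12\sqrt{T/n}\hat E_T)=nT(\hat\theta-\theta)\cdot\tfrac{1}{\sqrt{nT}}\cdots$ — more directly, multiplying by $\hat{D}_T^{-1}$ gives $\sqrt{nT}(\hat\theta-\theta)-\tfrac12\sqrt{T/n}\hat{D}_T^{-1}\hat{E}_T=\sqrt{nT}(\hat\theta-\theta)-M_n$, whose coordinatewise law is asymptotically $\mathcal N(0,H_n(i,i))$ with $H_n=\hat{D}_T^{-1}\hat{\Gamma}_T\hat{D}_T^{-1}$. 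Solving the event $|\sqrt{nT}(\hat\theta_i-\theta_i)-M_n(i)|\le \sqrt{H_n(i,i)}\,q_{1-\alpha/2}$ for $\theta_i$ produces exactly the interval $I_{i,n}$, and its asymptotic coverage is $1-\alpha$.

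The main obstacle I expect is the first step: proving $\frac1T\sum_t\|\hat{\Sigma}_t-\Sigma_t\|=o_\P(1)$ together with $\frac1T\sum_t\|J_{\phi_{\hat\theta}}(\hat m_t)-J_{\phi_\theta}(m_t)\|=o_\P(1)$ \emph{uniformly in $t$} in the regime $T=T_n\to\infty$, where $t$ ranges over a growing index set and the marginal distributions $X_{1,t}$ vary with $t$. This needs the $\log_-$-type control of Section \ref{sec9} to rule out the contribution of rare small values of $\hat\mu_t$ or $\hat\sigma_t^2$ (exactly the issue {\bf A3}, {\bf A5}/{\bf A3$'$} are designed to handle), a maximal inequality for the fourth-moment empirical array under {\bf A2} with $k>4$, and care that the Hessians $\nabla^{(2)}\phi_{i,\theta}$ — which blow up near $y=x^2$ — are evaluated at points staying in a region where, with high probability, $\hat m_t$ is bounded away from that set uniformly in $t$. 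Once these uniform-in-$t$ estimates are in hand, the rest is Slutsky-type bookkeeping.
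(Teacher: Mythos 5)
Your plan follows essentially the same route as the paper: your first step is the paper's Lemma \ref{conver} (consistency of $\hat{\Sigma}_t$, $\hat{\Gamma}_T$, $\hat{C}_T$, $\hat{E}_T$, proved on the event $E_n(\epsilon)$ of Lemma \ref{prems} together with the moment bounds of Corollary \ref{momcons}); your second step is the paper's decomposition of $\hat{A}_n W_n-\hat{B}_n$ with $\hat{A}_n=\hat{C}_T\hat{D}_T$, $A_n=C_T\hat{D}_T$, $W_n=\sqrt{nT}(\hat{\theta}-\theta)$, which reduces \eqref{estimee} to Theorem \ref{NA}; and your unwinding step corresponds to the paper's Lemma \ref{convuni2}. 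You also correctly identify the uniform-in-$t$ control near the singular set as the main technical burden of the first step.

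The one genuine gap is in your last step. You assert that $\sqrt{nT}Z-\tfrac12\sqrt{T/n}\hat{E}_T\Rightarrow\mathcal{N}_2(0,\Gamma_T)$ ``in the Slutsky sense'' and that $\sqrt{nT}(\hat{\theta}-\theta)-M_n$ is coordinatewise asymptotically $\mathcal{N}(0,H_n(i,i))$. Since $\Gamma_T$, $D_T$ (hence $H_n$) depend on $n$, are random in the hatted versions, and are only bounded with bounded inverses — they need not converge under the paper's assumptions — a statement of weak convergence to $\mathcal{N}_2(0,\Gamma_T)$ is not well defined and the usual Slutsky/continuous-mapping argument does not apply directly. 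This is precisely why the paper proves Lemma \ref{convuni} (if $Z_n\Rightarrow\mathcal{N}_2(0,I_2)$ and $u_n$ are deterministic unit vectors, then $u_n'Z_n\Rightarrow\mathcal{N}(0,1)$, via compactness of the unit sphere and a subsequence argument) and then Lemma \ref{convuni2}, which transfers this to the random directions $\hat{v}_{n,i}=(\hat{A}_n^{-1})'e_i$ and shows $\bigl(W_n(i)-m_n(i)\bigr)/\sqrt{H_n(i,i)}\Rightarrow\mathcal{N}(0,1)$; only then does solving the quantile inequality for $\theta_i$ give $I_{i,n}$ with asymptotic level $1-\alpha$. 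You need to supply this (or an equivalent) argument rather than invoke Slutsky. A minor further point: you identify $\tfrac12\sqrt{T/n}\hat{D}_T^{-1}\hat{E}_T$ with $M_n$, whereas $M_n$ as defined carries no factor $\tfrac12$; this factor-of-two mismatch between \eqref{estimee} and the centering of $I_{i,n}$ is present in the paper's own proof as well (it defines $\hat{B}_n$ without the $\tfrac12$), so it is not specific to your argument, but it should be resolved consistently when you write the details.
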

In practice, it is difficult to know which regime is the most appropriate. However, if $T/n=o(1)$, we have $M_n=o(1)$, under our assumptions. Then the bias correction can still be used in this case, though it is asymptotically negligible.  
Moreover, even when $T/n$ seems to be small, the bias correction $M_n$ depends on some empirical averages and can be non-negligible. 
This is why we recommend to use such a bias correction whatever the context, or at least to compute both confidence intervals and to see how they differ from each other.

\begin{figure}[H]
\begin{center}
\includegraphics[width=7cm,height=7cm]{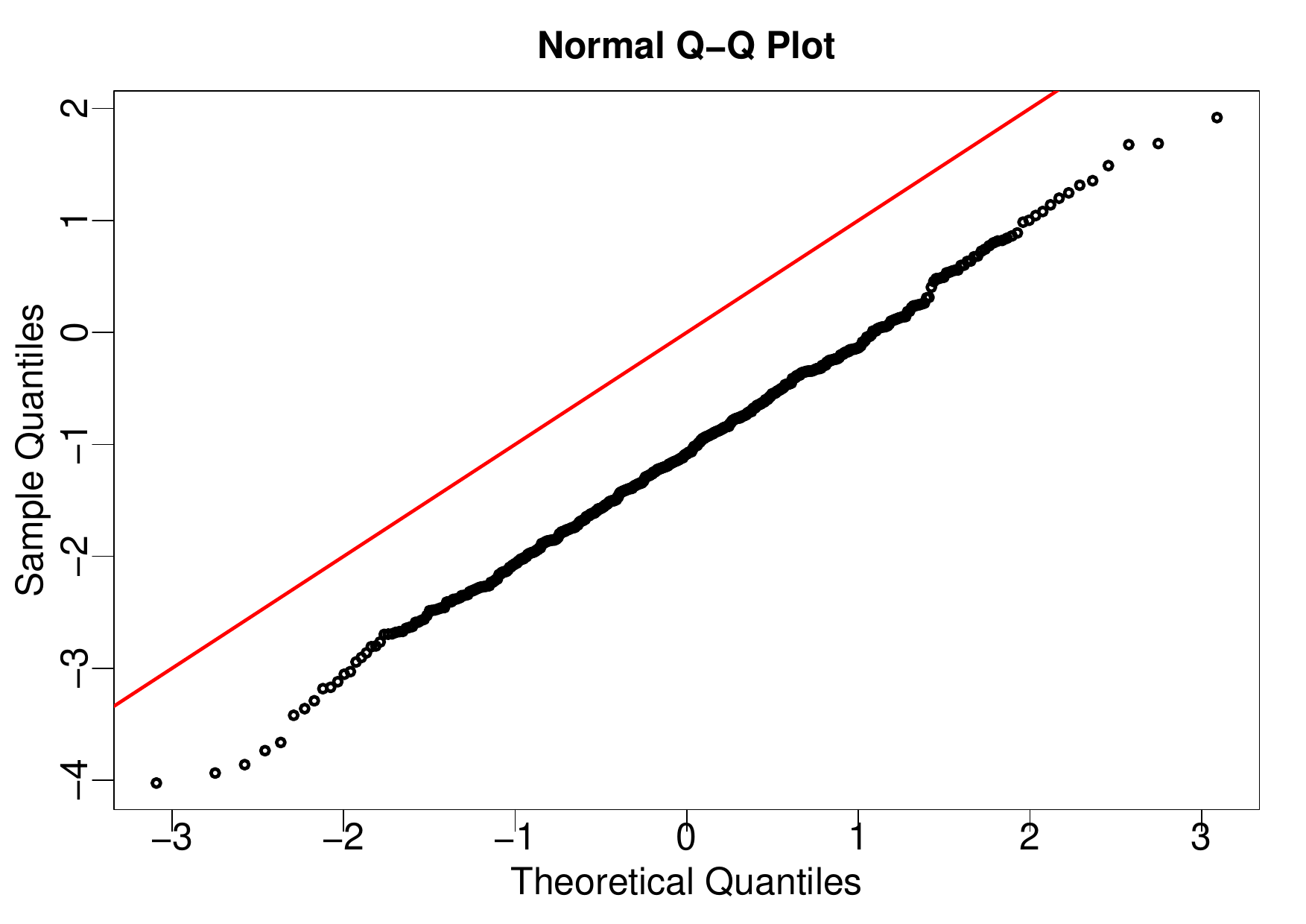}
\includegraphics[width=7cm,height=7cm]{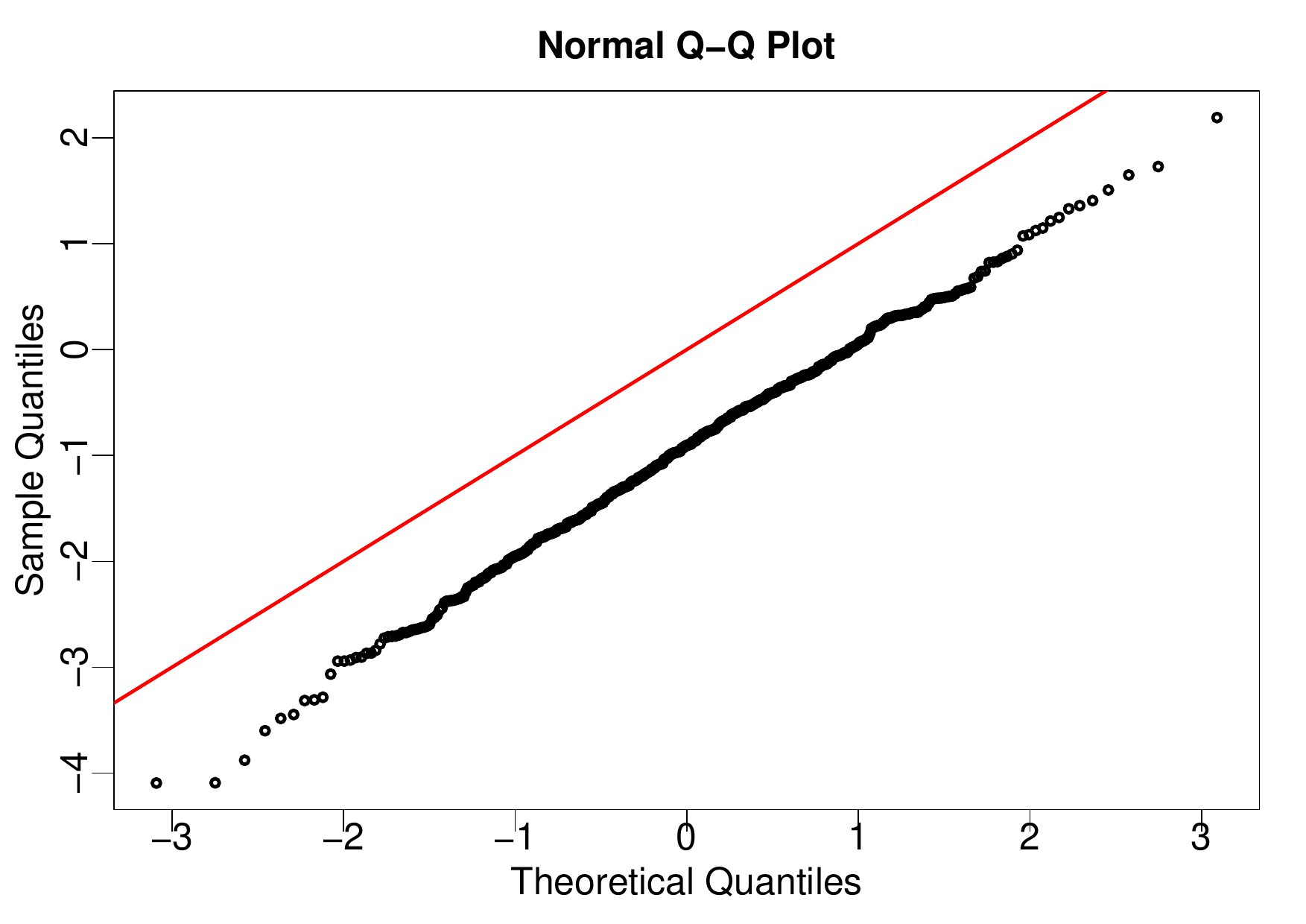}
\end{center}
\begin{center}
\includegraphics[width=7cm,height=7cm]{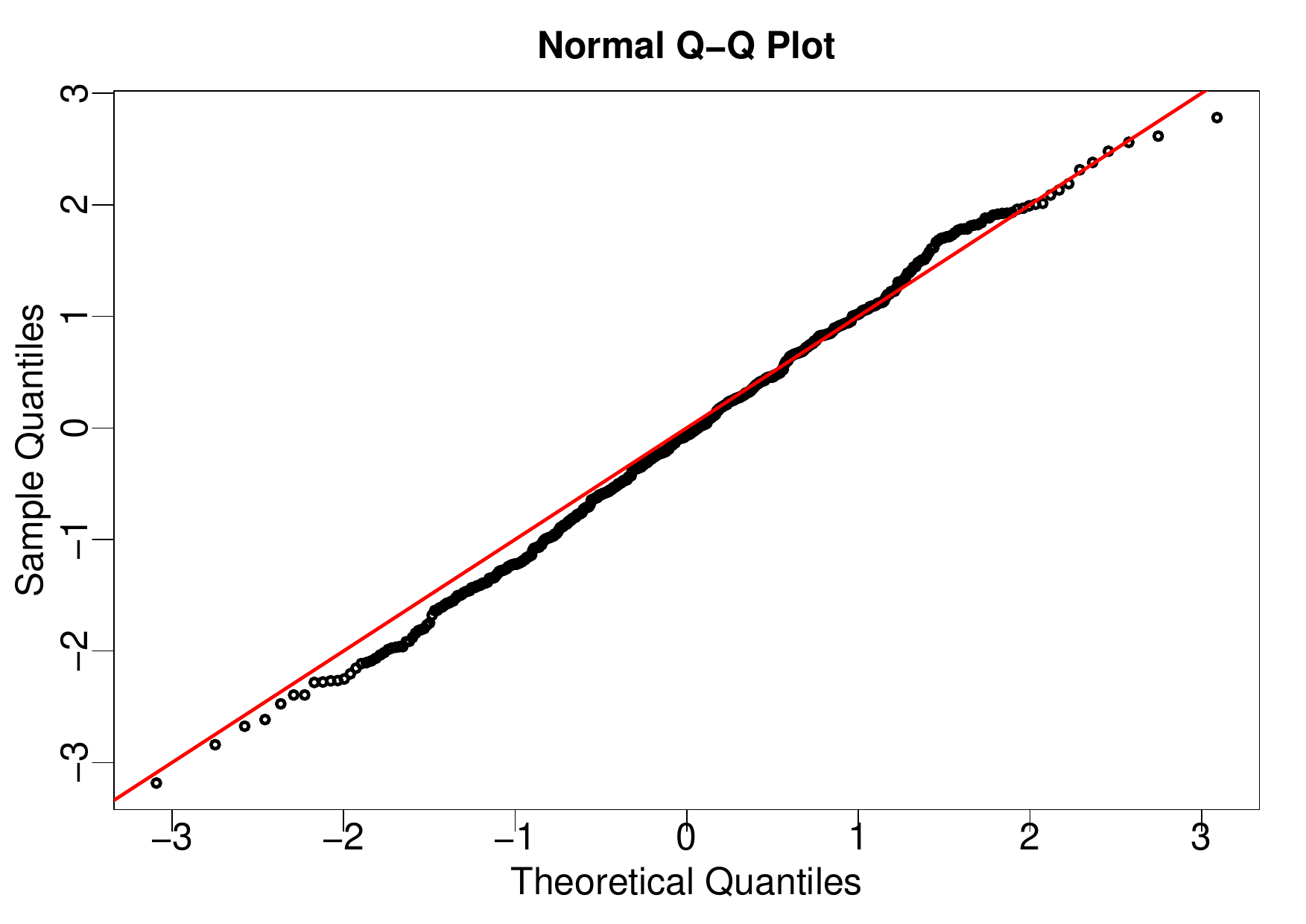}
\includegraphics[width=7cm,height=7cm]{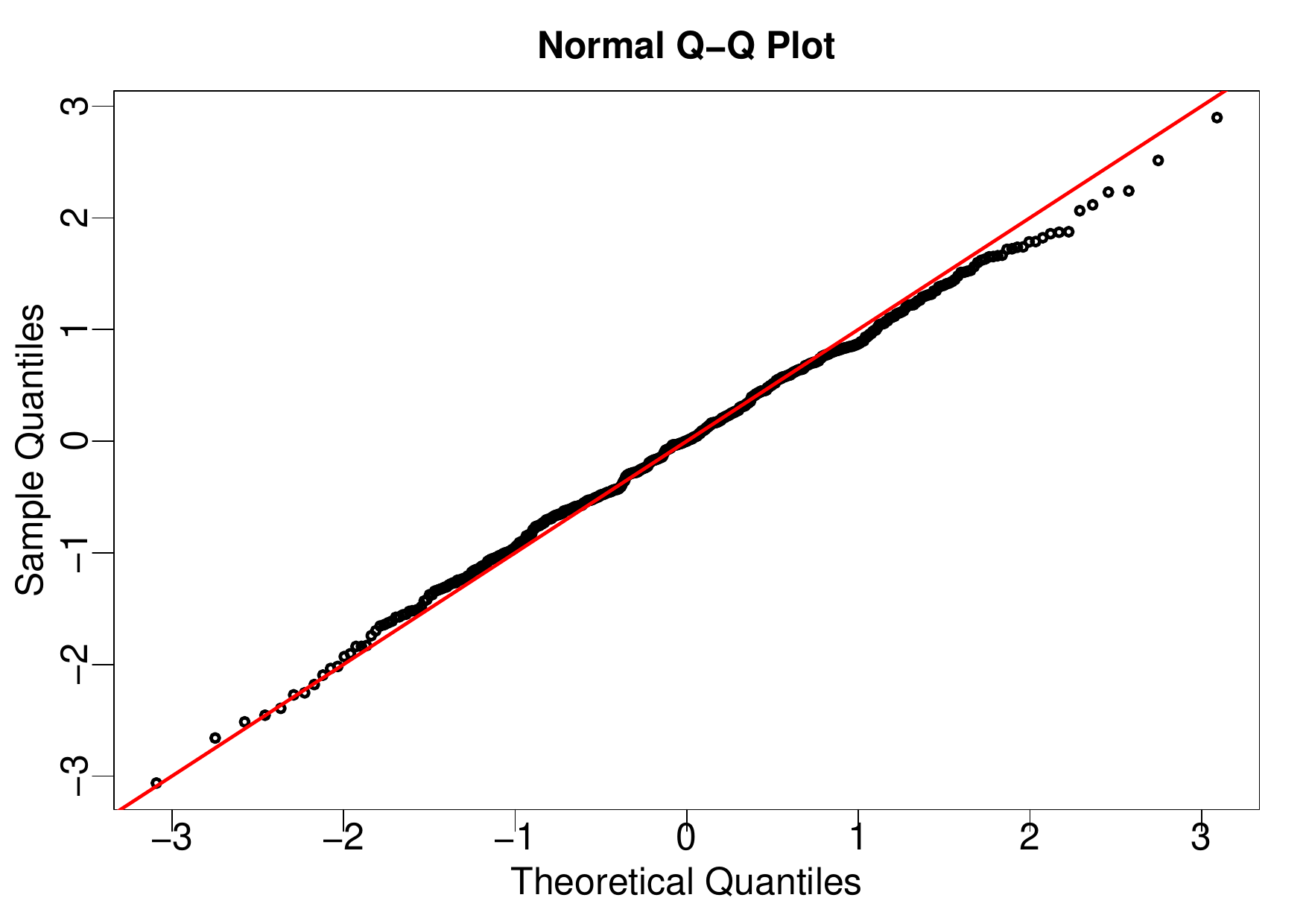}
\end{center}
\caption{For simulated Poisson variates $X_{j,t}\sim \mathcal{P}(\lambda_t)$ with $n=T=100$ and $\lambda_t=3+\cos(t)$, quantile-quantile plots (top) for the two components of statistics $\sqrt{nT}\hat{C}_T\hat{D}_T(\hat{\theta}-\theta)$ and (bottom) for the two components of statistics $\sqrt{nT}\hat{C}_T\hat{D}_T(\hat{\theta}-\theta)-\frac{1}{2}\sqrt{T/n}\hat{C}_T\hat{E}_T$. 
}\label{biasst}
\end{figure}

\section{Example: International Bottom Trawl Survey for the North Sea}\label{sec6}

We analyze some data from the International Council for the Exploration of the Sea (ICES) International Bottom Trawl Survey for the North
Sea (NS-IBTS). 
The survey estimates the abundance, measured as Catch Per Unit Effort (CPUE), of each fish and some invertebrate species 
binned into length classes within spatial rectangles of $0.5^{\circ}$ latitude by $1.0^{\circ}$ longitude,
with an approximate area of $30$ km$^2$, known as subareas. 
Data included in this study are limited to the years $1977$ through $2015$. 
The dataset used here contains only the observations of the nine standard species:
\textit{Clupea harengus} (species $1$), \textit{Gadus morhua} (species $2$), 
\textit{Melanogrammus aeglefinus} (species $3$), \textit{Merlangius merlangus} (species $4$), 
\textit{Pleuronectes platessa} (species $5$), \textit{Pollachius virens} (species $6$), 
\textit{Scomber scombrus} (species $7$), \textit{Sprattus sprattus} (species $8$) and \textit{Trisopterus esmarkii} (species $9$).  
{
The CPUE is given by the number of individuals caught per cruise in $30$ 
minutes of fishing, where only individuals in certain categories of length (measured in mm) were included.}
Further, data were limited to quarter one only, when survey cruises typically occur from mid January to mid February. 
It then excludes seasonal differences in abundance.  
For each of the nine species, the abundance is recorded at $n=195$ locations and across {
$T=39$} years, with one observation per year. See \citet{cobain} and the references therein for more details about this survey.

Figure \ref{graph} gives log-log plots (using the natural logarithm) of the empirical mean and the empirical variance of 
{
the CPUE of} the nine species.

Table \ref{conv} gives the OLS estimate of the slope parameter as well as the $95\%$ confidence bands obtained from standard linear regression methods. In contrast, Table \ref{real} provides confidence bands computed from our asymptotic results, with and without bias correction, for the same confidence level.
We do not consider species $7$ here because more than $75\%$ of these data
were zeros and we had numerical difficulties evaluating the quantities in Corollary \ref{CB}. 
Also, we divided all the abundances by a factor 
{
(the mean value of the abundance of the species, computed over the $n\times T$ observations)}
to avoid numerical issues due to many large values in the time series.
This operation does not change the value of the estimate of the slope 
parameter $\beta$ because TL is still valid with the same $\beta$, 
{
although it changes the value of the intercept parameter $\alpha$}.  

The confidence intervals given in Table \ref{conv} are wider. This is not surprising since their size is proportional to $1/\sqrt{T}$ while the correct size is proportional to $1/\sqrt{nT}$. 
As a consequence, after this correction of the width of the confidence interval, some values of $\beta$ are no longer admissible at the $95\%$ level,  
for instance, the value $\beta=2$ for Species $4$. Recall that the two intervals with or without bias correction have exactly the same length. 
The confidence interval obtained with bias correction is a simple shift, to the left or to the right, of that obtained without bias correction and it is then no longer centered around the estimate $\hat{\beta}$. 

With or without bias correction, the confidence intervals of four species, $1, 5, 8, 9$, include the value $\beta=2$; three intervals lie entirely above $\beta=2$; and two intervals lie entirely below $\beta=2$.
Thus the hypothesis $\beta=2$ is not rejected for only four species, 
{under the important, probably influential assumption, 
for each species, that its time series are independently and identically distributed at different spatial locations.
Our methods (and most other applications of least-squares regression to estimate the parameters of TL)
ignore the possibility that exogenous environmental fluctuations like the North Atlantic Oscillation influence all locations simultaneously, 
as well as the possibility that spatially adjacent or spatially remote populations of a species interact.
		
\begin{figure}[H]
\centering
\begin{subfigure}[b]{0.3\textwidth}
\centering
        \includegraphics[width=4cm,height=4cm]{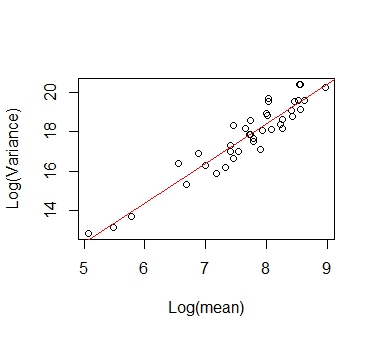}
\caption{Species $1$}
\end{subfigure}
	\begin{subfigure}[b]{0.3\textwidth}
        \centering
				\includegraphics[width=4cm,height=4cm]{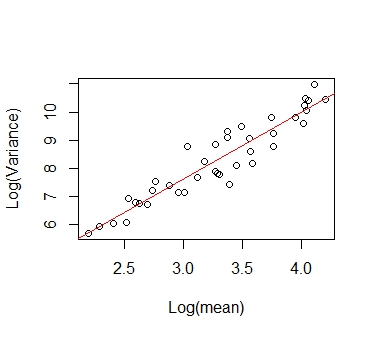}
\caption{Species $2$}
\end{subfigure}
\begin{subfigure}[b]{0.3\textwidth}
        \centering
				\includegraphics[width=4cm,height=4cm]{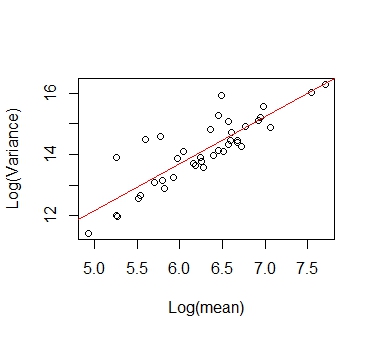}
\caption{Species $3$}
\end{subfigure}
\bigskip

\begin{subfigure}[b]{0.3\textwidth}
\centering
        \includegraphics[width=4cm,height=4cm]{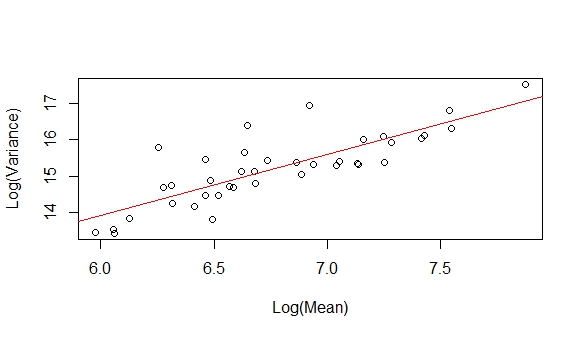}
\caption{Species $4$}
\end{subfigure}
\begin{subfigure}[b]{0.3\textwidth}
\centering
        \includegraphics[width=4cm,height=4cm]{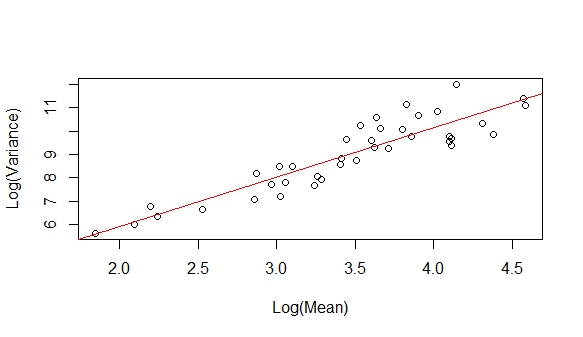}
\caption{Species $5$}
\end{subfigure}
\begin{subfigure}[b]{0.3\textwidth}
\centering
        \includegraphics[width=4cm,height=4cm]{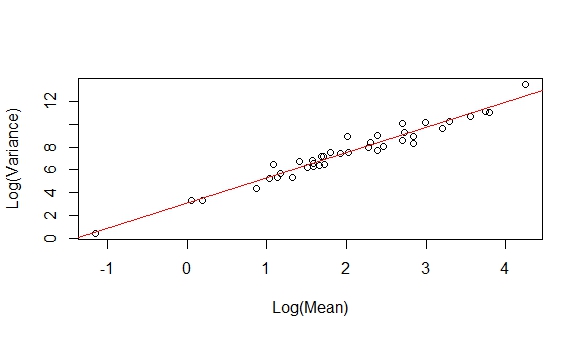}
\caption{Species $6$}
\end{subfigure}
\bigskip

\begin{subfigure}[b]{0.3\textwidth}
\centering
        \includegraphics[width=4cm,height=4cm]{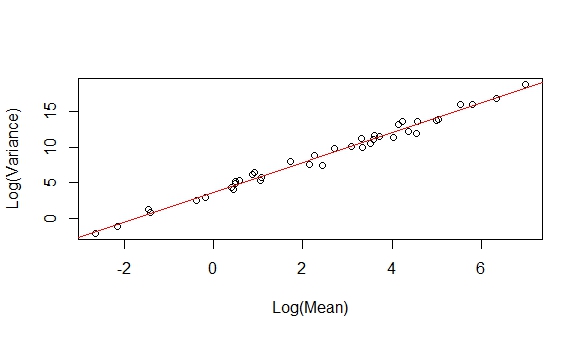}
\caption{Species $7$}
\end{subfigure}			
     \begin{subfigure}[b]{0.3\textwidth}
		\centering
        \includegraphics[width=4cm,height=4cm]{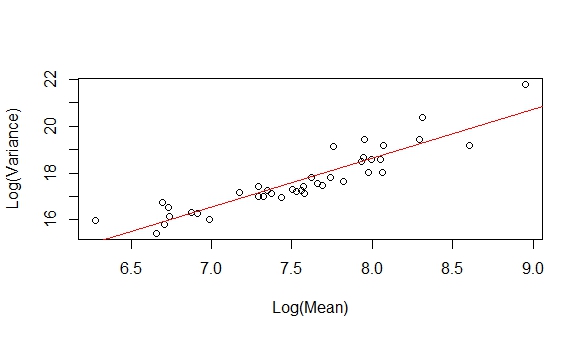}
\caption{Species $8$}
\end{subfigure}
\begin{subfigure}[b]{0.3\textwidth}
\centering
        \includegraphics[width=4cm,height=4cm]{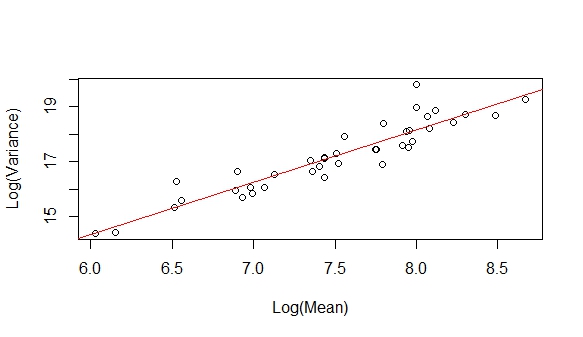}
\caption{Species $9$}
\end{subfigure}
\caption{Plots of the natural logarithm of sample variances versus the natural logarithm of sample means {
of the CPUE of} the nine species. 
The ordinary least-squares regression lines are red. \label{graph}}
\end{figure}

\begin{table}[H]
\begin{center}
\begin{tabular}{|c|c|c|c|}
\hline
Species & OLS $\beta$ & lower bound & upper bound \\\hline
1&2.009793& 1.780779 &2.238807\\\hline
2& 2.390582&2.1058999 &2.675264 \\\hline
3& 1.530209&1.221486 &1.838933\\\hline
4& 1.696704& 1.301093 & 2.092315\\\hline
5& 2.106150& 1.780474 &2.431825\\\hline
6&2.205332&2.041850& 2.368813\\\hline
7&2.09691 &2.015611 &2.178211\\\hline
8&2.084444& 1.779885& 2.389003\\\hline
9&1.893228&1.643332 &2.143124\\\hline
\end{tabular}
\end{center}
\caption{Point estimate by ordinary least squares of the slope parameter $\beta$ for the nine species and lower and upper bound of the $95\%$ confidence interval obtained with the conventional linear regression\label{conv}}
\end{table}

\begin{table}[H]
\begin{center}
\begin{tabular}{|c|c|c|c|c|c|}
\hline
Species & OLS $\beta$ & lower bound NC & upper bound NC& lower bound C& upper bound C\\\hline
1&2.009793& 1.886288& 2.133297&1.918103 &2.165112 \\\hline
2& 2.390582&2.228910 &2.552254& 2.286405& 2.609748\\\hline
3& 1.530209&1.379527& 1.680892&1.346237&1.647602\\\hline
4& 1.696704& 1.429183 &1.964225& 1.356336& 1.891378\\\hline
5& 2.106150& 1.926557 &2.285742&1.965488 &2.324674\\\hline
6&2.205332& 2.073231& 2.337432&2.191456& 2.455656\\\hline
8&2.084444& 1.864855 &2.304033&1.983217& 2.422395\\\hline
9&1.893228&1.735085 &2.051371&1.770439& 2.086725\\\hline
\end{tabular}
\end{center}
\caption{Point estimate by ordinary least squares of the slope parameter $\beta$ for eight of the nine species and lower and upper bound of the $95\%$ confidence interval ($n=195$, {
$T=39$}) 
computed with bias correction (C) and without bias correction (NC)
 \label{real}
}
\end{table}

 Visual inspection of Figure \ref{graph} gives no suggestion, for any species, that log variance is systematically non-linearly related to log mean.
In an exploratory analysis, we fitted a quadratic {
spatial}  {
TL}, 
namely, log spatial variance
$= \log(a) + b\times (\log $ 
{
spatial} mean}) $ +\ c \times (\log $ {
spatial} mean})$^2$,
by ordinary least squares as implemented in the software $R$. 
The null hypothesis that $c=0$ was not rejected for any species, but the confidence intervals for $c$ given by the software are not credible
in the context of this application.
Further theoretical developments are required to assess rigorously whether the data 
reject the null hypothesis of a linear relation between log 
{
spatial}  mean and log 
{
spatial}   variance.
For now, we accept the visual evidence that TL, linear on log-log coordinates, is a plausible model of the variance function of these data.

The present paper does not analyze
stochastic dependence between observations. 
Considering dependence 
will lead to a more delicate estimation of the normalizing matrix $C_T$ in Theorem \ref{NA} and it would probably require large data sets 
to get accurate results.

{To decide if this data set is compatible with the assumption of temporal independence under a nonstationary framework, we compute the residuals $\hat{Z}_{j,t}=\frac{X_{j,t}-\hat{\mu}_t}{\hat{\sigma}_t}$ which behave more like a stationary time series since their two first moments approximate $0$ and $1$. 
The arctanh function has been applied to all the correlation coefficients as a variance-stabilizing function. 
For each species, we then compute the three first autocorrelation coefficients $\rho(1)$, $\rho(2)$, $\rho(3)$ with lags 1, 2, 3 of each time series $(\hat{Z}_{j,t})_{1\leq t\leq T}$ for $j=1,\ldots,n$. 
\citet[Section 3.2]{vdv} defines the corresponding confidence interval of the autocorrelation coefficients, based on this variance-stabilizing function.}

Table \ref{test} gives the percentage of confidence bands that include $0$ for each species. 
The confidence bands were computed under the assumption of independence across $t$ of $Z_{j,t}=\frac{X_{j,t}-\mu_t}{\sigma_t}$ and by neglecting the sampling error in estimating the local mean and local variance. 
Because  $n = 195$ is much larger than 
{
$T = 39$} for this example,
one might hope this sampling error would be negligible. 
However, the results have to be taken with caution since $T$ is not very large. 
For species $6$ to $9$, the assumption of no correlation across time seems to be perfectly plausible. 
Species $3$ seems to be the most problematic.
 
 In Table \ref{test2}, we also apply the same procedure to test spatial correlation. More precisely,  for each of the nine species, 
we compute the sample or empirical correlation $r$ between $Z_{j,t}$ and $Z_{\ell,t}$ for two different locations $j$ and $\ell$ 
using the two samples $(Z_{j,t})_{1\leq t\leq T}$ and $(Z_{\ell,t})_{1\leq t\leq T}$.
We report the percentage of pairs $(j,\ell)$ for which $r=0$ is not rejected.  As for temporal dependence, we note that for some species, such as species $6$, it is difficult to exclude a spatial correlation whereas for some other, spatial correlation seems not to be very important.

Adding an assumption of temporal dependence and or spatial dependence could be an interesting extension of this work. On the other hand, our empirical results suggest that these independence assumptions 
{
are a reasonable first approximation for this data set.} 

{
We have given} a first rigorous basis for estimating the TL parameters 
{
and we have shown that our methods are applicable
and illuminating for a} real data set. 
When extending our work to allow spatial and temporal dependence,
a similar theoretical analysis can be conducted using our proof procedure.
The asymptotics of the ratio $T/n$ will be also of crucial importance in the statement of the corresponding results.

\begin{table}[H]
\begin{center}
\begin{tabular}{|c|c|c|c|}
\hline
Species & $\rho(1)$ & $\rho(2)$& $\rho(3)$\\\hline
1&77.4& 80.0& 99.0\\\hline
2& 75.4& 89.2& 91.3\\\hline
3& 48.2& 51.3& 52.8\\\hline
4& 70.0& 79.0& 82.0\\\hline
5& 81.5& 85.6& 90.3\\\hline
6&97.9& 98.0& 100.0\\\hline
7&95.9& 98.5& 99.0\\\hline
8&97.4& 97.4& 96.9\\\hline
9&95.4& 95.4& 96.9\\\hline
\end{tabular}
\end{center}
\caption{Percentage of confidence intervals containing the point $\rho(i)=0$ for serial autocorrelation with lag $i=1,2,3$. 
The confidence intervals were computed for the {
$n=195$} 
time series using the arctanh variance-stabilizing function. \label{test}}
\end{table}

\begin{table}[H]
\begin{center}
\begin{tabular}{|c|c|c|c|c|c|c|c|c|c|}
\hline
Species & $1$ & $2$& $3$& $4$&$5$&$6$&$7$&$8$&$9$\\\hline
Percentage &  $83$ & $84.5$& $60.4$& $83.8$&$63.4$&$49.4$&$68.1$&$66.7$&$66$\\\hline

\end{tabular}
\end{center}
\caption{Percentage of confidence intervals containing the point $r=0$ for correlation of the abundance between different locations. 
The confidence intervals were computed using the arctanh variance-stabilizing function. \label{test2}}
\end{table}

\section{Conclusions and perspectives}\label{sec7}

In this work, we introduced precise assumptions on the data-generating process 
to study the asymptotic properties of least-squares estimators of Taylor's power law parameters. 
 Theorems \ref{th1}, \ref{th2}, and \ref{NA} are our main new results.
Our assumptions on the marginal distributions are general enough to cover many situations encountered in practice, 
excluding data with fat tails and spatial and/or temporal dependence.

{
To ecologists interested in using our results to control the 
uncertainty in estimates of TL parameters, we offer the following guidance.
\begin{itemize}
    \item 
    Always apply the confidence intervals corresponding to the bias correction given in Corollary \ref{CB}, with $M_n$ defined as if $T/n=O(1)$. This correction is crucial when $T$ and $n$ are of the same magnitude. Moreover, even if the ratio $T/n$ seems to be quite small, it is difficult to discuss a particular threshold to know if the asymptotic regime $T/n=o(1)$ is 
    appropriate. The formula with bias correction always works asymptotically when $T/n=o(1)$.
    Even for finite sample sizes, the confidence bands should not be too much affected.  
    \item 
    The most important distortion of the confidence intervals could come from spatial or temporal dependence that we ignored in this work. 
    We recommend checking whether the independence assumption is compatible with the data set, as we did, for instance, in our real data application. 
    If an important deviation from independence is observed, further research is needed. 
    Our specific formulas for confidence intervals
    may or may not prove to be applicable to data with demonstrable spatial and/or temporal dependence,
    but we believe our approach to modeling and analysis
    of TL parameter estimation generalizes even to data
    with dependence. 
\end{itemize}
}

Some hints for various extensions are given below. When data have fat tails, the limiting distribution of least-squares estimators will probably no longer be Gaussian.  
Even with thin tails, our contribution appears to be the first to derive an asymptotic distribution for such estimates,
which appear in numerous ecological papers without theoretical analysis. 

Various extensions could also be
investigated. For instance, one can study other models
\begin{equation}\label{ondiscute}
\log\sigma_t^2=f_{\theta}(\log\mu_t)
\end{equation}
where $f_{\theta}$ is an unknown function depending on a finite number of parameters,
such as the quadratic and the cubic TL mentioned in the introduction.
Another extension could use a statistical learning point of view by studying more general least-squares problems of the form 
\begin{equation}\label{ondiscute2}
\hat{f}=\underset{f\in\mathcal{F}}{\arg\min}\sum_{t=1}^T\left(\log\hat{\sigma}_t^2-f(\log\hat{\mu}_t)\right)^2,
\end{equation}
where $\mathcal{F}$ is an infinite-dimensional class of functions. 
Of course, we have to take care that whatever the regularity (e.g. differentiability) imposed on $\mathcal{F}$, 
there will be an infinite number of functions mapping the sequence $(\log\mu_t)_t$ onto the sequence $(\log\sigma_t^2)_t$.
This non-parametric extension \eqref{ondiscute2} will differ from the usual standard non-parametric estimation problems in statistics.
Testing a specific parametric model of the form (\ref{ondiscute}) could be also interesting.

{
Another open question related to Assumption {\bf A1 } is  
to check whether time and space ($t$ and $j$) could be exchanged.
Non-stationarity in time seems mandatory to get identifiable parameters $\alpha, \beta$ in \eqref{main}. 
Estimating the normalizing quantities could be more difficult in this nonstationary context.}
Taking account of the stochastic dependence, spatial or temporal or both, between the observations is also a technical challenge. 
Many ways to extend Theorem \ref{NA} in the context of dependence have been published;
for instance, \citet{dedecker2007weak} provide many helpful asymptotic results.

Extensions to dependent random variables are plausible in different settings. For example, 
\cite{delapena2022taylor} suggest a variant definition of TL together with a simple extension of the TL under mixing type conditions.  
\cite{cohen2020heavy} consider heavy-tailed processes.    
\cite{taqqu2002lrd} give a systematic approach to a
weaker condition such as long-range dependence.
\cite{samo2007lrd} gives a comprehensive approach to long-range dependence under heavy-tailed assumptions.
 We plan to address these open questions in future work.

\section{Proof of the results}\label{sec8}

Here we prove our main results. Some proofs use technical lemmas in the Appendix. 
The proof is difficult because
we did not find adequate a.s. convergence results adapted to the current triangular arrays setting. This led us to a systematic use of the Borel-Cantelli lemma and the bounds for the probability of the events on which the estimates do not have a satisfying behaviour. Those quantities are controlled through standard exponential inequalities or through the Markov inequality and suitable  moment inequalities. From now on, we denote by $\vert\cdot\vert$ the Euclidean norm on $\R^2$: $|(x, y)| := (x^2+y^2)^{1/2}$.
We also denote the corresponding operator norm on the set of $2\times 2$ matrices by $\vert\cdot\vert$, namely, $|A|:=\sup\{|Az|/|z|\mid z\in \R^2\}$. Finally, if $X$ is a random variable, we denote by $\Vert X\Vert_p=\E^{1/p}\left[\vert X\vert^p\right]$ the $\L_p$-norm of $X$, for any $p\geq 1$.

\subsection{Proof of Theorem \ref{th1}.}

The proof {
will be based on the technical Lemma \ref{prems} in the Appendix.  
Lemma \ref{prems} controls  the probability of getting small values for empirical means and empirical variances. 
We then use the constants depending on $\epsilon$ and $L$ defined in  Lemma \ref{prems}.} 
\begin{proof}
Set $\gamma\in (0,L)$. 
Define the following event 
$$E_n(\epsilon):=\bigcap_{t=1}^T\left\{\hat{m}_t>\epsilon,\hat{\sigma}_t^2>\epsilon\right\}.$$
This sequence of events depending on $n$ takes care of the properties of the estimates
 when they take small values. Using the union bound and Lemma \ref{prems}, we have
$$\P\left(E_n(\epsilon)^c\right)\leq 2T\exp(-nL).$$
This shows that $\sum_{n\geq 2}\P(E_n(\epsilon)^c)<\infty$.

We now study the consistency of $\hat{\theta}$. 
We only need to show that $\hat{D}_T-D_T$ and $\hat{N}_T-N_T$ go to $0$ in probability or a.s. 
Indeed, if we use the decomposition 
$$\hat{\theta}-\theta=\hat{D}_T^{-1}\left(\hat{N}_T-N_T+(D_T-\hat{D}_T)D_T^{-1}N_T\right),$$
Assumptions {\bf A2} and {\bf A4} guarantee that $N_T=O(1)$ and $D_T^{-1}=O(1)$. 
There then also exists $\delta>0$ sufficiently small such that
$\sup_{\left\vert H\right\vert\leq \delta}\left\vert (D_T+H)^{-1}-D_T^{-1}\right\vert=O(1)$. 

We will then conclude that $\hat{D}_T^{-1}-D_T^{-1}$ converges to $0$ in probability or a.s. 
The weak or strong consistency of $\hat{\theta}$ will then follow.

Now, for some $\eta>0$, we have
\begin{equation}\label{objectif} 
\P(\vert \hat{\theta}-\theta\vert>\eta)\leq 
\P(\{\vert \hat{\theta}-\theta\vert>\eta\}\cap E_n(\epsilon))+\P\left(E_n(\epsilon)^c\right).
\end{equation}
We will use the following inequality extensively. 
If $a,b\geq \epsilon$ and $\kappa\in (0,1)$, we have 
$$\left\vert \log a-\log b\right\vert=\frac{1}{\kappa}\left\vert \log a^{\kappa}-\log b^{\kappa}\right\vert\leq \frac{1}{\kappa \epsilon^{\kappa}}\left\vert a-b\right\vert^{\kappa}.$$
 This inequality turns the log-problems across the origin into easier moment bounds. 
To get this inequality, one can apply the mean value theorem, then use the inequality $\vert a^{\kappa}-b^{\kappa}\vert\leq \vert a-b\vert^{\kappa}$ for $\kappa\in (0,1)$.

On $E_n(\epsilon)$, we get
$$\left\vert \log \hat{\sigma}^2_t-\log\sigma_t^2\right\vert\leq \frac{1}{\kappa \epsilon^{\kappa}}\left\vert \hat{\sigma}_t^2-\sigma_t^2\right\vert^{\kappa}.$$

From Assumption {\bf A2} 
and standard moment inequalities for partial sums (Lemma \ref{moment}), we have 
$$\sup_{t\geq 1}\E\left\vert \hat{\sigma}_t^2-\sigma_t^2\right\vert^{\kappa}\leq\sup_{t\geq 1}\E^{2\kappa/k}\left\vert \hat{\sigma}_t^2-\sigma_t^2\right\vert^{k/2}=o(1).$$
By choosing $\kappa$ small enough in the previous bounds, for any positive integer $i$, we get
\begin{equation}\label{interm1}
\sup_{t\geq 1}\E\left[\left\vert \log\hat{\sigma}^2_t-\log\sigma^2_t
\right\vert^i
\mathds{1}_{E_n(\epsilon)}\right]=o(1)
\end{equation}
and similarly 
\begin{equation}
\label{interm2}
\sup_{t\geq 1}\E\left[\left\vert \log \hat{\mu}_t-\log \mu_t\right\vert^i\mathds{1}_{E_n(\epsilon)}\right]=o(1).
\end{equation}
On the event $E_n(\epsilon)$,  
$T^{-1}\sum_{t=1}^T\left[\log \hat{\sigma}^2_t-\log\sigma^2_t\right]\to 0$ in $\L^1$ and then in probability when $n\rightarrow \infty$.
Moreover, setting $C:=\max\left(\sup_{t\geq 1}\left\vert\log \sigma_t^2\right\vert,\sup_{t\geq 1}\left\vert\log \mu_t\right\vert\right)$, we have
\begin{align*}
\left\vert \log  \hat{\mu}_t\log\hat{\sigma}^2_t-\log \mu_t\log \sigma^2_t\right\vert
\end{align*}
\begin{align*}
\leq& \left\vert \log \hat{\mu}_t-\log \mu_t\right\vert\cdot\left\vert \log \hat{\sigma}^2_t-\log \sigma^2_t\right\vert+\left\vert\log \sigma^2_t\right\vert\cdot \left\vert \log \hat{\mu}_t-\log \mu_t\right\vert
+ \left\vert \log \mu_t\right\vert\cdot\left\vert\log \hat{\sigma}^2_t-\log \sigma^2_t\right\vert\\
\leq& \frac{1}{2}\left\vert  \log \hat{\sigma}^2_t-\log \sigma^2_t\right\vert^2+\frac{1}{2}\left\vert \log \hat{\mu}_t-\log \mu_t\right\vert^2+C\left\vert \log \hat{\mu}_t-\log \mu_t\right\vert+C\left\vert \log \hat{\sigma}^2_t-\log \sigma^2_t\right\vert.
\end{align*}
From (\ref{interm1}) and (\ref{interm2}), we also deduce that 
$$\frac{1}{T}\sum_{t=1}^T\E \mathds{1}_{E_n(\epsilon)}\left\vert  \log \hat{\mu}_t\log\hat{\sigma}^2_t-\log \mu_t\log \sigma^2_t\right\vert=o(1).$$
Hence $\hat{N}_T-N_T \to 0$ in probability on the event $E_n(\epsilon)$. Similarly, $\hat{D}_T-D_T\to 0$ in probability on the event $E_n(\epsilon)$. This leads to the weak consistency of the regression estimates.

{
For any $\eta>0$,
$$\sum_{n\geq 1}\P\left(\left\{\left\vert \hat{\theta}-\theta\right\vert>\eta\right\}\cap E_n(\epsilon)^c\right)\leq \sum_{n\geq 1}\P\left(E_n(\epsilon)^c\right)<\infty$$
and the Borel-Cantelli lemma ensures that $\left\vert \hat{\theta}-\theta\right\vert\cdot\mathds{1}_{E_n(\epsilon)^c}$ converges to $0$ a.s.
}
To prove the almost sure convergence, it is enough to prove that $\hat{N}_T-N_T\to 0$ 
and $\hat{D}_T-D_T\to 0$ a.s. on the event $E_n(\epsilon)$. 
Using the same strategy as for the convergence in probability, {
it suffices} to show that
\begin{equation}\label{supinterest}
\lim_{T\rightarrow \infty}
\frac{1}{T}\sum_{t=1}^T\left\{\left\vert \hat{\sigma}^2_t-\sigma^2_t\right\vert+\left\vert \hat{\mu}_t-\mu_t\right\vert\right\}= 0 \quad\mbox{ a.s.}
\end{equation}
Because the function $x\mapsto \vert x\vert^{k/2}$ (where $k>2$ is given in Assumption {\bf A2} is convex,
we get
$$M_T:=\left(\frac{1}{T}\sum_{t=1}^T\left\vert\hat{\sigma}^2_t-\sigma^2_t\right\vert\right)^{k/2}\leq \frac{1}{T}\sum_{t=1}^T \left\vert \hat{\sigma}^2_t-\sigma_t^2\right\vert^{k/2}.$$
From moment inequalities for partial sums of independent observations (Lemma \ref{moment}), 
we get $\E M_T=O(n^{-k/4})$ and by assumption, if $k>4$, we get
 $\sum_{n\geq 1} \E M_T<\infty$.
Setting $P_T=\left\vert \hat{\mu}_t-\mu_t\right\vert$, 
similar arguments yield $\sum_{n\geq 1}\E P_T<\infty$. The Borel-Cantelli lemma ensures that both $M_T$ and $P_T$ go to zero a.s. as $n\rightarrow \infty$ and then the validity of (\ref{supinterest}), {
which concludes the proof}. 
\end{proof}

\paragraph{Proof of Theorem \ref{th2}.}

The proof is similar to that of Theorem \ref{th1}.
The main difference is in proving the convergence of $\hat{D}_T-D_T$ or $\hat{N}_T-N_T$ to $0$.
\begin{proof}
Let $A_t=\left\{\hat{\mu}_t>\epsilon, \hat{\sigma}_t^2>\epsilon\right\}$ where $\epsilon{\blue <1}$ satisfies the condition 
in Lemma \ref{prems}.2.
Using arguments similar to those in the proof of Theorem \ref{th1}, one can show that 
$$\frac{1}{T}\sum_{t=1}^T\mathds{1}_{A_t}\left[\left\vert\log\hat{\sigma}^2_t-\log\sigma_t^2\right\vert+\left\vert \log\hat{\mu}_t-\log\mu_t\right\vert+\left\vert \log^2\hat{\mu}_t-\log^2\mu_t\right\vert
+\left\vert\log\hat{\mu}_t\log\hat{\sigma}^2_t-\log\mu_t\log\sigma_t^2\right\vert\right]$$
converges to $0$ either in probability or a.s., depending on $k$. 
From Lemma \ref{prems},
$$\sum_{n\geq 1}\frac{1}{T}\sum_{t=1}^T\P\left(A_t^c\right)<\infty$$
and from the Borel-Cantelli lemma, we deduce that $\lim_{n\rightarrow \infty}\frac{1}{T}\sum_{t=1}^T\mathds{1}_{A_t^c}=0$ a.s.
Moreover, from Lemma \ref{prems}, the sequences $(\mu_t)_{t\geq 1}$ 
and $(\sigma^2_t)_{t\geq 1}$ are upper- and lower-bounded. 
To show the required convergence, it suffices to show that 
the following quantities converge  a.s. to $0$:
\begin{eqnarray*}
\frac{1}{T}\sum_{t=1}^T \mathds{1}_{0<\hat{\sigma}^2_t<\epsilon}\cdot\left\vert\log\hat{\sigma}^2_t\right\vert,&\quad &\frac{1}{T}\sum_{t=1}^T \mathds{1}_{0<\hat{\sigma}^2_t<\epsilon}\cdot\left\vert\log\hat{\sigma}^2_t \cdot\log\hat{\mu}_t\right\vert,\\
\frac{1}{T}\sum_{t=1}^T \mathds{1}_{0<\hat{\mu}_t<\epsilon}\cdot\left\vert\log\hat{\sigma}^2_t\cdot\log\hat{\mu}_t\right\vert,&\quad &\frac{1}{T}\sum_{t=1}^T \mathds{1}_{0<\hat{\mu}_t<\epsilon}\cdot\left\vert\log\hat{\mu}_t\right\vert,\\
\frac{1}{T}\sum_{t=1}^T \mathds{1}_{0<\hat{\mu}_t<\epsilon}\cdot\left\vert\log^2\hat{\mu}_t\right\vert.&
\end{eqnarray*}
We recall that $\log\hat{\mu}_t$ and $\log\hat{\sigma}^2_t$ are implicitly defined respectively by $\log \hat{\mu}_t\cdot\mathds{1}_{\hat{\mu}_t>0}$ and $\log \hat{\sigma}^2_t\cdot \mathds{1}_{\hat{\sigma}^2_t>0}$.

First, {
because $\log(\epsilon)<0$}, we have that 
$$\log\hat{\mu}_t\cdot\mathds{1}_{0<\hat{\mu}_t<\epsilon}=\log\hat{\mu}_t\cdot\mathds{1}_{0<\hat{\mu}_t<\epsilon}\cdot\mathds{1}_{\cup_{j=1}^n\{X_{j,t}>0\}}\geq \sum_{j=1}^n \log\left(X_{j,t}/n\right)\cdot\mathds{1}_{X_{j,t}>0}\cdot\mathds{1}_{0<\hat{\mu}_t<\epsilon}.$$
We then get the bound
\begin{equation}\label{onyva}
\left\vert \log \hat{\mu}_t\right\vert\cdot\mathds{1}_{\hat{\mu}_t>0}\leq \left\vert \log \hat{\mu}_t\right\vert\cdot\mathds{1}_{\hat{\mu}_t>\epsilon}
-\sum_{j=1}^n\log(X_{j,t}/n)\cdot\mathds{1}_{X_{j,t}>0}\cdot\mathds{1}_{0<\hat{\mu}_t<\epsilon}.
\end{equation}
Using the same type of arguments for
$$\hat{\sigma}^2_t=\frac{1}{2n^2}\sum_{1\leq i\neq j\leq n}\left\vert X_{i,t}-X_{j,t}\right\vert^2,$$
we get the bound 
\begin{equation}\label{onyvava}
\left\vert \log \hat{\sigma}^2_t\right\vert\cdot\mathds{1}_{\hat{\sigma}^2_t>0}\leq \left\vert \log \hat{\sigma}^2_t\right\vert\cdot\mathds{1}_{\hat{\sigma}^2_t>\epsilon}
-\sum_{1\leq i\neq j\leq n}\log\left(\frac{\vert X_{i,t}-X_{j,t}\vert^2}{2n^2}\right)\cdot\mathds{1}_{X_{i,t}\neq X_{j,t}}\cdot\mathds{1}_{0<\hat{\sigma}^2_t<\epsilon}.
\end{equation}
We are going to show that 
\begin{equation}\label{suffit}
Z_T:=\frac{1}{T}\sum_{t=1}^T \mathds{1}_{0<\hat{\sigma}^2_t<\epsilon}\cdot\log\hat{\sigma}^2_t\cdot\log\hat{\mu}_t\rightarrow_{n\rightarrow \infty}0
\end{equation}
in probability or a.s.
The idea for all the other quantities is exactly the same and details will be omitted. 

Let $p=q/2>1$ by Assumption {\bf A5} and $\alpha=1/(2p)$. From (\ref{onyva}), there exists $C_q>0$ such that

$$\Vert\log\hat{\mu}_t\cdot\mathds{1}_{\hat{\mu}_t>0}\Vert_{2p}\leq C_q\cdot\E^{\alpha}\left(\hat{\mu}_t\right)+\sum_{j=1}^n\Vert \log \frac{X_{j,t}}{n}\mathds{1}_{X_{j,t}>0}\Vert_{2p}.$$
From {\bf A5}, we deduce the existence of a constant $D>0$ 
such that {
the behaviour of $\hat \mu_t$ around the origin is not too explosive:} 
\begin{equation}\label{meil}
\Vert \log\hat{\mu}_t\cdot\mathds{1}_{\hat{\mu}_t>0}\Vert_{2p}\leq D\left(1+n\log(n)\right).
\end{equation}
From (\ref{onyvava}) and the H\"older inequality, we deduce that 
\begin{multline*}
    \Vert \log\hat{\mu}_t\cdot\log\hat{\sigma}_t^2\cdot \mathds{1}_{0<\hat{\sigma}^2_t<\epsilon}\Vert_1
    \\
    \leq
\sum_{1\leq i\neq j\leq n}\P\left(\hat{\sigma}^2_t<\epsilon\right)^{1-1/p}\cdot\Vert \log\hat{\mu}_t\cdot\mathds{1}_{\hat{\mu}_t>0}\Vert_{2p}\cdot \Vert \log\left(\frac{\vert X_{i,t}-X_{j,t}\vert^2}{2n^2}\right)\cdot\mathds{1}_{X_{i,t}\neq X_{j,t}}\Vert_{2p}.
\end{multline*}
From Assumption {\bf A5}, inequality (\ref{meil}) and Lemma \ref{prems}, it is not very difficult to conclude that 
$$\sum_{n\geq 1}\frac{1}{T}\sum_{t=1}^T\Vert \log\hat{\mu}_t\cdot\log\hat{\sigma}_t^2\cdot \mathds{1}_{0<\hat{\sigma}^2_t<\epsilon}\Vert_1<\infty.$$
From the Borel-Cantelli lemma, we get (\ref{suffit}).
Remembering the discussion at the beginning of the proof, one can deduce the consistency properties of $\hat{\theta}$. 
\end{proof}

\subsection{Proof of Theorem \ref{NA}.}

\begin{proof}
The second derivatives of the function $\phi_{\theta}=(\phi_{1,\theta},\phi_{2,\theta})$ are:
$$\frac{\partial^2\phi_{1,\theta}}{\partial x^2}(x,y)=-\frac{2}{y-x^2}-\frac{4x^2}{(y-x^2)^2}+\frac{\theta_2}{x^2},$$
$$\frac{\partial^2\phi_{1,\theta}}{\partial y^2}(x,y)=\frac{-1}{(y-x^2)^2},\quad\frac{\partial^2\phi_{1,\theta}}{\partial x \partial y}(x,y)=\frac{2x}{(y-x^2)^2},$$
$$\frac{\partial^2 \phi_{2,\theta}}{\partial x^2}(x,y)=\frac{-4}{y-x^2}-\frac{\log(y-x^2)}{x^2}-\frac{2\log x}{y-x^2}-\frac{4x^2\log(x)}{(y-x^2)^2}+\frac{\theta_1-2\theta_2}{x^2}+\frac{2\theta_2\log(x)}{x^2},$$
$$\frac{\partial^2 \phi_{2,\theta}}{\partial y^2}(x,y)=-\frac{\log(x)}{(y-x^2)^2},\quad \frac{\partial^{2}\phi_{2,\theta}}{\partial x\partial y}(x,y)=\frac{1}{x(y-x^2)}+\frac{2x\log(x)}{(y-x^2)^2}.$$
To prove Theorem \ref{NA}.1, we first show that
\begin{equation}\label{interest}
\sqrt{nT}\left(U_T-V_T\right)=o_{\P}(1).
\end{equation}
Using a Taylor-Lagrange expansion of order $2$, we have
\begin{eqnarray}\label{decomp}
\sqrt{nT}(U_T-V_T)
=\left(\frac{1}{2}\frac{\sqrt{n}}{\sqrt{T}}\sum_{t=1}^T(\hat{m}_t-m_t)'H_{1,t}(\hat{m}_t-m_t),\frac{1}{2}\frac{\sqrt{n}}{\sqrt{T}}\sum_{t=1}^T(\hat{m}_t-m_t)'H_{2,t}(\hat{m}_t-m_t)\right),
\end{eqnarray}
where 
$$H_{1,t}=\nabla^{(2)}\phi_{1,\theta}(\widetilde{m}_t),\quad H_{2,t}=\nabla^{(2)}\phi_{2,\theta}(\check{m}_t),$$
for some random points $\widetilde{m}_t, \check{m}_t$ in the segment $[m_t,\hat{m}_t]$. 
It is only necessary to show (\ref{interest}) on the event 
$$E(\epsilon)=\cap_{t=1}^T\left\{\hat{\mu}_t>\epsilon,\hat{\sigma}_t^2>\epsilon\right\},$$
which has been defined in the proof of Theorem \ref{th1} and which satisfies $\P\left(E(\epsilon)^c\right)\leq 2T\exp(-nL)$ for some $L>0$.
From Lemma \ref{prems}.1, on $E(\epsilon)$, we have, setting $w=\min(\epsilon,s)$, 
$$\min_{1\leq t\leq T}\min\left(\mu_t,m_{2,t},\sigma^2_t,\hat{\mu}_t,\hat{m}_{2,t},\hat{\sigma}^2_t\right)\geq w.$$
Moreover, on the same event $E(\epsilon)$, using convexity of the square function, we have for $1\leq t\leq T$,
$$\min\left(\widetilde{m}_{2,t}-\widetilde{m}_{1,t}^2,\check{m}_{2,t}-\check{m}_{1,t}^2\right)\geq \min\left(\hat{\sigma}^2_t,\sigma_t^2\right)\geq w.$$
Also $\min\left(\widetilde{m}_{1,t},\check{m}_{1,t}\right)\geq w$. 
Moreover, there exists a positive constant $\widetilde{C}$ 
such that for $1\leq t\leq T$ and $i=1,2$,
$$\max\left(\widetilde{m}_{i,t},\check{m}_{i,t}\right)\leq \widetilde{C}\left(1+\vert \hat{m}_{i,t}-m_{i,t}\vert\right).$$
It is then sufficient to show that for $i,i'\in \{1,2\}$,
\begin{equation}\label{but1}
\frac{\sqrt{n}}{\sqrt{T}}\sum_{t=1}^Th_{i,i',t}\vert \hat{m}_{i,t}-m_{i,t}\vert\cdot \left\vert \hat{m}_{i',t}-m_{i',t}\right\vert=o_{\P}(1),
\end{equation}
with
$$h_{i,i',t}=\max_{\ell=1,2}\sup_{\substack{w\leq x\leq \widetilde{C}(1+\vert \hat{\mu}_t-\mu_t\vert)\\w\leq z\leq \widetilde{C}(1+\vert \hat{m}_{2,t}-m_{2,t}\vert)}}\left\vert \frac{\partial^2 \phi_{\ell,\theta}}{\partial x^i\partial y^{i'}}(x,z+x^2)\right\vert.$$
A careful inspection of the second partial derivatives of $\phi_{\theta}$ shows that for any $\delta>0$, there exists a constant $C_{\delta}'>0$ such that 
$$h_{1,1,t}\leq C'_{\delta}\left[1+\vert \hat{\mu}_t-\mu_t\vert^{2+\delta}+\vert \hat{m}_{2,t}-m_{2,t}\vert^{\delta}\right],$$
$$h_{2,2,t}\leq C'_{\delta}\left[1+\vert \hat{\mu}_t-\mu_t\vert^{\delta}\right],\quad h_{1,2,t}\leq C_{\delta}'\left[1+\vert\hat{\mu}_t-\mu_t\vert^{1+\delta}\right].$$
Take $\delta\in (0,1)$ such that $4+\delta<k$. Using the moment bound given by Lemma \ref{moment},
it is not difficult to get that 
$$\frac{\sqrt{n}}{\sqrt{T}}\sum_{t=1}^Th_{i,i',t}\vert \hat{m}_{i,t}-m_{i,t}\vert\cdot \left\vert \hat{m}_{i',t}-m_{i',t}\right\vert=O_{\P}(\sqrt{T/n})$$
which yields (\ref{but1}) and then (\ref{interest}).

It now remains to show that
\begin{equation}\label{limit}
W_T:=\sqrt{nT}C_T V_T\Rightarrow \mathcal{N}_2(0,I_2).
\end{equation}
{
We recall that $I_2$ denotes the $2\times 2$ identity matrix.}

This will follow from the Lindeberg-Feller central limit theorem for triangular arrays
\citep[Theorem 18.1]{billingsley2013convergence}. 
Since $\v(W_T)=I_2$, 
we only have to check the Lindeberg condition. Setting $Z_{n,j,t}=\frac{1}{\sqrt{nT}}C_T J_{\phi_{\theta}}(m_t)(Y_{j,t}-m_t)$ and using our assumptions, we have 
$C_T=O(1)$ and $\max_{1\leq t\leq T}\vert J_{\phi_{\theta}}(m_t)\vert=O(1)$. 
For $\delta>0$ such that $2(2+\delta)\leq k$, we get $\sum_{j=1}^n\sum_{t=1}^T\E\vert Z_{n,j,t}\vert^{2+\delta}=O\left((nT)^{-\delta/2}\right)=o(1)$. This proves (\ref{limit}).

To prove Theorem \ref{NA}.2, we again use the decomposition (\ref{decomp}) and we first show that for $\ell=1,2$,
\begin{equation}\label{goal}
\frac{\sqrt{n}}{\sqrt{T}}\sum_{t=1}^T\left(\hat{m}_t-m_t\right)'\left[H_{\ell,t}-K_{\ell,t}\right]\left(\hat{m}_t-m_t\right)=o_{\P}(1),
\end{equation}
where for $\ell=1,2$ and $t\geq 1$, $K_{\ell,t}=\nabla^{(2)}\phi_{\ell,\theta}(m_t)$.
It is only necessary to show (\ref{goal}) on the event $E_n(\epsilon)$. To this end, we will use the bounds 
$$\max\left\{\left\vert \widetilde{\mu}_t-\mu_t\right\vert, \left\vert \check{\mu}_t-\mu_t\right\vert\right\}\leq \left\vert \hat{\mu}_t-\mu_t\right\vert,$$
and 
$$\max\left\{\left\vert \widetilde{\sigma}^2_t-\sigma^2_t\right\vert, \left\vert \check{\sigma}^2_t-\sigma^2_t\right\vert\right\}\leq \left\vert \hat{m}_{2,t}-m_{2,t}\right\vert+\left(2\sup_{s\geq 1}\mu_s +\left\vert \hat{\mu}_t-\mu_t\right\vert\right)\cdot\left\vert \hat{\mu}_t-\mu_t\right\vert,$$
where we set $\widetilde{\sigma}^2_t:=\widetilde{m}_{2,t}-\widetilde{m}_{1,t}^2$ and $\check{\sigma}^2_t:=\check{m}_{2,t}-\check{m}_{1,t}^2$.
We will also use the fact that if $h:(\epsilon,\infty)\rightarrow \R_+$ is any mapping of the form $h(z)=1/z$, $h(z)=1/z^2$ or $h(z)=\log(z)$, then for any $\kappa\in (0,1)$, there exists 
$C_{\kappa}>0$ such that 
$$\left\vert h(z)-h(z')\right\vert\leq C_{\kappa}\vert z-z'\vert^{\kappa}.$$
Assertion (\ref{goal}) can be obtained from these facts and a careful analysis of the second derivatives of $\phi_{\theta}$. 
We will illustrate this by showing that 
\begin{equation}\label{petit}
\frac{\sqrt{n}}{\sqrt{T}}\sum_{t=1}^T\left(\hat{m}_{2,t}-m_{2,t}\right)^2\left[\frac{\partial^{2}\phi_{1,\theta}}{\partial y^2}(\widetilde{m}_t)-\frac{\partial^{2}\phi_{1,\theta}}{\partial y^2}(m_t)\right]=o_{\P}(1).
\end{equation}
For $\kappa\in (0,1)$ small enough to satisfy $4+2\kappa\leq k$, there exists $D_{\kappa}>0$ such that 
$$\left\vert \frac{\widetilde{\mu}_t}{\widetilde{\sigma}^4_t}-\frac{\mu_t}{\sigma_t^4}\right\vert \leq C\left[\left\vert \widetilde{\mu}_t-\mu_t\right\vert+\left\vert \widetilde{\sigma}^2_t-\sigma_t^2\right\vert^{\kappa}\right].$$
To show (\ref{petit}), it is sufficient to show that 
$$\E\left[\left(\hat{m}_{2,t}-m_{2,t}\right)^2\left\vert \hat{m}_{i,t}-m_{i,t}\right\vert^{\kappa_i}\right]=O\left(n^{-1-\frac{\kappa_i}{2}}\right),$$
when $(i,\kappa_i)=(2,\kappa), (1,\kappa)$ or $(1,2\kappa)$. But this point easily follows from Lemma \ref{moment} and H\"older's inequality.
Assertions similar to (\ref{petit}) can be obtained in the same way for the other second-order partial derivatives of $\phi_{\theta}$.
Details are omitted. Then
\begin{eqnarray*}
\sqrt{nT}(U_T-V_T)
&=&\left(\frac{\sqrt{n}}{\sqrt{T}}\sum_{t=1}^T\left(\hat{m}_t-m_t\right)'K_{1,t}\left(\hat{m}_t-m_t\right),\frac{\sqrt{n}}{\sqrt{T}}\sum_{t=1}^T\left(\hat{m}_t-m_t\right)'K_{2,t}\left(\hat{m}_t-m_t\right)\right)\\
&+& o_{\P}(1).
\end{eqnarray*}
One applies Lemma \ref{intern} to conclude the proof. 
\end{proof}

\subsection{Proof of Corollary \ref{CB}.}

The result will be consequence of Theorem \ref{NA} and of Lemmas \ref{conver} and \ref{convuni2} in the Appendix.

\begin{proof} Set $\hat{A}_n:=\hat{C}_T\hat{D}_T$ and $A_n:=C_T \hat{D}_T$.
Using Lemma \ref{conver} in the Appendix, we have $\hat{C}_T-C_T=o_{\P}(1)$ and from the proof of Theorem \ref{th1}, we know that $\hat{D}_T-D_T=o_{\P}(1)$ {
and both $D_T$ and $D_T^{-1}$ are $O(1)$}.
We then easily conclude that $\hat{A}_n-A_n=o_{\P}(1)$. 

Let us also define $\hat{B}_n:=\sqrt{T/n}\hat{C}_T\hat{E}_T$ if $T/n=O(1)$ and $\hat{B}_n:=0$ if $T/n=o(1)$. 
Once again, using Lemma \ref{conver}, it is easy to show that $\hat{B}_n-B_n=o_{\P}(1)$ 
where $B_n:=\sqrt{T/n}C_T E_T$ if $T/n=O(1)$ and $B_n:=0$ if $T/n=o(1)$. 
Setting $W_n=\sqrt{nT}\left(\hat{\theta}-\theta\right)$, 
we already know from Theorem \ref{NA} that $A_n W_n-B_n\Rightarrow  \mathcal{N}_2(0,I_2)$. 
Moreover, using the decomposition
\begin{eqnarray*}
Z_n&:=&\hat{A}_n W_n-\hat{B}_n\\
&=& A_n W_n-B_n\\
&+&\left(\hat{A}_n-A_n\right)A_n^{-1}\left(A_n W_n-B_n\right)+\left(\hat{A}_nA_n^{-1}-I_2\right)B_n+B_n-\hat{B}_n.
\end{eqnarray*}
We conclude from Lemma \ref{conver} that $Z_n\Rightarrow \mathcal{N}_2(0,I_2)$. 
The result is now a consequence of Lemma \ref{convuni2}. \end{proof}

\section{Appendix}\label{sec9}

This Appendix states and proves several technical lemmas used to prove our main results.

Lemma \ref{prems} below shows that Assumption {\bf A3} leads to a lower bound 
for the population mean and the population variance. 
It also shows that small values of the sample mean and sample variance occur with very small probability if $n$ is large, as noted at the beginning of the proofs section \ref{sec8}.

\begin{lem}\label{prems}
Suppose that Assumption {\bf A3} holds,
{
i.e.,  that there exists some $\delta>0$ such that 
$$p(\delta):=\inf_{t\geq 1}\P\left(X_{1,t}>\delta\right)>0  \quad \mathrm{and}\quad 
q(\delta)=\inf_{t\geq 1}\P\left(\left\vert X_{1,t}-X_{2,t}\right\vert> \delta\right)>0.$$
Then}
\begin{enumerate}
\item
There exists $s>0$ such that $\inf_{t\geq 1}\mu_t\geq s$ and $\inf_{t\geq 1}\sigma_t^2\geq s$.
\item
Let $0<\epsilon<\min\left(\delta p(\delta),\delta^2 q(\delta)/4\right)$.
Then there exists $L>0$ such that 
$$\sup_{t\geq 1}\max\left\{\P(\hat{\mu}_t\leq \epsilon),\P(\hat{\sigma}^2_t\leq \epsilon)\right\}\leq \exp(-Ln).$$
\end{enumerate}
\end{lem}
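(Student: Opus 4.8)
The plan is to treat the two parts separately, with Part 1 a short computation and Part 2 an application of a standard exponential (Hoeffding-type) inequality after bounding the relevant sample quantity from below by a sum of indicator variables. For Part 1, I would observe that $\mu_t = \E X_{1,t} \geq \delta\,\P(X_{1,t}>\delta) \geq \delta\, p(\delta)$ by Markov's inequality applied to the nonnegative variable $X_{1,t}$, so $s_1 := \delta p(\delta)>0$ works for the mean. For the variance, I would use the identity $\sigma_t^2 = \frac12 \E\bigl(X_{1,t}-X_{2,t}\bigr)^2$ (valid since $X_{1,t}, X_{2,t}$ are i.i.d.\ copies, which follows from {\bf A1}), and then bound $\E\bigl(X_{1,t}-X_{2,t}\bigr)^2 \geq \delta^2\, \P(|X_{1,t}-X_{2,t}|>\delta) \geq \delta^2 q(\delta)$ again by Markov. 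Thus $\sigma_t^2 \geq \tfrac12 \delta^2 q(\delta) =: s_2 >0$, and one takes $s := \min(s_1, s_2)$; the infimum over $t$ survives because $p(\delta)$ and $q(\delta)$ are defined as infima over $t$.

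For Part 2, the key observation is a deterministic lower bound. For the mean, on the event that at least one $X_{j,t}$ exceeds $\delta$ we have $\hat\mu_t \geq \delta/n \cdot \#\{j : X_{j,t}>\delta\}$; more usefully, writing $B_{j,t} := \mathds{1}_{X_{j,t}>\delta}$, we have $\hat\mu_t \geq \frac{\delta}{n}\sum_{j=1}^n B_{j,t}$, so $\{\hat\mu_t \leq \epsilon\} \subseteq \bigl\{\frac1n \sum_{j=1}^n B_{j,t} \leq \epsilon/\delta\bigr\}$. Since $\E B_{j,t} = \P(X_{1,t}>\delta) \geq p(\delta)$ and $\epsilon/\delta < p(\delta)$ by the hypothesis $\epsilon < \delta p(\delta)$, this is a large-deviation event for a sum of i.i.d.\ Bernoulli variables below its mean, and Hoeffding's inequality gives $\P(\hat\mu_t \leq \epsilon) \leq \exp\bigl(-2n(p(\delta)-\epsilon/\delta)^2\bigr)$, a bound uniform in $t$. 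For the variance I would use the representation already recorded in the paper, $\hat\sigma_t^2 = \frac{1}{2n^2}\sum_{1\leq i\neq j\leq n}|X_{i,t}-X_{j,t}|^2$, and restrict the sum to a set of disjoint pairs: pairing indices as $(1,2),(3,4),\dots$ gives $\lfloor n/2\rfloor$ independent terms, so $\hat\sigma_t^2 \geq \frac{\delta^2}{2n^2}\sum_{\ell=1}^{\lfloor n/2\rfloor} D_{\ell,t}$ where $D_{\ell,t} := \mathds{1}_{|X_{2\ell-1,t}-X_{2\ell,t}|>\delta}$ are i.i.d.\ with mean $\geq q(\delta)$. Hence $\{\hat\sigma_t^2 \leq \epsilon\}$ is contained in $\bigl\{\frac{1}{\lfloor n/2\rfloor}\sum_\ell D_{\ell,t} \leq \frac{2n^2\epsilon}{\delta^2 \lfloor n/2\rfloor}\bigr\}$; a cleaner route avoiding the $n^2$ factor is to note $\frac{\delta^2}{2n^2}\sum_{\ell=1}^{\lfloor n/2\rfloor} D_{\ell,t} \geq \frac{\delta^2}{8}\cdot\frac{1}{\lfloor n/2\rfloor}\sum_\ell D_{\ell,t}$ for $n\geq 2$, so $\{\hat\sigma_t^2\leq\epsilon\}\subseteq\{\frac{1}{\lfloor n/2\rfloor}\sum_\ell D_{\ell,t}\leq 8\epsilon/\delta^2\}$, and since $8\epsilon/\delta^2 < 2q(\delta)$... — one should instead take $\epsilon < \delta^2 q(\delta)/8$, or absorb the constant; the paper's threshold $\epsilon<\delta^2 q(\delta)/4$ is handled by being slightly more careful, e.g. using $\lfloor n/2\rfloor \geq n/2 - 1 \geq n/3$ for $n$ large and the bound $\frac{\delta^2}{2n^2}\cdot\frac{n}{3}\cdot(\text{count})$. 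Either way, Hoeffding again yields $\P(\hat\sigma_t^2\leq\epsilon) \leq \exp(-c n)$ for a constant $c>0$ depending only on $\delta, q(\delta), \epsilon$, uniformly in $t$. Taking $L$ to be the minimum of the two exponential rates (and shrinking it to absorb the $\lfloor n/2\rfloor$ versus $n$ discrepancy) completes the proof.

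The only mildly delicate point is matching the exact constant $\min(\delta p(\delta), \delta^2 q(\delta)/4)$ in the statement: the disjoint-pairing trick loses a factor because it uses only $\sim n/2$ of the $\sim n^2$ pairs, and one must check the chosen $\epsilon$ still lands strictly below the mean of the relevant Bernoulli average after this loss. This is a bookkeeping issue rather than a genuine obstacle — one can always enlarge the pool of pairs (e.g.\ use a $1$-factorization of the complete graph $K_n$ to get $n-1$ disjoint perfect matchings and average the resulting bounds, or simply invoke a Hoeffding-type inequality for U-statistics of order $2$) to recover the sharper constant. I expect the main ``work'' of the proof to be choosing the cleanest such argument; the probabilistic content is entirely standard once the deterministic lower bounds $\hat\mu_t \gtrsim \frac1n\sum_j \mathds{1}_{X_{j,t}>\delta}$ and $\hat\sigma_t^2 \gtrsim \frac1n \sum (\text{pair indicators})$ are in hand.
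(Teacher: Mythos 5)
Your Part 1 and your treatment of $\P(\hat\mu_t\leq\epsilon)$ coincide with the paper's proof (indicator lower bounds for the population moments, the bound $\hat\mu_t\geq\frac{\delta}{n}\sum_j\mathds{1}_{X_{j,t}>\delta}$, then Hoeffding). The genuine gap is in your primary argument for $\P(\hat\sigma_t^2\leq\epsilon)$: keeping only the $\lfloor n/2\rfloor$ disjoint pairs $(1,2),(3,4),\dots$ while retaining the normalization $\frac{1}{2n^2}$ loses a factor of order $n$, not a constant. The retained quantity $\frac{\delta^2}{2n^2}\sum_{\ell=1}^{\lfloor n/2\rfloor}D_{\ell,t}$ is at most $\frac{\delta^2}{4n}$, so for large $n$ it lies below any fixed $\epsilon$ and the inclusion $\{\hat\sigma_t^2\leq\epsilon\}\subseteq\bigl\{\frac{1}{\lfloor n/2\rfloor}\sum_\ell D_{\ell,t}\leq 2n^2\epsilon/(\delta^2\lfloor n/2\rfloor)\bigr\}$ is vacuous (the threshold $\approx 4n\epsilon/\delta^2$ eventually exceeds $1$). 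Your ``cleaner route'' inequality $\frac{\delta^2}{2n^2}\sum_\ell D_{\ell,t}\geq\frac{\delta^2}{8}\cdot\frac{1}{\lfloor n/2\rfloor}\sum_\ell D_{\ell,t}$ is in fact reversed: for $n\geq 2$ one has $\frac{1}{2n^2}\leq\frac{1}{8\lfloor n/2\rfloor}$. So this is not a bookkeeping issue about $\delta^2 q(\delta)/4$ versus $\delta^2 q(\delta)/8$; no choice of smaller $\epsilon$ rescues the single-matching version.

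The repair is exactly the option you relegate to a parenthetical: keep all $n(n-1)$ pairs. Write $\hat\sigma_t^2\geq\frac{\delta^2 n(n-1)}{2n^2}\,U_{n,t}$ with $U_{n,t}:=\frac{1}{n(n-1)}\sum_{i\neq j}\mathds{1}_{\vert X_{i,t}-X_{j,t}\vert>\delta}$, note that $\frac{n(n-1)}{2n^2}\geq\frac14$ for $n\geq 2$, hence $\{\hat\sigma_t^2\leq\epsilon\}\subseteq\{U_{n,t}\leq 4\epsilon/\delta^2\}$ with $4\epsilon/\delta^2<q(\delta)$ by the hypothesis on $\epsilon$, and then apply Hoeffding's exponential inequality for U-statistics of order two (whose proof is precisely the averaging over all disjoint matchings that your $1$-factorization remark gestures at). This yields $\P(\hat\sigma_t^2\leq\epsilon)\leq\exp\bigl(-2\lfloor n/2\rfloor(q(\delta)-4\epsilon/\delta^2)^2\bigr)$ uniformly in $t$, which is the paper's argument; with that substitution your proof is complete and matches the paper's.
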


\begin{proof}
To prove point 1, we use the inequality $x^i\geq \delta^i\mathds{1}_{\{x\geq \delta\}}$ for any $x\geq 0$ and $i=1,2$, which leads to the lower bounds 
\begin{eqnarray*}   
\mu_t&\geq& \delta\P(X_{1,t}>\delta)\geq \delta p(\delta),\\
\sigma_t^2&=&\frac{1}{2}\E(\vert X_{1,t}-X_{2,t}\vert^2)\geq \frac{\delta^2}{2}\P\left(\vert X_{1,t}-X_{2,t}\vert\geq \delta\right)\geq \frac{\delta^2}{2}q(\delta).
\end{eqnarray*}
To prove point 2, we observe that 
$$\hat{\mu}_t\geq \frac{\delta}{n}\sum_{j=1}^n \mathds{1}_{X_{j,t}>\delta}.$$
Then, setting $Z_{j,t}:=\mathds{1}_{X_{j,t}>\delta}-\P(X_{j,t}>\delta)$,
$$\P\left(\hat{\mu}_t\leq \epsilon\right)\leq \P\left(\frac{1}{n}\sum_{j=1}^n Z_{j,t}\leq \frac{\epsilon}{\delta}-p(\delta)\right).$$
Using Hoeffding's exponential inequality \citep[inequality $(2.6)$]{Hoeffding},
we conclude that $\sup_{t\geq 1}\P(\hat{\mu}_t\leq \epsilon)\leq \exp(-nL')$ for some $L'>0$. 
$L'$ can be taken here as $\left(\epsilon/\delta-p(\delta)\right)^2$. 
To control the probability of small values of the empirical variance, we use 
an immediate consequence of Lagrange's identity:
$$\hat{\sigma}^2_t=\frac{1}{2n^2}\sum_{1\leq i\neq j\leq n}\left(X_{i,t}-X_{j,t}\right)^2.$$
This leads to the lower bound
$$\hat{\sigma}^2_t\geq \frac{\delta^2}{2n^2}\sum_{1\leq i\neq j\leq n}\mathds{1}_{\vert X_{i,t}-X_{j,t}\vert>\delta}.$$
Next, set $Z_{i,j}^{(t)}:=\mathds{1}_{\vert X_{i,t}-X_{j,t}\vert>\delta}-\P(\vert X_{i,t}-X_{j,t}\vert>\delta)$.
Then 
$$\P\left(\hat{\sigma}_t^2\leq \epsilon\right)\leq\P\left(\frac{1}{n(n-1)}\sum_{1\leq i\neq j\leq n}Z^{(t)}_{i,j}\leq  \frac{2n^2\epsilon}{\delta^2n(n-1)}-q(\delta)\right).$$
Since $2n^2/(n(n-1))\leq 4$ for every $n\geq 2$, an application of Hoeffding's inequality for $U$-statistics 
\citep[bound $(5.7)$]{Hoeffding}
leads to 
$$\P(\hat{\sigma}_t^2\leq \epsilon)\leq \exp(-2[n/2] t^2)$$
where $[n/2]$ denotes the integer-part of $n/2$ and $t=q(\delta)-\frac{4\epsilon}{\delta^2}$.
It is then always possible to get the conclusion of the Lemma for a suitably chosen constant $L\in (0,L')$. 
\end{proof}

\begin{lem}\label{intern}
{
Suppose that Assumption {\bf A1} holds true and for any $j\geq 1$, the variables $X_{j,1},\ldots,X_{j,T}$ are mutually independent.} Let $\{J_t: t\in \N\}$ be a family of $2\times 2$ matrices such that $\sup_{t\geq 1}\vert J_t\vert<\infty$. 
Suppose that {
$\sup_{t\geq 1} \E X_{1,t}^{4(1+\kappa)}<\infty$ for some $\kappa\in (0,1)$}. Then 
$$\frac{1}{T}\sum_{t=1}^T(\hat{m}_t-m_t)'J_t(\hat{m}_t-m_t)=
\frac{1}{nT}\sum_{t=1}^T\mbox{Tr}\left(J_t\v\left(Y_{1,t}\right)\right)+
O_{\P}\left({
\left(\frac{T}{n}\right)^{\frac{1}{2}-\frac{\kappa}{1+\kappa}}\frac{1}{n^{\frac{2\kappa}{1+\kappa}}\sqrt{nT}}+\frac{1}{\sqrt{n}\sqrt{nT}}}\right),$$
where $Y_{j,t}=(X_{j,t},X_{j,t}^2)'$ for $1\leq j\leq n$ and $1\leq t\leq T$.
\end{lem}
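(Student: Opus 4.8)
Write $\xi_{j,t}:=Y_{j,t}-m_t$, so that $\hat m_t-m_t=\frac1n\sum_{j=1}^n\xi_{j,t}$ and, after expanding the quadratic form,
\[
(\hat m_t-m_t)'J_t(\hat m_t-m_t)=\frac1{n^2}\sum_{j=1}^n\xi_{j,t}'J_t\xi_{j,t}+\frac1{n^2}\sum_{1\le j\ne j'\le n}\xi_{j,t}'J_t\xi_{j',t}.
\]
The first observation is that $\E\big[\xi_{1,t}'J_t\xi_{1,t}\big]=\mbox{Tr}\big(J_t\v(Y_{1,t})\big)$, so that averaging the expectation of the diagonal part over $t$ reproduces exactly the announced main term $\frac1{nT}\sum_{t=1}^T\mbox{Tr}\big(J_t\v(Y_{1,t})\big)$. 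It therefore remains to bound the two centered remainders
\[
B_T:=\frac1{n^2T}\sum_{t=1}^T\sum_{1\le j\ne j'\le n}\xi_{j,t}'J_t\xi_{j',t},\qquad A_T:=\frac1{n^2T}\sum_{t=1}^T\sum_{j=1}^n\Big(\xi_{j,t}'J_t\xi_{j,t}-\E\big[\xi_{j,t}'J_t\xi_{j,t}\big]\Big),
\]
and to check that $B_T$ supplies the $1/(\sqrt n\sqrt{nT})$ term and $A_T$ the other term. Throughout I would use that, under Assumption {\bf A1} and the time-independence hypothesis, the family $\{\xi_{j,t}\}$ is mutually independent and centered, that $\sup_{t\ge1}|J_t|<\infty$, and that by Lyapunov's inequality $\sup_{t\ge1}\E|\xi_{1,t}|^q<\infty$ for every $q\le 2(1+\kappa)$ (here $\E|\xi_{1,t}|^{2(1+\kappa)}$ is controlled by $\E X_{1,t}^{4(1+\kappa)}$ because the second coordinate of $Y_{1,t}$ is $X_{1,t}^2$).

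For $B_T$ I would bound $\E B_T^2$ directly. By independence and centering, a cross product $\E\big[(\xi_{j,t}'J_t\xi_{j',t})(\xi_{k,t'}'J_{t'}\xi_{k',t'})\big]$ vanishes unless $t=t'$ and $\{j,j'\}=\{k,k'\}$; there are $O(n^2T)$ surviving terms, each bounded by $C|J_t|^2\big(\sup_s\E|\xi_{1,s}|^2\big)^2$. Hence $\E B_T^2=O\big(1/(n^2T)\big)$, and Markov's inequality gives $B_T=O_{\P}\big(1/(\sqrt n\sqrt{nT})\big)$; note this only requires $\sup_t\E X_{1,t}^4<\infty$.

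The term $A_T$ is the delicate one. A second-moment estimate would need $\E|\xi_{1,t}|^4$, hence $\E X_{1,t}^8$, which is not available; instead I would work in $L^{1+\kappa}$ with $1+\kappa\in(1,2)$. The summands $\eta_{j,t}:=\xi_{j,t}'J_t\xi_{j,t}-\E[\xi_{j,t}'J_t\xi_{j,t}]$ are independent, centered, and satisfy $\E|\eta_{j,t}|^{1+\kappa}\le C|J_t|^{1+\kappa}\E|\xi_{1,t}|^{2(1+\kappa)}$, which is uniformly bounded since $\sup_t\E X_{1,t}^{4(1+\kappa)}<\infty$. The von Bahr--Esseen inequality for sums of independent centered variables with exponent $1+\kappa$ (equivalently the relevant case of the moment bounds of Lemma \ref{moment}) then yields $\E\big|\sum_{t,j}\eta_{j,t}\big|^{1+\kappa}=O(nT)$, so
\[
\E|A_T|^{1+\kappa}=\frac1{(n^2T)^{1+\kappa}}\,O(nT)=O\big(n^{-(1+2\kappa)}T^{-\kappa}\big),
\]
and Markov's inequality gives $A_T=O_{\P}\big(n^{-(1+2\kappa)/(1+\kappa)}T^{-\kappa/(1+\kappa)}\big)$. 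A short rearrangement of exponents shows this rate equals $\big(T/n\big)^{\frac12-\frac{\kappa}{1+\kappa}}\,n^{-\frac{2\kappa}{1+\kappa}}(nT)^{-1/2}$, the first error term in the statement. Adding the bounds for $A_T$ and $B_T$ concludes.

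I expect the only real obstacle to be the handling of $A_T$: since the hypothesis furnishes moments of $X_{1,t}$ only of order $4(1+\kappa)<8$, the familiar second-moment/variance computation is unavailable and one must pass through a sub-$L^2$ moment inequality, which is also precisely what produces the unusual shape of the error rate. The identification of the leading term, the pairing combinatorics in $\E B_T^2$, and the exponent bookkeeping are all routine.
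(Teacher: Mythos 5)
Your proposal is correct and follows essentially the same route as the paper: the same split into the within-$j$ (diagonal) part, whose expectation yields the trace term and whose centered fluctuation is bounded in $L^{1+\kappa}$ via the $p\in(1,2)$ case of Lemma \ref{moment} (your von Bahr--Esseen step), and the cross-$j$ part handled by a second-moment computation giving $O_{\P}\bigl(1/(\sqrt{n}\sqrt{nT})\bigr)$. Your exponent bookkeeping $n^{-(1+2\kappa)/(1+\kappa)}T^{-\kappa/(1+\kappa)}=(T/n)^{\frac12-\frac{\kappa}{1+\kappa}}n^{-\frac{2\kappa}{1+\kappa}}(nT)^{-1/2}$ matches the paper's bound $\Vert A-\E(A)\Vert_{1+\kappa}=O\bigl(n^{-1-\frac{\kappa}{1+\kappa}}T^{-\frac{\kappa}{1+\kappa}}\bigr)$ exactly.
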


\begin{proof}
Setting $\overline{Y}_{j,t}:=Y_{j,t}-\E Y_{j,t}$ and 
$$A:=\frac{1}{n^2 T}\sum_{t=1}^T\sum_{j=1}^n\overline{Y}_{j,t}'J_t \overline{Y}_{j,t},$$
$$B:=\frac{2}{n^2 T}\sum_{t=1}^T\sum_{1\leq i<j\leq n}\overline{Y}_{i,t}'J_t \overline{Y}_{j,t},$$
we have 
$$\frac{1}{T}\sum_{t=1}^T(\hat{m}_t-m_t)'J_t(\hat{m}_t-m_t)=A+B.$$
It is straightforward to get 
$$\E(A)=\frac{1}{nT}\sum_{t=1}^T\mbox{Tr}\left(J_t\v\left(Y_{1,t}\right)\right).$$
Moreover, {
setting $p=1+\kappa$, one can apply Lemma \ref{moment} to get
$$\Vert A-\E(A)\Vert_p=O\left(n^{-1-\frac{\kappa}{1+\kappa}}T^{-\frac{\kappa}{1+\kappa}}\right).$$}
Then
\begin{eqnarray*}
\E(B^2)&=&\frac{4}{n^4 T^2}\sum_{t=1}^T\sum_{1\leq i<j\leq n}\E(\vert \overline{Y}_{i,t}'J_t\overline{Y}_{j,t}\vert^2)\\
&\leq& \frac{2n(n-1)}{n^4 T}\sup_{t\geq 1}\vert J_t\vert\times \sup_{t\geq 1}\E^2(\vert \overline{Y}_{1,t}\vert^2).
\end{eqnarray*}
Hence $B=O_{\P}\left(\frac{1}{
{\sqrt{n}}\sqrt{nT}}\right)$.
\end{proof}

\begin{lem}\label{convuni}
Assume that $(Z_n)_{n\in \N}$ is a sequence of random vectors of $\R^2$ such that $Z_n\Rightarrow \mathcal{N}\left(0,I_2\right)$. 
Then for any sequence $(u_n)_{n\in\N}$ of deterministic vectors of $\R^2$ such that $\vert u_n\vert=1$, we have 
$$u_n'Z_n\Rightarrow \mathcal{N}(0,1).$$
\end{lem}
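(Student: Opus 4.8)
The plan is to reduce the statement, via the subsequence principle for weak convergence (legitimate since weak convergence of real random variables is metrizable), to showing that every subsequence of $(u_n'Z_n)_n$ admits a further subsequence converging weakly to $\mathcal{N}(0,1)$. The only genuine difficulty is that $u_n$ depends on $n$, so the continuous mapping theorem does not apply directly to the map $z\mapsto u_n'z$; the subsequence device combined with Slutsky's lemma is designed precisely to bypass this obstacle.

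First I would fix an arbitrary subsequence. Since the unit sphere $\{u\in\R^2:\vert u\vert=1\}$ is compact, I can extract from it a further subsequence, still denoted $(n_k)_k$ for simplicity, along which $u_{n_k}\to u$ for some unit vector $u$. I would then use the splitting
$$u_{n_k}'Z_{n_k}=u'Z_{n_k}+(u_{n_k}-u)'Z_{n_k}.$$

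For the first summand, the map $z\mapsto u'z$ is continuous, so the continuous mapping theorem gives $u'Z_{n_k}\Rightarrow u'\mathcal{N}(0,I_2)=\mathcal{N}(0,\vert u\vert^2)=\mathcal{N}(0,1)$. For the second summand, the Cauchy--Schwarz inequality yields $\vert(u_{n_k}-u)'Z_{n_k}\vert\leq \vert u_{n_k}-u\vert\cdot\vert Z_{n_k}\vert$; weak convergence of $(Z_{n_k})_k$ implies tightness of this sequence, hence $\vert Z_{n_k}\vert=O_{\P}(1)$, and since $\vert u_{n_k}-u\vert\to 0$ this term is $o_{\P}(1)$. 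Slutsky's lemma then gives $u_{n_k}'Z_{n_k}\Rightarrow\mathcal{N}(0,1)$.

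Finally, because the limit law $\mathcal{N}(0,1)$ is the same for every choice of subsequence, the subsequence principle yields $u_n'Z_n\Rightarrow\mathcal{N}(0,1)$, which is the claim. I do not expect any serious obstacle here; the only point requiring care is the $n$-dependence of $u_n$, which the detour through compactness of the sphere and Slutsky's lemma handles cleanly. (An alternative route would be to observe that $\E[e^{\mathrm{i} t u_n'Z_n}]$ equals the characteristic function of $Z_n$ evaluated at $tu_n$, and invoke uniform convergence of characteristic functions on compact sets, but the subsequence argument is more self-contained.)
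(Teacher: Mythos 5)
Your proposal is correct and follows essentially the same route as the paper's proof: the subsequence principle, extraction of a convergent subsequence of $(u_n)$ by compactness of the unit sphere, the splitting $u_{n_k}'Z_{n_k}=u'Z_{n_k}+(u_{n_k}-u)'Z_{n_k}$, tightness of $(Z_n)$ to kill the second term, and Slutsky's lemma. Your added details (continuous mapping for $u'Z_{n_k}$, Cauchy--Schwarz for the remainder) just make explicit what the paper leaves implicit, so there is nothing to correct.
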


\begin{proof}
It is clear that the sequence $\left(u_n' Z_n\right)_{n\in\N}$ is bounded in probability (i.e., tight). 
We simply have to show that any subsequence has a subsubsequence converging to the distribution $\mathcal{N}(0,1)$.
Let $\left(u_{\phi_n}'Z_{\phi_n}\right)_{n\in\N}$ be a given subsequence. Since the unit sphere is compact, one can find 
a subsubsequence $\left(u_{\psi_n}\right)_{n\in\N}$ with $\psi_n\in \left\{\phi_k:k\in\N\right\}$ for all $n\in\N$ and such that
$\lim_{n\rightarrow \infty}u_{\psi_n}=u$ for some $u\in\R^2$ such that $\vert u\vert=1$. Since
$$u_{\psi_n}'Z_{\psi_n}=\left(u_{\psi_n}-u\right)'Z_{\psi_n}+u'Z_{\psi_n},$$
our assumption guarantees that $u' Z_{\psi_n}\Rightarrow \mathcal{N}(0,1)$ and we also know that $\left(u_{\psi_n}-u\right)'Z_{\psi_n}$ will converge in probability to $0$ as $n\rightarrow \infty$. From Slutsky's lemma, we conclude that $u_{\psi_n}'Z_{\psi_n}\Rightarrow \mathcal{N}(0,1)$.
\end{proof}

\begin{lem}\label{convuni2}
Suppose $Z_n=\hat{A}_n W_n-\hat{B}_n\Rightarrow \mathcal{N}_2(0,I_2)$ with $\hat{A}_n$ a random real matrix of size $2\times 2$, 
$W_n$ a random real vector of $\R^2$ and $\hat{B}_n$ a random real vector of $\R^2$. 
Suppose furthermore that $\hat{A}_n-A_n=o_{\P}(1)$, with $A_n$ deterministic and such that $A_n=O(1)$ and $A_n^{-1}=O(1)$.
Setting $H_n:=\hat{A}_n^{-1}(\hat{A}_n^{-1})'$ and $m_n=\hat{A}_n^{-1}\hat{B}_n$, we have for $i=1,2$, 
$$\frac{W_n(i)-m_n(i)}{\sqrt{H_n(i,i)}}\Rightarrow \mathcal{N}(0,1).$$
\end{lem}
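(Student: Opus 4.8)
The backbone of the argument will be the elementary identity
$$W_n-m_n=\hat{A}_n^{-1}\bigl(\hat{A}_nW_n-\hat{B}_n\bigr)=\hat{A}_n^{-1}Z_n,$$
which holds on the event $\{\hat{A}_n\text{ is invertible}\}$. Since $A_n^{-1}=O(1)$ and $\hat{A}_n-A_n=o_{\P}(1)$, that event has probability tending to one, so it may be discarded for the purpose of establishing weak convergence. Writing $e_1,e_2$ for the canonical basis of $\R^2$ and using $H_n(i,i)=e_i'\hat{A}_n^{-1}(\hat{A}_n^{-1})'e_i=\bigl|(\hat{A}_n^{-1})'e_i\bigr|^2$, the identity gives
$$\frac{W_n(i)-m_n(i)}{\sqrt{H_n(i,i)}}=\frac{\bigl((\hat{A}_n^{-1})'e_i\bigr)'Z_n}{\bigl|(\hat{A}_n^{-1})'e_i\bigr|}=u_n'Z_n,\qquad u_n:=\frac{(\hat{A}_n^{-1})'e_i}{\bigl|(\hat{A}_n^{-1})'e_i\bigr|},$$
so the quantity of interest is $Z_n$ tested against a \emph{random} unit vector $u_n$, and the crux of the proof will be to replace $u_n$ by a deterministic one.

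First I would record that matrix inversion is locally Lipschitz near $A_n$: exactly as in the proof of Theorem \ref{th1}, from $\hat{A}_n-A_n=o_{\P}(1)$, $A_n^{-1}=O(1)$ and the uniform bound $\sup_{|H|\le\delta}\bigl|(A_n+H)^{-1}-A_n^{-1}\bigr|=O(1)$ for $\delta$ small, one deduces $\hat{A}_n^{-1}-A_n^{-1}=o_{\P}(1)$, hence $(\hat{A}_n^{-1})'e_i-(A_n^{-1})'e_i=o_{\P}(1)$. Next, $A_n=O(1)$ forces $\bigl|(A_n^{-1})'e_i\bigr|\ge 1/|A_n|\ge c>0$ (from $A_n'(A_n^{-1})'e_i=e_i$), while $A_n^{-1}=O(1)$ forces $\bigl|(A_n^{-1})'e_i\bigr|\le C$; hence $v_n:=(A_n^{-1})'e_i/\bigl|(A_n^{-1})'e_i\bigr|$ is a well-defined sequence of \emph{deterministic} unit vectors, and the same two-sided bound holds for $\bigl|(\hat{A}_n^{-1})'e_i\bigr|$ with probability tending to one. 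Dividing a vector that converges to $0$ in probability by a denominator bounded away from $0$, I would then conclude $u_n-v_n=o_{\P}(1)$.

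Finally, I would decompose $u_n'Z_n=v_n'Z_n+(u_n-v_n)'Z_n$. Lemma \ref{convuni}, applied to the deterministic unit vectors $v_n$, yields $v_n'Z_n\Rightarrow\mathcal{N}(0,1)$; and since $Z_n\Rightarrow\mathcal{N}_2(0,I_2)$ is in particular tight while $u_n-v_n=o_{\P}(1)$, the Cauchy-Schwarz inequality gives $\bigl|(u_n-v_n)'Z_n\bigr|\le|u_n-v_n|\,|Z_n|=o_{\P}(1)$. Slutsky's lemma then delivers $u_n'Z_n\Rightarrow\mathcal{N}(0,1)$, which is the assertion. The only point requiring care is the bookkeeping of the asymptotically negligible event on which $\hat{A}_n$ is singular or $\bigl|(\hat{A}_n^{-1})'e_i\bigr|$ is atypically small; apart from that, the proof is a routine combination of local Lipschitzness of inversion, the deterministic Lemma \ref{convuni}, and Slutsky, the argument being essentially a lift of Lemma \ref{convuni} from deterministic to random unit vectors, and no single step is a genuine obstacle.
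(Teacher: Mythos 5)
Your proposal is correct and follows essentially the same route as the paper: prove $\hat{A}_n^{-1}-A_n^{-1}=o_{\P}(1)$, note that $(A_n^{-1})'e_i$ is bounded above and away from zero so the normalized deterministic vectors are well defined, reduce the statistic via $W_n-m_n=\hat{A}_n^{-1}Z_n$ and $H_n(i,i)=\bigl|(\hat{A}_n^{-1})'e_i\bigr|^2$ to a random unit vector tested against $Z_n$, and then transfer to the deterministic direction and invoke Lemma \ref{convuni} together with Slutsky. The only cosmetic difference is that the paper establishes the stochastic boundedness of $\hat{A}_n^{-1}$ via a Neumann series expansion rather than your local-Lipschitz phrasing, which changes nothing of substance.
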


\begin{proof}
To show first that 
\begin{equation}\label{inverse}
\hat{A}_n^{-1}-A_n^{-1}=o_{\P}(1),
\end{equation}
we write
$$\hat{A}_n^{-1}=\left(A_n-(A_n-\hat{A}_n)\right)^{-1}=\sum_{k\geq 0}\left(A_n^{-1}(A_n-\hat{A}_n)\right)^kA_n^{-1},$$
which is bounded in probability because $\hat{A}_n-A_n\to 0$ in probability.
We then deduce that 
$$\hat{A}_n^{-1}-A_n^{-1}=\hat{A}_n^{-1}\left(A_n-\hat{A}_n\right)A_n^{-1}=o_{P}(1),$$
which proves (\ref{inverse}). 
Next, for $i=1,2$, we set 
$\hat{v}_{n,i}:=(\hat{A}_n^{-1})'e_i$, where $e_1:=(1,0)', e_2:=(0,1)'$.
Define also $v_{n,i}:=(A_n^{-1})'e_i, i = 1, 2$. 
Since $e_i=A_n'v_{n,i}$ and $A_n=O(1)$, $v_{n,i}$ cannot approximate $0$. Moreover, $\hat{v}_{n,i}-v_{n,i}=o_{\P}(1)$. 
Using also that $A_n^{-1}=O(1)$, {$v_{n,i}$ is bounded and} we deduce that 
$$\frac{\hat{v}_{n,i}}{\left\vert \hat{v}_{n,i}\right\vert}-\frac{v_{n,i}}{\left\vert v_{n,i}\right\vert}=o_{P}(1).$$
We then conclude that 
$$\frac{\hat{v}_{n,i}'Z_n}{\left\vert \hat{v}_{n,i}\right\vert}-\frac{v_{n,i}'Z_n}{\left\vert v_{n,i}\right\vert}=o_{P}(1).$$
Using Lemma \ref{convuni}, we know that ${v_{n,i}
{'}Z_n/\vert v_{n,i}\vert}\Rightarrow \mathcal{N}(0,1)$.
Since 
$$\frac{\hat{v}_{n,i}'Z_n}{\left\vert \hat{v}_{n,i}\right\vert}=\frac{W_n(i)-m_n(i)}{\vert \hat{v}_{n,i}\vert},$$
we only have to show that $\vert v_{n,i}\vert=\sqrt{H_n(i,i)}$ to finish the proof. 
\end{proof}

The following lemma is useful to bound moments of partial sums. It is a very special case of the Burkholder inequality for martingales
\citep[Theorem 2.10]{hall2014martingale}.
\begin{lem}\label{moment}
If $\xi_1,\ldots,\xi_n$ are {
independent} random variables with $0$ mean and a finite moment of order $p>1$, then there exists a constant $C_p>0$ such that
$$\Big\Vert \sum_{i=1}^n \xi_i\Big\Vert_p\leq C_p\times \left\{\begin{array}{c} n^{1/p}\max_{1\leq i\leq n}\Vert \xi_i\Vert_p\mbox{ if } p\in (1,2),\\
n^{1/2}\max_{1\leq i\leq n}\Vert \xi_i\Vert_{p/2}^{1/2}\mbox{ if } p\geq 2.\end{array}\right.$$
\end{lem}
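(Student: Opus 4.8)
The plan is to view the partial sums as a martingale, apply the Burkholder inequality quoted in the statement to reduce the problem to a bound on the quadratic variation $\sum_{i=1}^n\xi_i^2$, and then estimate that variation by elementary manipulations of $\L_p$ norms, distinguishing the two ranges of $p$.

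First I would set up the martingale. Let $\mathcal{F}_0$ be the trivial $\sigma$-field and $\mathcal{F}_i:=\sigma(\xi_1,\ldots,\xi_i)$ for $1\leq i\leq n$. Since the $\xi_i$ are independent and centered, $\E\left(\xi_{i+1}\mid\mathcal{F}_i\right)=\E\xi_{i+1}=0$, so $S_k:=\sum_{i=1}^k\xi_i$, $0\leq k\leq n$, is an $\mathcal{F}$-martingale whose increments are the $\xi_i$ and whose quadratic variation is $\sum_{i=1}^n\xi_i^2$. For any $p>1$, the Burkholder inequality, as in \citet[Theorem 2.10]{hall2014martingale}, then provides a constant $C_p$ depending only on $p$ such that
$$\Big\Vert\sum_{i=1}^n\xi_i\Big\Vert_p=\Vert S_n\Vert_p\leq C_p\,\Big\Vert\Big(\sum_{i=1}^n\xi_i^2\Big)^{1/2}\Big\Vert_p.$$
It remains to bound the right-hand side, and the dichotomy $p\in(1,2)$ versus $p\geq 2$ in the statement is exactly the dichotomy $p/2\leq 1$ versus $p/2\geq 1$ one uses to control $\E\big(\sum_{i=1}^n\xi_i^2\big)^{p/2}$.

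If $p\in(1,2)$, then $p/2\in(1/2,1)$, so $t\mapsto t^{p/2}$ is subadditive on $[0,\infty)$ and hence $\big(\sum_{i=1}^n\xi_i^2\big)^{p/2}\leq\sum_{i=1}^n\vert\xi_i\vert^{p}$. Taking expectations, $\E\big(\sum_{i=1}^n\xi_i^2\big)^{p/2}\leq\sum_{i=1}^n\E\vert\xi_i\vert^{p}\leq n\max_{1\leq i\leq n}\Vert\xi_i\Vert_p^{p}$, and raising to the power $1/p$ gives $\big\Vert\big(\sum_{i=1}^n\xi_i^2\big)^{1/2}\big\Vert_p\leq n^{1/p}\max_{1\leq i\leq n}\Vert\xi_i\Vert_p$; combined with the Burkholder bound this is the first case. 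If $p\geq 2$, then $p/2\geq 1$, so $\Vert\cdot\Vert_{p/2}$ is a genuine norm and Minkowski's inequality gives $\big\Vert\sum_{i=1}^n\xi_i^2\big\Vert_{p/2}\leq\sum_{i=1}^n\Vert\xi_i^2\Vert_{p/2}=\sum_{i=1}^n\Vert\xi_i\Vert_p^{2}\leq n\max_{1\leq i\leq n}\Vert\xi_i\Vert_p^{2}$, where we used $\Vert\xi_i^2\Vert_{p/2}=\Vert\xi_i\Vert_p^{2}$. Since $\big\Vert\big(\sum_{i=1}^n\xi_i^2\big)^{1/2}\big\Vert_p^{2}=\big\Vert\sum_{i=1}^n\xi_i^2\big\Vert_{p/2}$, taking square roots and combining with the Burkholder bound yields $\Vert S_n\Vert_p\leq C_p\,n^{1/2}\max_{1\leq i\leq n}\Vert\xi_i\Vert_p=C_p\,n^{1/2}\max_{1\leq i\leq n}\Vert\xi_i^2\Vert_{p/2}^{1/2}$, which is the second case.

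There is no substantive obstacle here: once $S_n$ is recognized as a martingale, the result is Burkholder's inequality followed by a one-line estimate of the quadratic variation. The only point needing a little care is the passage through $p=2$: for $p<2$ the functional $\Vert\cdot\Vert_{p/2}$ fails the triangle inequality, so Minkowski is unavailable and one must instead invoke the scalar subadditivity of $t\mapsto t^{p/2}$. Alternatively, both cases follow at once from Rosenthal's inequality for sums of independent random variables, but the martingale route matches the reference attached to the statement.
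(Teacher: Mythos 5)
Your proof is correct and follows essentially the same route as the paper, which simply cites Burkholder's inequality \citep[Theorem 2.10]{hall2014martingale} and leaves implicit the routine square-function estimate (subadditivity of $t\mapsto t^{p/2}$ for $p\in(1,2)$, Minkowski in $\L_{p/2}$ for $p\geq 2$) that you spell out. Note also that your reading of the second case as $n^{1/2}\max_i\Vert\xi_i^2\Vert_{p/2}^{1/2}=n^{1/2}\max_i\Vert\xi_i\Vert_p$ is the intended (and, by scaling, the only correct) interpretation of the displayed bound, which as printed omits the square on $\xi_i$.
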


\begin{cor}\label{momcons}
{
Suppose that Assumption {\bf A1} is fulfilled and that} for some $k>4$, $\sup_{t\geq 1}\E\left\vert X_{1,t}\right\vert^k<\infty$. 
Define for $i=1,2,3,4$, $S_{i,t}:=\frac{1}{n}\sum_{j=1}^n (X_{j,t}^i-\E[X_{j,t}^i])$. 
Suppose that for some integer $d\geq 1$ and $i=1,\ldots,4$, $\kappa_i$ is a positive real number, $\alpha_i\in\{1,\ldots,4\}$ with 
$\sum_{i=1}^d\alpha_i\kappa_i\leq k$. 

Then 
$$\sup_{t\geq 1}\E\left[\prod_{i=1}^d \left\vert S_{\alpha_i,t}\right\vert^{\kappa_i}\right]=O\left(n^{-\sum_{i=1}^d\kappa_i\min\{1/2,1-\alpha_i/k\}}\right).$$
\end{cor}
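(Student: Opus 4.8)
The plan is to reduce the bound to $d$ separate applications of Lemma~\ref{moment}, one per factor, after a single use of the generalized H\"older inequality whose exponents are chosen so that each factor $S_{\alpha_i,t}$ is raised to exactly its critical power $k/\alpha_i$. First I would set, for $i=1,\dots,d$,
$$r_i:=\frac{k}{\alpha_i\kappa_i},\qquad q_i:=\kappa_i r_i=\frac{k}{\alpha_i}.$$
Since $\alpha_i\le 4<k$ we have $q_i>1$ and $r_i\ge 1$, and $\sum_{i=1}^d r_i^{-1}=k^{-1}\sum_{i=1}^d\alpha_i\kappa_i\le 1$ by hypothesis. If this sum is $<1$, I would append a dummy factor equal to the constant $1$ carrying the remaining exponent; the generalized H\"older inequality then gives
$$\E\Big[\prod_{i=1}^d|S_{\alpha_i,t}|^{\kappa_i}\Big]\le\prod_{i=1}^d\big(\E|S_{\alpha_i,t}|^{q_i}\big)^{1/r_i}=\prod_{i=1}^d\big\Vert S_{\alpha_i,t}\big\Vert_{q_i}^{\kappa_i},$$
the last equality since $q_i/r_i=\kappa_i$.

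Next I would estimate each $\Vert S_{\alpha_i,t}\Vert_{q_i}$. Writing $S_{\alpha_i,t}=\frac1n\sum_{j=1}^n\xi_j$ with $\xi_j:=X_{j,t}^{\alpha_i}-\E X_{j,t}^{\alpha_i}$, Assumption~{\bf A1} makes $\xi_1,\dots,\xi_n$ iid and centred, with $\E|\xi_j|^{q_i}\le 2^{q_i}\E|X_{1,t}|^{\alpha_i q_i}=2^{q_i}\E|X_{1,t}|^{k}$ finite uniformly in $t$ by the moment hypothesis $\sup_{t\ge1}\E|X_{1,t}|^k<\infty$. Lemma~\ref{moment} applied with $p=q_i$, together with the $1/n$ prefactor, then yields
$$\big\Vert S_{\alpha_i,t}\big\Vert_{q_i}=O\big(n^{-\min\{1/2,\,1-1/q_i\}}\big)\qquad\text{uniformly in }t,$$
using the case $p\in(1,2)$ (which turns $n^{1/p}$ into $n^{1/p-1}=n^{-(1-1/q_i)}$) or the case $p\ge 2$ (which turns $n^{1/2}$ into $n^{-1/2}$), the hidden constant depending only on $q_i$ and on $\sup_{t\ge1}\Vert\xi_1\Vert_{q_i}<\infty$. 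Because $q_i=k/\alpha_i$ we have $1-1/q_i=1-\alpha_i/k$, so $\Vert S_{\alpha_i,t}\Vert_{q_i}^{\kappa_i}=O(n^{-\kappa_i\min\{1/2,1-\alpha_i/k\}})$ uniformly in $t$; multiplying the $d$ estimates and taking the supremum over $t$ gives the stated rate.

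The argument is essentially bookkeeping, and the only points needing care are (i) that $q_i=k/\alpha_i$ is the \emph{forced} choice: it is the largest value for which the individual moment $\E|X_{1,t}^{\alpha_i}|^{q_i}=\E|X_{1,t}|^{k}$ remains finite, and since $q\mapsto\min\{1/2,1-1/q\}$ is nondecreasing this is exactly what makes the per-factor exponent match the target one; and (ii) the degenerate cases, namely $d=1$ with $\alpha_1\kappa_1=k$ (then $r_1=1$ and the H\"older step is an identity) and the slack case $\sum_i\alpha_i\kappa_i<k$ (handled by the dummy factor above). Uniformity in $t$ propagates through because the single bound $\sup_{t\ge1}\E|X_{1,t}|^k<\infty$ controls all the relevant moments of the $\xi_j$ uniformly in $t$.
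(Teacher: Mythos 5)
Your proof is correct and takes essentially the same route as the paper's: H\"older's inequality with exponents $p_i=k/(\alpha_i\kappa_i)$ (so $\sum_i 1/p_i\le 1$) reduces the product to the individual moments $\E\left[\vert S_{\alpha_i,t}\vert^{k/\alpha_i}\right]$, which are then bounded factor by factor via Lemma \ref{moment}, giving the rate $n^{-\min\{1/2,\,1-\alpha_i/k\}}$ per factor. The additional details you spell out (the dummy factor when $\sum_i\alpha_i\kappa_i<k$, the check that $q_i=k/\alpha_i>1$, and the uniformity in $t$) are exactly the bookkeeping the paper leaves implicit.
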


\begin{proof}
Setting 
$p_i:={k/(\alpha_i\kappa_i)}$ for $1\leq i\leq d$, we have $\sum_{i=1}^d 1/p_i\leq 1$ and applying the H\"older inequality, we get 
$$\E\left[\prod_{i=1}^d \left\vert S_{\alpha_i,t}\right\vert^{\kappa_i}\right]\leq \prod_{i=1}^d \E^{\frac{1}{p_i}}\left[\left\vert S_{\alpha_i,t}\right\vert^{\frac{k}{\alpha_i}}\right].$$
The result then follows from Lemma \ref{moment}.
\end{proof}

\begin{lem}\label{defpos}
Given the assumptions of Theorem \ref{NA},
let $\Gamma_T$ be the symmetric, non-negative definite matrix 
$$\Gamma_T:=\frac{1}{T}\sum_{t=1}^T J_{\phi_{\theta}}(m_t)\v\left(Y_{1,t}\right)J_{\phi_{\theta}}(m_t)'.$$
Let $\lambda_T$ be the smallest eigenvalue of $\Gamma_T$ and
$$A_T=\frac{1}{T}\sum_{t=1}^T J_{\phi_{\theta}}(m_t).$$
\begin{enumerate}
\item There exists $\xi>0$ such that
$$\lambda_T\geq \xi\min_{1\leq t\leq T}\det[\v(Y_{1,t})]{\det}^2(A_T).$$
\item
Suppose that $X_{1,t}$ follows a Poisson distribution with parameter $\mu_t=\mu(t/T)$ with $\mu:[0,1]\rightarrow (0,\infty)$ a continuous function.
Then there exists $\zeta>0$ such that for $T$ large enough, $\lambda_T\geq \zeta$.
\end{enumerate}
\end{lem}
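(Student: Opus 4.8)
The plan is to treat the two parts separately; part~1 is elementary $2\times 2$ linear algebra, while part~2 rests on a structural observation about the Jacobian. Write $\Sigma_t:=\v(Y_{1,t})$ and $J_t:=J_{\phi_{\theta}}(m_t)$ throughout, and recall that under the assumptions of Theorem~\ref{NA} one has, as already used in its proof, $\max_{1\le t\le T}|J_{\phi_{\theta}}(m_t)|=O(1)$ (via Lemma~\ref{prems} and {\bf A2}), and $k>4$ in {\bf A2}.

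For part~1 I would start from $\lambda_T=\min_{|u|=1}\frac1T\sum_{t=1}^T(J_t'u)'\Sigma_t(J_t'u)$ and bound, using that each $\Sigma_t$ is non-negative definite, $(J_t'u)'\Sigma_t(J_t'u)\ge\lambda_{\min}(\Sigma_t)\,|J_t'u|^2$. For a $2\times 2$ non-negative matrix, $\lambda_{\min}\ge\det/\mathrm{Tr}$, and $\mathrm{Tr}\,\Sigma_t=\v(X_{1,t})+\v(X_{1,t}^2)$ is bounded uniformly in $t$ since $k>4$; hence $\lambda_{\min}(\Sigma_t)\ge C_1\det\Sigma_t$ with $C_1:=(\sup_t\mathrm{Tr}\,\Sigma_t)^{-1}>0$. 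It then remains to bound $\min_{|u|=1}\frac1T\sum_t|J_t'u|^2=\lambda_{\min}\!\left(\frac1T\sum_tJ_tJ_t'\right)$ below by $\det^2(A_T)$ up to a constant. By convexity of $|\cdot|^2$, for each fixed $u$ with $|u|=1$, $\frac1T\sum_t|J_t'u|^2\ge|A_T'u|^2$, so $\lambda_{\min}\!\left(\frac1T\sum_tJ_tJ_t'\right)\ge\lambda_{\min}(A_TA_T')$, which is the square of the smallest singular value of $A_T$; and since $A_T$ is $2\times 2$, that smallest singular value equals $|\det A_T|/|A_T|\ge|\det A_T|/\max_{1\le t\le T}|J_{\phi_{\theta}}(m_t)|$. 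Combining the estimates gives $\lambda_T\ge\xi\,\min_{1\le t\le T}\det\Sigma_t\cdot\det^2(A_T)$ with $\xi:=C_1\,(\sup_t|J_{\phi_{\theta}}(m_t)|)^{-2}$.

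For part~2 the decisive observation is that $\phi_{2,\theta}(x,y)=(\log x)\,\phi_{1,\theta}(x,y)$ identically, while at the true parameter $\phi_{1,\theta}(m_t)=\log\sigma_t^2-\log\alpha-\beta\log\mu_t=0$ by \eqref{loglogTL}. Differentiating the product and evaluating at $m_t$ therefore yields $\nabla\phi_{2,\theta}(m_t)=(\log\mu_t)\,\nabla\phi_{1,\theta}(m_t)$, so $J_t$ is a rank-one matrix, $J_t=v_tw_t'$, with $v_t:=(1,\log\mu_t)'$ and $w_t:=\nabla\phi_{1,\theta}(m_t)=\left(-2\mu_t/\sigma_t^2-\beta/\mu_t,\ 1/\sigma_t^2\right)'$. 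Hence $J_t\Sigma_tJ_t'=\tau_t\,v_tv_t'$ with the scalar $\tau_t:=w_t'\Sigma_tw_t>0$, so that
$$\Gamma_T=\frac1T\sum_{t=1}^T\tau_t\,v_tv_t',\qquad \lambda_T\ge\left(\min_{1\le t\le T}\tau_t\right)\lambda_{\min}\!\left(\frac1T\sum_{t=1}^Tv_tv_t'\right).$$
The Gram matrix $\frac1T\sum_tv_tv_t'$ has determinant $\frac1T\sum_t(\log\mu_t)^2-\left(\frac1T\sum_t\log\mu_t\right)^2\ge c$ by {\bf A4}, and bounded trace because $\mu_t=\mu(t/T)$ stays in the compact interval $[\min_{x\in[0,1]}\mu(x),\max_{x\in[0,1]}\mu(x)]\subset(0,\infty)$ (continuity of $\mu$ on $[0,1]$); hence its smallest eigenvalue is bounded below by a positive constant independent of $T$. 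For $\min_t\tau_t$: since $|w_t|\ge w_{t,2}=1/\sigma_t^2=1/\mu_t$, we get $\tau_t\ge\lambda_{\min}(\Sigma_t)\,|w_t|^2\ge\lambda_{\min}(\Sigma_t)/\mu_t^2$, and a direct Poisson moment computation gives $\det\Sigma_t=2\mu_t^3$ and $\mathrm{Tr}\,\Sigma_t=4\mu_t^3+6\mu_t^2+2\mu_t$, so $\lambda_{\min}(\Sigma_t)\ge 2\mu_t^3/(4\mu_t^3+6\mu_t^2+2\mu_t)$, which together with the bound on $\mu_t$ is bounded below uniformly. This yields $\lambda_T\ge\zeta>0$ for every $T\ge 1$, a fortiori for $T$ large.

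The only genuinely non-obvious step — and the main obstacle if one overlooks it — is the rank-one factorization $J_{\phi_{\theta}}(m_t)=v_tw_t'$. It replaces the delicate question of whether $\det^2(A_T)$ stays bounded away from $0$ as $T\to\infty$ by the transparent fact that $\Gamma_T$ is a $\tau_t$-weighted second-moment matrix of the vectors $(1,\log\mu_t)'$, controlled directly by Assumption~{\bf A4}. Without it, one would instead have to argue through the Riemann-sum limit $A_T\to\int_0^1J_{\phi_{\theta}}(m(x))\,dx$ and verify that this integral is invertible, which is harder and delivers the conclusion only for large $T$.
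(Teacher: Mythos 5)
Your proposal is correct. For part 1 you follow essentially the paper's own route: bound $\lambda_{\min}(\v(Y_{1,t}))$ by $\det/\mathrm{Tr}$, use convexity (Jensen) to pass from $\frac1T\sum_t|J_t'u|^2$ to $|A_T'u|^2$, and control the smallest eigenvalue of $A_TA_T'$ by $\det^2(A_T)$ divided by a uniformly bounded quantity; whether one divides by $\sum_{i,j}A_T(i,j)^2$ (the paper) or by the squared operator norm via $\sigma_{\min}=|\det A_T|/\sigma_{\max}$ (you) is cosmetic, and the needed uniform bounds on $\mathrm{Tr}\,\v(Y_{1,t})$ and $|J_{\phi_\theta}(m_t)|$ follow from {\bf A2} with $k>4$ and Lemma \ref{prems} exactly as you say. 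For part 2, however, you take a genuinely different and arguably cleaner route. The paper applies part 1: it computes $A_T$ explicitly for the Poisson family, identifies $\lim_T\det(A_T)$ as a difference of Riemann integrals, and proves its positivity by a two-stage Jensen argument — which is why its conclusion is only "for $T$ large enough" and implicitly relies on $\mu$ being non-constant (equivalently on {\bf A4}, since the Jensen inequalities are equalities for constant $\mu$). You instead exploit the identity $\phi_{2,\theta}(x,y)=\log(x)\,\phi_{1,\theta}(x,y)$ together with $\phi_{1,\theta}(m_t)=0$ under \eqref{loglogTL}, so that $J_{\phi_\theta}(m_t)=v_tw_t'$ is rank one with $v_t=(1,\log\mu_t)'$, turning $\Gamma_T$ into a $\tau_t$-weighted Gram matrix of the regressors $(1,\log\mu_t)'$; the lower bound then comes directly from {\bf A4} (determinant) plus boundedness of $\log\mu_t$ (trace) and the elementary Poisson computations $\det\v(Y_{1,t})=2\mu_t^3$, $\mathrm{Tr}\,\v(Y_{1,t})=4\mu_t^3+6\mu_t^2+2\mu_t$, which you state correctly. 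What each approach buys: the paper's derivation of part 2 is a direct application of part 1 and of the specific Poisson structure of $A_T$, whereas your rank-one factorization bypasses part 1 and the large-$T$ limit entirely, yields $\lambda_T\ge\zeta$ for every $T$, makes the role of {\bf A4} explicit, and — since the factorization only uses that TL holds exactly at $m_t$ — extends verbatim beyond the Poisson case to any family for which $\min_t w_t'\v(Y_{1,t})w_t$ is bounded away from zero.
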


\begin{proof}
\begin{enumerate}
\item
Let $\gamma_t$ be the smallest eigenvalue of $\v(Y_{1,t})$ and $\beta_T$ the smallest eigenvalue of the matrix $A_TA_T'$ .
If $x\in\R^2$, then
\begin{eqnarray*}
x'\Gamma_Tx&\geq &\frac{1}{T}\sum_{t=1}^T x'J_{\phi_{\theta}}(m_t)\v(Y_{1,t})J_{\phi_{\theta}}(m_t)'x\\
&\geq& \frac{1}{T}\sum_{t=1}^T\gamma_t \vert J_{\phi_{\theta}}(m_t)'x\vert^2\\
&\geq& \min_{1\leq t\leq T}\gamma_t (x'A_T)^2\\
&\geq& \min_{1\leq t\leq T} \gamma_t \beta_T\\
&\geq& \min_{1\leq t\leq T}\frac{\det\v(Y_{1,t})}{\mbox{Tr} \v(Y_{1,t})}\cdot\frac{\det^2 A_T}{\sum_{1\leq i,j\leq 2}\vert A_T(i,j)\vert^2}.
\end{eqnarray*}
Let us comment on the previous inequalities. 
The second and the fourth one are based on the Courant-Fischer theorem. 
The third one is based on Jensen's inequality.
The last inequality is a consequence of the inequality $\lambda\geq \det(B)/\mbox{Tr}(B)$ for any (non-null) non-negative definite matrix $B$ with an eigenvalue $\lambda$. 
The result announced follows from the Courant-Fischer theorem and the fact that 
the assumptions of Theorem \ref{NA} guarantee that the coefficients of the matrices $A_T$ and $\v(Y_{1,t})$ 
are bounded with respect to $T$ and $t$, respectively.
\item
Since $\mu_t=\sigma_t^2$ for the Poisson distribution, 
using the expressions of the first four moments of the Poisson distribution, 
we get $\det\v(Y_{1,t})=2\mu_t^3$.
The assumption on the function $\mu$ entails that $\inf_{t\geq 1}\mu_t>0$.
Moreover,
$$A_T=\begin{pmatrix} -2-\frac{1}{T}\sum_{t=1}^T \frac{1}{\mu_t}&\frac{1}{T}\sum_{t=1}^T \frac{1}{\mu_t}\\
-\frac{1}{T}\sum_{t=1}^T\left(2+\frac{1}{\mu_t}\right)\log\mu_t& \frac{1}{T}\sum_{t=1}^T \frac{\log \mu_t}{\mu_t}\end{pmatrix}.$$ 
Then 
$$\lim_{T\rightarrow\infty}\det(A_T)=2\int_0^1\frac{1}{\mu(u)}du\cdot\int_0^1\log\mu(u)du-2\int_0^1 \frac{\log\mu(u)}{\mu(u)}du.$$
One can show that the previous limit is positive, since from Jensen's inequality applied first to the probability measure $P(du)=\mathds{1}_{[0,1]}(u)du/\left(\mu(u)\int_0^1\mu(v)^{-1}dv\right)$ and then to the uniform distribution over $[0,1]$,
$$\left(\int_0^1\frac{1}{\mu(u)}du\right)^{-1}\int_0^1 \frac{\log\mu(u)}{\mu(u)}du< -\log \int_0^1 \frac{1}{\mu(u)}du<\int_0^1\log\mu(u)du.$$
\end{enumerate}
This proves the lemma.
\end{proof}

\begin{lem}\label{conver}
Suppose that all the assumptions of Theorem \ref{NA} are valid, with or without the additional condition on $T/n$ given in point $1$ or point $2$. 
Set $\Sigma_t:=\v(Y_{1,t})$ and 
$$\hat{\Sigma}_t:=\frac{1}{n}\sum_{j=1}^n Y_{j,t}Y_{j,t}'-\overline{Y}_{n,t}\overline{Y}_{n,t}',\quad \overline{Y}_{n,t}:=\frac{1}{n}\sum_{j=1}^nY_{j,t},$$
$$\hat{\Gamma}_T:=\frac{1}{T}\sum_{t=1}^T J_{\phi_{\hat{\theta}}}(\hat{m}_t)\hat{\Sigma}_tJ_{\phi_{\hat{\theta}}}(\hat{m}_t)',$$
$$\hat{G}_t:=\left(\mbox{Tr}(\nabla^{(2)}\phi_{1,\hat{\theta}}(\hat{m}_t)\hat{\Sigma}_t),\mbox{Tr}(\nabla^{(2)}\phi_{2,\hat{\theta}}(\hat{m}_t)\hat{\Sigma}_t)\right).$$
Then, with $\Gamma_T$ defined in the statement of Lemma \ref{defpos},
\begin{equation}\label{avantdernier}
\hat{\Gamma}_T-\Gamma_T=o_{\P}(1).
\end{equation}
Moreover, if $C_T:=\Gamma_T^{-1/2}$ and $\hat{C}_T:=\hat{\Gamma}_T^{-1/2}$, we have 
\begin{eqnarray}
\hat{C}_T-C_T&=o_{\P}(1),\label{presque}\\
\frac{1}{T}\sum_{t=1}^T(\hat{G}_t-G_t)&=o_{\P}(1).\label{dernier}
\end{eqnarray}
\end{lem}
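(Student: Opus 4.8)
The plan is to reduce all three assertions to pointwise perturbation bounds for $J_{\phi_{\theta}}$ and $\nabla^{(2)}\phi_{i,\theta}$, the moment estimates of Corollary~\ref{momcons}, and a subsequence argument for the matrix square root. Since $k>4$ and $T_n/n=O(1)$ under the hypotheses of Theorem~\ref{NA}, Theorem~\ref{th1} gives $\hat\theta-\theta=o_{\P}(1)$, and the event $E_n(\epsilon)=\bigcap_{t=1}^T\{\hat\mu_t>\epsilon,\hat\sigma_t^2>\epsilon\}$ from the proof of Theorem~\ref{th1} satisfies $\P(E_n(\epsilon)^c)\le 2T_n e^{-nL}\to0$. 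As any quantity supported on $E_n(\epsilon)^c$ is automatically $o_{\P}(1)$ and $\Gamma_T=O(1)$, it suffices to treat the versions of $\hat\Gamma_T$, $\hat C_T$, $\hat G_t$ in which every summand carries the factor $\mathds{1}_{E_n(\epsilon)}$. On $E_n(\epsilon)$, together with Lemma~\ref{prems}.1 and the concavity of $(x,y)\mapsto y-x^2$ along segments (as in the proof of Theorem~\ref{NA}), all arguments $m_t,\hat m_t$ and all intermediate points of a Taylor expansion lie in $\mathcal D_w:=\{(x,y):x\ge w,\ y-x^2\ge w\}$, $w:=\min(\epsilon,s)>0$, while $\hat\theta$ stays in a small ball around $\theta$ with probability tending to one and each $m_t$ stays in a fixed compact box (so $J_{\phi_{\theta}}(m_t),\nabla^{(2)}\phi_{i,\theta}(m_t)$ are deterministic and bounded). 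On $\mathcal D_w$ each entry of $J_{\phi_{\theta}}$ is $O(1+x^{1+\delta})$ and of $\nabla^{(2)}\phi_{i,\theta}$ is $O(1+x^{2+\delta})$ for every $\delta>0$, with the $\theta$-dependence affine and bounded; by the mean value theorem the increments $J_{\phi_{\theta}}(\hat m_t)-J_{\phi_{\theta}}(m_t)$ and $\nabla^{(2)}\phi_{i,\hat\theta}(\hat m_t)-\nabla^{(2)}\phi_{i,\theta}(m_t)$ therefore carry a genuinely small linear factor $|\hat m_t-m_t|$ (resp.\ $|\hat\theta-\theta|$) times a polynomial in the small deviations $|\hat\mu_t-\mu_t|,|\hat m_{2,t}-m_{2,t}|$; and by Corollary~\ref{momcons}, uniformly in $t$, $\hat\Sigma_t=O(1)$ in $\L^{k/4}$, $\hat m_t=O(1)$ in $\L^{k}$, $\hat\Sigma_t-\Sigma_t=O(n^{-(1-4/k)})$ in $\L^{1}$, and $\|\hat m_t-m_t\|_p=O(n^{-1/2})$ for $p$ up to $k/2$.

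For \eqref{avantdernier} I would telescope with $A_t:=J_{\phi_{\hat\theta}}(\hat m_t)$, $A_t^0:=J_{\phi_{\theta}}(m_t)$:
$$A_t\hat\Sigma_tA_t'-A_t^0\Sigma_t(A_t^0)'=(A_t-A_t^0)\hat\Sigma_tA_t'+A_t^0\hat\Sigma_t(A_t-A_t^0)'+A_t^0(\hat\Sigma_t-\Sigma_t)(A_t^0)'.$$
The last term has deterministic bounded factors on both sides, hence is $\le C\,\E|\hat\Sigma_t-\Sigma_t|=O(n^{-(1-4/k)})$ in $\L^{1}$ and negligible after averaging over $t$; the $\theta$-increment part of the first two terms is $|\hat\theta-\theta|\,o_{\P}(1)$ times $\tfrac1T\sum_t C|\hat\Sigma_t|=O_{\P}(1)$, also negligible. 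For the remaining piece I would keep $\hat\Sigma_t$ as a single $\L^{k/4}$-bounded random factor: writing $a_t\ge0$ for the bound on a typical summand restricted to $E_n(\epsilon)$, H\"older's inequality and Corollary~\ref{momcons} give $\E[a_t^{q}]=O(n^{-q/2})$ for a suitable $q\in(2/3,1)$ (the relevant H\"older exponents, involving $\L^{k/4}$ for $\hat\Sigma_t$ and $\L^{k/(1+\delta)}$ for $A_t$ against the small linear increment factor, sum below one precisely because $k>4$, after choosing $\delta$ appropriately); then, by the subadditivity $(\sum_t b_t)^q\le\sum_t b_t^q$ and Markov, $\P\big(\tfrac1T\sum_t a_t>\eta\big)\le\eta^{-q}T^{1-q}\tfrac1T\sum_t\E[a_t^{q}]=\eta^{-q}O(n^{\,1-3q/2})\to0$, using $T_n=O(n)$ and $q>2/3$. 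This proves \eqref{avantdernier}; moreover $\Gamma_T=O(1)$ trivially and $\Gamma_T^{-1}=O(1)$ by $\varliminf_n\lambda_-(\Gamma_T)>0$.

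For \eqref{presque} I would argue by subsequences as in Lemma~\ref{convuni}: along any subsequence, extract a further one on which $\Gamma_T$ converges to a positive-definite $\Gamma_\infty$ (using compactness of the set of symmetric matrices with spectrum in $[\eta,M]$); then $\hat\Gamma_T\to\Gamma_\infty$ in probability, the map $A\mapsto A^{-1/2}$ is continuous at $\Gamma_\infty$, so $\hat C_T-C_T\to0$ in probability along that sub-subsequence, whence $\hat C_T-C_T=o_{\P}(1)$. For \eqref{dernier} the same telescoping applies with $\nabla^{(2)}\phi_{i,\cdot}$ in place of $J_{\phi_{\cdot}}$ and traces against $\hat\Sigma_t$ versus $\Sigma_t$: the term $\mathrm{Tr}\big(\nabla^{(2)}\phi_{i,\theta}(m_t)(\hat\Sigma_t-\Sigma_t)\big)$ has a bounded deterministic Hessian and is $O(n^{-(1-4/k)})$ in $\L^1$, and $\mathrm{Tr}\big((\nabla^{(2)}\phi_{i,\hat\theta}(\hat m_t)-\nabla^{(2)}\phi_{i,\theta}(m_t))\hat\Sigma_t\big)$ is handled by the same fractional-moment estimate (the Hessian increment now involving third derivatives of $\phi_{\theta}$, which still grow only polynomially, so that the H\"older exponents again leave room for some $q\in(2/3,1)$ when $k>4$), plus the easy $\theta$-increment term.

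The step I expect to be the main obstacle is exactly this fractional-moment bookkeeping when $k$ is only slightly above $4$: because $\hat\Sigma_t$ contains $\tfrac1n\sum_j X_{j,t}^4$ it lies only in $\L^{k/4}$, so the cross terms pairing it with the polynomially growing Jacobian/Hessian factors and with the increment factors are integrable only to a fractional power, and making the averages over $t$ vanish forces one both to isolate $\hat\Sigma_t-\Sigma_t$ between deterministic bounded matrices (so that term needs only $\E|\hat\Sigma_t-\Sigma_t|\to0$, i.e.\ $k>4$) and to exploit the balanced-array bound $T_n=O(n)$ to absorb the factor $T^{1-q}$ produced by the fractional Markov inequality. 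The remaining ingredients — the reduction to $E_n(\epsilon)$, the Taylor expansions on $\mathcal D_w$, and the continuity of the matrix square root — are routine given Lemmas~\ref{prems}, \ref{moment}, \ref{convuni} and Corollary~\ref{momcons}.
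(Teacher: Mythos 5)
Your overall skeleton is close to the paper's: reduce to the event $E_n(\epsilon)$ via Lemma \ref{prems}, telescope the matrix products, control the pieces with Corollary \ref{momcons}, then pass from $\hat{\Gamma}_T$ to $\hat{C}_T$ (your subsequence/compactness plus continuity-of-$A\mapsto A^{-1/2}$ argument is a legitimate substitute for the paper's Neumann-series and Ando--Hemmen step, and your treatment of the term $A_t^0(\hat\Sigma_t-\Sigma_t)(A_t^0)'$ is exactly the paper's). The genuine gap is in the two telescoping terms where you keep $\hat\Sigma_t$ un-centred and pair it with the mean-value increment of the Jacobian. Your MVT bound gives $\vert J_{\phi_{\hat\theta}}(\hat m_t)-J_{\phi_{\theta}}(m_t)\vert\lesssim \vert\hat m_t-m_t\vert\,(1+\vert\hat\mu_t-\mu_t\vert^{2+\delta}+\cdots)$, a factor whose moment budget is already about $4+\delta$ powers of $X_{1,t}$ (the component $\vert\hat m_{2,t}-m_{2,t}\vert$ costs two, the Hessian growth $x^{2+\delta}$ at the intermediate point costs $2+\delta$ more); multiplying by $\vert\hat\Sigma_t\vert$, which is only bounded in $\L^{k/4}$ (budget $4$), and by the other Jacobian factor, the total exceeds $8$, while only $k$ moments, possibly barely above $4$, are available. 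Hence the H\"older split you describe, with a \emph{single} $q\in(2/3,1)$, cannot have exponents summing below one for those summands: integrability forces roughly $q\le k/(8+\delta)\approx 1/2$ when $k$ is near $4$, contradicting the $q>2/3$ you need to make $\eta^{-q}T^{1-q}O(n^{-q/2})$ vanish. The step is repairable (those high-budget summands carry extra small factors, so they decay faster and tolerate a smaller $q$, chosen term by term), but as written it fails; and in any case your route needs $T_n=O(n)$, whereas the lemma is stated, and proved in the paper, with no condition on $T/n$.

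The paper avoids both problems by never keeping $\hat\Sigma_t$ whole and never using the linear mean-value factor. On $E_n(\epsilon)$ it bounds every increment of the derivatives of $\phi_\theta$ by \emph{fractional} powers, via $\vert\log z-\log z'\vert\le d_{\epsilon,\kappa}\vert z-z'\vert^{\kappa}$ and $\vert 1/z-1/z'\vert\le d_{\epsilon,\kappa}\vert z-z'\vert^{\kappa}$ for $z,z'\ge\epsilon$, so each difference such as $J_{\phi_{\hat\theta}}(\hat m_t)-J_{\phi_\theta}(m_t)$ costs only $O(\kappa)$ moments plus $o_{\P}(1)$ factors $\vert\hat\theta_i-\theta_i\vert$; combined with the full expansion of $\hat A_t\hat\Sigma_t\hat A_t'-A_t\Sigma_tA_t'$ in which every summand contains either such an increment or $\hat\Sigma_t-\Sigma_t$ between bounded deterministic matrices, every entry of $\hat\Gamma_T-\Gamma_T$ (and of $\frac1T\sum_t(\hat G_t-G_t)$) is dominated by averages of products $\prod_i\vert S_{\alpha_i,t}\vert^{\kappa_i}$ of total budget $4+6\kappa\le k$, whose expectations are $o(1)$ uniformly in $t$ by Corollary \ref{momcons}; the conclusion then follows by Markov in $\L^1$, with no fractional Markov inequality and no constraint on $T/n$. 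If you replace your linear MVT increments by these $\vert\cdot\vert^{\kappa}$ bounds and split $\hat\Sigma_t=\Sigma_t+(\hat\Sigma_t-\Sigma_t)$ in your first two telescoping terms, your proof of \eqref{avantdernier} and \eqref{dernier} closes, and your argument for \eqref{presque} can stand as is.
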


\begin{proof}
Under the assumptions of Theorem \ref{NA}, $\phi_{\theta}$ and its Jacobian and Hessian matrices are bounded along the sequence $(m_t)_{t\geq 1}$.

We first prove (\ref{avantdernier}).
From Lemma \ref{prems}, it is only necessary to show the convergence on the set
$$E_n(\epsilon)=\cap_{t=1}^T\left\{\hat{\mu}_t>\epsilon,\hat{\sigma}^2_t>\epsilon\right\}.$$
For such $\epsilon>0$ and any $\kappa\in (0,1)$, there exists $d_{\epsilon,\kappa}>0$ such that for any $z,z'\geq \epsilon$,
$$\left\vert \log z-\log z'\right\vert\leq d_{\epsilon,\kappa}\vert z-z'\vert^{\kappa},\quad \left\vert \frac{1}{z}-\frac{1}{z'}\right\vert\leq d_{\epsilon,\kappa}\vert z-z'\vert^{\kappa}.$$
Let $0<\kappa\leq 1$ to be fixed later. 
Using the previous inequalities, 
one can easily find a constant $c_{\epsilon,\kappa}>0$ 
to satisfy the following inequalities:
\begin{eqnarray*}
\left\vert \hat{\sigma}^2_t-\sigma^2_t\right\vert^{\kappa}&\leq& c_{\epsilon,\kappa}\left[\left\vert \hat{m}_{2,t}-m_{2,t}\right\vert^{\kappa}+\left\vert \hat{\mu}_t-\mu_t\right\vert^{\kappa}+\left\vert \hat{\mu}_t-\mu_t\right\vert^{2\kappa}\right],\\
\left\vert \phi_{1,\hat{\theta}}(\hat{m}_t)-\phi_{1,\theta}(m_t)\right\vert&\leq& c_{\epsilon,\kappa}\left[\left\vert \hat{\sigma}^2_t-\sigma^2_t\right\vert^{\kappa}+\left\vert\hat{\theta}_1-\theta_1\right\vert+\left\vert\hat{\theta}_2-\theta_2\right\vert+{
\left\{\left\vert \hat{\theta}_2-\theta_2\right\vert+1\right\}}\cdot\left\vert\hat{\mu}_t-\mu_t\right\vert^{\kappa}\right],\\
\left\vert \frac{\partial\phi_{1,\hat{\theta}}}{\partial x}(\hat{m}_t)-\frac{\partial\phi_{1,\theta}}{\partial x}(m_t)\right\vert&\leq& c_{\epsilon,\kappa}\left[\left\vert \hat{\sigma}^2_t-\sigma^2_t\right\vert^{\kappa}+\left\vert\hat{\mu}_t-\mu_t\right\vert+\left\vert\hat{\theta}_2-\theta_2\right\vert+\left\vert\hat{\mu}_t-\mu_t \right\vert\cdot\left\vert\hat{\sigma}^2_t-\sigma^2_t\right\vert^{\kappa}\right],\\
\left\vert \frac{\partial\phi_{1,\hat{\theta}}}{\partial y}(\hat{m}_t)-\frac{\partial\phi_{1,\theta}}{\partial y}(m_t)\right\vert&\leq& c_{\epsilon,\kappa}\left\vert \hat{\sigma}^2_t-\sigma^2_t\right\vert^{\kappa},\\
\left\vert \frac{\partial\phi_{2,\hat{\theta}}}{\partial x}(\hat{m}_t)-\frac{\partial\phi_{2,\theta}}{\partial x}(m_t)\right\vert&\leq& c_{\epsilon,\kappa}\left[\left\vert \hat{\mu}_t-\mu_t\right\vert^{\kappa}+\left\vert \phi_{1,\hat{\theta}}(\hat{m}_t)-\phi_{1,\theta}(m_t)\right\vert\cdot\left\vert\hat{\mu}_t-\mu_t\right\vert^{\kappa}\right]\\
&+&c_{\epsilon,\kappa}\left[\left\vert\phi_{1,\hat{\theta}}(\hat{m}_t)-\phi_{1,\theta}(m_t)\right\vert+\left\vert\hat{\mu}_t-\mu_t \right\vert^{\kappa}\cdot\left\vert \frac{\partial\phi_{1,\hat{\theta}}}{\partial x}(\hat{m}_t)-\frac{\partial\phi_{1,\theta}}{\partial x}(m_t)\right\vert\right]\\
&+&c_{\epsilon,\kappa}\left\vert \frac{\partial\phi_{1,\hat{\theta}}}{\partial x}(\hat{m}_t)-\frac{\partial\phi_{1,\theta}}{\partial x}(m_t)\right\vert,\\
\left\vert \frac{\partial\phi_{2,\hat{\theta}}}{\partial y}(\hat{m}_t)-\frac{\partial\phi_{2,\theta}}{\partial y}(m_t)\right\vert&\leq& c_{\epsilon,\kappa}\left[\left\vert \hat{\mu}_t-\mu_t\right\vert^{\kappa}+
\left\vert \hat{\mu}_t-\mu_t\right\vert^{\kappa}\cdot \left\vert \frac{\partial\phi_{1,\hat{\theta}}}{\partial y}(\hat{m}_t)-\frac{\partial\phi_{1,\theta}}{\partial y}(m_t)\right\vert\right]\\
&+& c_{\epsilon,\kappa}\left\vert \frac{\partial\phi_{1,\hat{\theta}}}{\partial y}(\hat{m}_t)-\frac{\partial\phi_{1,\theta}}{\partial y}(m_t)\right\vert.
\end{eqnarray*}
Moreover, there exists a constant $c>0$ such that
\begin{eqnarray*}
\vert \hat{\Sigma}_t(1,2)-\Sigma_t(1,2)\vert&\leq &c[\vert \hat{m}_{3,t}-m_{3,t}\vert+\vert\hat{m}_{1,t}-m_{1,t}\vert\cdot\vert \hat{m}_{2,t}-m_{2,t}\vert+\vert\hat{m}_{1,t}-m_{1,t}\vert+\vert\hat{m}_{2,t}-m_{2,t}\vert],\\
\vert \hat{\Sigma}_t(2,2)-\Sigma_t(2,2)\vert&\leq& c[\vert \hat{m}_{4,t}-m_{4,t}\vert+\vert\hat{m}_{2,t}-m_{2,t}\vert^2+\vert \hat{m}_{2,t}-m_{2,t}\vert].
\end{eqnarray*}
Next, setting $\hat{A}_t=J_{\phi_{\hat{\theta}}}(\hat{m}_t)$ and $A_t=J_{\phi_{\theta}}(m_t)$, we use the decomposition
\begin{eqnarray*}
\hat{A}_t\hat{\Sigma}\hat{A}_t-A_t\Sigma_tA_t&=&A_t\Sigma_t(\hat{A}_t-A_t)+A_t(\hat{\Sigma}_t-\Sigma_t)A_t+A_t(\hat{\Sigma}_t-\Sigma_t)(\hat{A}_t-A_t)\\
&+& (\hat{A}_t-A_t)\Sigma_t A_t+(\hat{A}_t-A_t)\Sigma_t(\hat{A}_t-A_t)+(\hat{A}_t-A_t)(\hat{\Sigma}_t-\Sigma_t)A_t\\
&+&(\hat{A}_t-A_t)(\hat{\Sigma}_t-\Sigma_t)(\hat{A}_t-A_t).
\end{eqnarray*}
We deduce from the previous bounds that the entries of the difference {
$\hat{\Gamma}_T-\Gamma_T$}
can be decomposed as a sum of terms, each of which is bounded 
up to a constant possibly depending on $\vert\hat{\theta}_i-\theta_i\vert$, $i=1,2$,  
by a partial sum of the form $\frac{1}{T}\sum_{t=1}^T\prod_{i=1}^d \vert S_{\alpha_i,t}\vert^{\kappa_i}$
for some integer $d\geq 1$ and for $i=1,\ldots,4$, $\kappa_i$ is a positive real number, $\alpha_i\in\{1,\ldots,4\}$ with $\sum_{i=1}^d\alpha_i\kappa_i\leq 4+6\kappa$. Choosing $\kappa\in (0,1)$ such that 
$4+6\kappa\leq {
k}$, one can apply Corollary \ref{momcons} and use the weak consistency of $\hat{\theta}$ to conclude that $\hat{\Gamma}_T-\Gamma_T=o_{\P}(1)$.

We now prove (\ref{presque}). First we show that 
\begin{equation}\label{mimic}
\hat{\Gamma}_T^{-1}-\Gamma_T^{-1}=o_{\P}(1)
\end{equation}
using the decomposition 
$$\hat{\Gamma}_T^{-1}-\Gamma_T^{-1}=\hat{\Gamma}_T^{-1}(\Gamma_T-\hat{\Gamma}_T)\Gamma_T^{-1}.$$
From our assumptions, there exists $K>0$ such that for $n$ large enough, $\vert\Gamma_T^{-1}\vert\leq K$. 
It is then only necessary to show that 
\begin{equation}\label{mimic2}
\hat{\Gamma}_T^{-1}=O_{\P}(1).
\end{equation}
Let $0<\eta<1/K$ and set $M:=
{K}/(1-\eta K)$. We have 
$$\P(\vert \hat{\Gamma}_T^{-1}\vert>M)\leq \P\left(\displaystyle{
\sup}_{\vert H\vert\leq \eta}\vert (\Gamma_T-H)^{-1}\vert>M\right)+\P(\vert\hat{\Gamma}_T-\Gamma_T\vert>\eta).$$
The second probability on the right side goes to $0$ as $n\rightarrow \infty$ and the first one is 0 because 
$$(\Gamma_T-H)^{-1}=\sum_{k\geq 0}\left(\Gamma_T^{-1}H\right)^k\Gamma_T^{-1}$$
has a norm smaller than $M$. This shows (\ref{mimic2}) and then (\ref{mimic}).

Next, using the Ando-Hemmen inequality \citep[Proposition  3.2]{Ando},
 we have for any unitarily invariant norm $\vert\cdot\vert$,
$$\left\vert \hat{\Gamma}_T^{-1/2}-\Gamma_T^{-1/2}\right\vert\leq \frac{\left\vert \hat{\Gamma}_T^{-1}-\Gamma_T^{-1}\right\vert}{\sqrt{\lambda_{-}\left(\hat{\Gamma}_T^{-1}\right)}+\sqrt{\lambda_{-}\left(\Gamma_T^{-1}\right)}}\leq \frac{\left\vert \hat{\Gamma}_T^{-1}-\Gamma_T^{-1}\right\vert}{\sqrt{\lambda_{-}\left(\Gamma_T^{-1}\right)}},$$
where $\lambda_{-}(B)$ denotes the smallest eigenvalue of a symmetric matrix $B$. Of course, the previous inequality is valid for an arbitrary norm $\vert\cdot\vert$, up to a constant factor, by equivalence of the norms 
{
in a finite dimensional space}.
If we denote by $\lambda_{+}(B)$ the largest eigenvalue of a matrix $B$, we have
$$\varliminf\lambda_{-}\left(\Gamma_T^{-1}\right)=1/\varlimsup\lambda_{+}\left(\Gamma_T\right)>0$$
and $\hat{C}_T-C_T=o_{\P}(1)$. 

To prove (\ref{dernier}),
we proceed as for the proof of (\ref{avantdernier}).
We first bound the differences 
$$\left\vert \frac{\partial^2\phi_{\ell,\hat{\theta}}}{\partial x^i\partial y^j}\left(\hat{m}_t\right)-\frac{\partial^2\phi_{\ell,\theta}}{\partial x^i\partial y^j}\left(m_t\right)\right\vert, \quad\ell=1,2,\quad i,j\in \{0,1,2\} \text{ such that }i+j=2.$$
We find that each difference can be bounded by a sum of terms of the form $$C_1\vert \hat{\theta}_i-\theta_i\vert, \qquad \mbox{or} \qquad
C_2\prod_{k=1}^d \vert S_{\alpha_k,t}\vert^{\kappa_k}$$ 
with $C_1,C_2>0$, $\alpha_k\in \{1,2\}$, $\kappa_k\geq 0$ and $\sum_{k=1}^d\alpha_k\kappa_k$ 
bounded by $2+2\kappa$ if $(\ell,i,j)=(1,2,0)$, $1+2\kappa$ if $(\ell,i,j)=(1,1,1)$, $2\kappa$ if $(\ell,i,j)=(1,0,2)$, $2+3\kappa$ if $(\ell,i,j)=(2,2,0)$, 
$1+3\kappa$ for $(\ell,i,j)=(2,1,1)$ and $3\kappa$ if $(\ell,i,j)=(2,0,2)$. 
It is then easily seen that the differences
$\frac{1}{T}\sum_{t=1}^T\left(\hat{G}_t-G_t\right)$ only involve some factors of the previous type with $\sum_{k=1}^d\alpha_k\kappa_k\leq 4+3\kappa$ where $\kappa>0$ can be arbitrarily small. We then get the result from Corollary \ref{momcons}, choosing $\kappa$ small enough such that $4+3\kappa\leq {
k}$.
\end{proof}

\section{Acknowledgements}

This work was funded by  CY-AS
("Investissements d'Avenir" ANR-16-IDEX-0008), "EcoDep" PSI-AAP2020-0000000013.
P.D.  was also funded by the chair FIME: {\tt https://fime-lab.org/}.
J.E.C. thanks Roseanne Benjamin for her help during this work.

\bibliographystyle{plainnat}
\bibliography{bibTaylor20230921}

\end{document}